%% file: main.tex
\documentclass[11pt]{article}
\usepackage{graphicx}
\usepackage{amsmath}
\usepackage{amsfonts,bbm}
\usepackage{amssymb}
\usepackage{amsthm}
\usepackage{enumerate,nicefrac,relsize}
\usepackage{fullpage}
\usepackage{hyphenat,epsfig,subcaption,multirow}
\usepackage{nicefrac}
\usepackage{paralist}
\usepackage{pifont}
\usepackage{xargs}
\usepackage{dsfont}
\usepackage[dvipsnames]{xcolor}
\usepackage{tikz}
\usetikzlibrary{shapes.geometric, fit, arrows.meta}
\usepackage{url}

\usepackage[ruled,vlined,linesnumbered]{algorithm2e}



\DeclareFontFamily{U}{mathx}{\hyphenchar\font45}
\DeclareFontShape{U}{mathx}{m}{n}{
      <5> <6> <7> <8> <9> <10>
      <10.95> <12> <14.4> <17.28> <20.74> <24.88>
      mathx10
      }{}
\DeclareSymbolFont{mathx}{U}{mathx}{m}{n}
\DeclareMathSymbol{\bigtimes}{1}{mathx}{"91}

\definecolor{DarkRed}{rgb}{0.5,0.1,0.1}
\definecolor{DarkBlue}{rgb}{0.1,0.1,0.5}

\usepackage[linktocpage=true,pagebackref=true,colorlinks,
  linkcolor=DarkRed,citecolor=ForestGreen,
  bookmarks,bookmarksopen,bookmarksnumbered]{hyperref}
\usepackage[nameinlink,noabbrev]{cleveref}

\usepackage{comment}
\usepackage{tcolorbox}
\usepackage{algorithmic}

\tcbuselibrary{skins,breakable}
\tcbset{enhanced jigsaw}

\usepackage[normalem]{ulem}
\usepackage[compact]{titlesec}

\usepackage{nameref}
\definecolor{ForestGreen}{rgb}{0.1333,0.5451,0.1333}

\usepackage{mdframed}
\usepackage{cite}
\usepackage{enumitem}

\usepackage[margin=1in]{geometry}

\definecolor{Red}{rgb}{0.9,0,0}
\crefname{property}{property}{property}
\crefname{equation}{eq.}{eqs.}

\usepackage{bm}
\usepackage{url}
\usepackage{xspace}
\usepackage[mathscr]{euscript}

\newtheorem{theorem}{Theorem}
\newtheorem{lemma}{Lemma}[section]
\newtheorem{proposition}[lemma]{Proposition}
\newtheorem{corollary}[lemma]{Corollary}

\newtheorem{fact}[lemma]{Fact}

\newtheorem{definition}[lemma]{Definition}

\newtheorem*{claim*}{Claim}
\newtheorem*{theorem*}{Theorem}
\newtheorem*{proposition*}{Proposition}
\newtheorem*{lemma*}{Lemma}
\newtheorem*{problem*}{Problem}

\crefname{lemma}{Lemma}{Lemmas}
\crefname{claim}{Claim}{Claims}
\crefname{fact}{Fact}{Facts}
\Crefname{fact}{Fact}{Facts}

\newtheorem*{mdresult}{Result}
\newenvironment{result}{\begin{mdframed}[backgroundcolor=lightgray!40,topline=false,rightline=false,leftline=false,bottomline=false,innertopmargin=2pt, innerleftmargin=10pt]\begin{mdresult}}{\end{mdresult}\end{mdframed}}

\newtheorem{remark}[lemma]{Remark}
\newtheorem*{remark*}{Remark}

\newtheoremstyle{restate}{}{}{\itshape}{}{\bfseries}{~(restated).}{.5em}{\thmnote{#3}}
\theoremstyle{restate}

\theoremstyle{definition}
\newtheorem{mdalg}{algorithm}

\newtheorem{mddist}{Distribution}

\allowdisplaybreaks

\renewcommand{\qed}{\nobreak \ifvmode \relax \else
      \ifdim\lastskip<1.5em \hskip-\lastskip
      \hskip1.5em plus0em minus0.5em \fi \nobreak
      \vrule height0.75em width0.5em depth0.25em\fi}

\input{macros}

\title{Degree Stability of Parallel Greedy Matching in Random Graphs}

\author{Vishvajeet N\footnote{(nvishvajeet@gmail.com) CNRS, LaBRI, Universit\'{e} de Bordeaux, France.} \and Shrinidhi Teganahally Sridhara\footnote{(shrinidhi.teganahally-sridhara@u-bordeaux.fr) CNRS, LaBRI, Universit\'{e} de Bordeaux, France.}}

\date{}
\begin{document}

\maketitle

\pagenumbering{roman}

\input{abstract}

\clearpage

\setcounter{tocdepth}{3}
\tableofcontents

\clearpage

\pagenumbering{arabic}
\setcounter{page}{1}

\input{introduction}

\input{algorithm}

\input{notation}

\input{warmup}

\input{variance-bound}

\input{linear-rounds}
\input{geometric}

\bibliography{bibliography}

\appendix 

\include{appendix}

\end{document}

%% file: macros.tex
\definecolor{TodoBg}{HTML}{FFF4CC}
\definecolor{TodoFrame}{HTML}{E09F3E}
\definecolor{TodoTitle}{HTML}{9A3412}

\newcommand{\algrule}[1][.2pt]{\par\vskip.5\baselineskip\hrule height #1\par\vskip.5\baselineskip}

\newcommand{\abs}[1]{\left\vert#1\right\vert}

\newcommand{\calF}{\mathcal{F}}
\newcommand{\calH}{\mathcal{H}}
\newcommand{\R}{\mathbb{R}}

\newcommand{\E}{\mathbb{E}}
\newcommand{\Var}{\mathrm{Var}}
\newcommand{\gnp}{\mathbb{G}(n,p)}
\newcommand{\ereg}{\mathcal E_{\mathrm{reg}}}
\newcommand{\estr}{\mathcal E_{\mathrm{struct}}}

\newcommand{\ejump}{\mathcal E_{\mathrm{jump}}}

\newcommand{\bstab}{$\beta$-stability}
\newcommand{\eps}{\epsilon}
\newcommand{\crtcl}{\widetilde{v}}
\newcommand{\diamd}[3]{\xi_{#1}(#2,#3)}
\newcommand{\dtv}{\mathcal D_t(v)}
\newcommand{\mat}[3]{A_{#1}^{#2}(#3)}
\newcommand{\nmat}[3]{B_{#1}^{\neg #2}(#3)}
\NewDocumentCommand{\fil}{m m m O{F}}{%
\mathcal{#4}_{#1}^{#2}%
\if\relax\detokenize{#3}\relax
\else
(#3)%
\fi
}
\newcommand{\filseq}[5]{\Bigl(\mathcal{F}_{#1}^{#2}
\if\relax\detokenize{#3}\relax
\else
(#3)%
\fi
\Bigr)
\if\relax\detokenize{#4}\relax
\else
_{#2 = 0}^{#4}%
\fi
\if\relax\detokenize{#5}\relax
\else
_{#2 \in #5}%
\fi}
\newcommand{\Zmart}[3]{%
\mathbf{Z}_{#1}^{#2}%
\if\relax\detokenize{#3}\relax
\else
(#3)%
\fi
}
\newcommand{\Zmartseq}[5]{\Bigl(\mathbf{Z}_{#1}^{#2}
\if\relax\detokenize{#3}\relax
\else
(#3)%
\fi
\Bigr)
\if\relax\detokenize{#4}\relax
\else
_{#2 = 0}^{#4}%
\fi
\if\relax\detokenize{#5}\relax
\else
_{#2 \in #5}%
\fi}
\newcommand{\Hseq}[2]{\Bigl(\mathcal{H}_{#1} \Bigr)_{#1 \in #2}}
\NewDocumentCommand{\Mart}{m m m O{M}}{%
\mathbf{#4}_{#1}%
\if\relax\detokenize{#2}\relax
\else
^{#2}%
\fi
\if\relax\detokenize{#3}\relax
\else
(#3)%
\fi
}
\newcommand{\tildeMart}[2]{\widetilde{\mathbf{M}}_{#1}(#2)}
\newcommand{\Martseq}[3]{\Bigl(\mathbf{M}_{#1}
\if\relax\detokenize{#2}\relax
\else
(#2)%
\fi\Bigr)_{#1 \in #3}}
\newcommand{\tildeMartseq}[3]{\Bigl(\widetilde{\mathbf{M}}_{#1}(#2) \Bigr)_{#1 \in #3}}
\newcommand{\N}{\mathbb{N}_0}
\newcommand{\erenyi}{Erd\H{o}s--R\'enyi}

%% file: abstract.tex
\maketitle

\begin{abstract}
We study a natural \emph{synchronous} greedy matching process in Erd\H{o}s--R\'enyi random graphs $\mathbb{G}(n,p)$ and give a rigorous analysis of this parallel stochastic dynamics. 
The process evolves as follows: in each round, every unmatched vertex proposes to a uniformly random neighbour, and mutually proposing pairs are matched and removed from the graph, after which the process repeats on the residual graph. We consider the density regime  $p=\Theta(\log^{1+\epsilon} n/n)$ for any constant $\epsilon>0$, which lies just above the threshold for the existence of a perfect matching. We show that, with high probability, for $\Theta(\log^{1+\epsilon} n)$ rounds, every unmatched vertex keeps a constant fraction of its initial degree in the residual graph, simultaneously for all vertices and all rounds. By then, the process has matched a constant fraction of the vertices.

\medskip

Our analysis tracks fine-grained invariants that control the evolution of individual vertex degrees, rather than directly tracking the global degree distribution. At the local level, we analyze the per-round degree loss of a vertex via a neighbourhood-exposure martingale. We show that  structural sparsity of $\mathbb{G}(n,p)$ yields $O(1)$ variance and thus the per-round degree loss is strongly concentrated around its conditional expectation. At the global level, these local concentration bounds control the increments of a martingale tracking cumulative degree loss over many rounds, allowing us to show that vertex degrees remain stable for $\Omega(\log^{1+\epsilon} n)$ rounds.
\end{abstract}

%% file: introduction.tex

\section{Introduction}\label{sec:intro}

Matchings are one of the most fundamental structures in graph theory and combinatorial optimization, and the problem of finding large matchings has been extensively studied across a range of models, in both worst-case and random graph settings. These two perspectives lead to fundamentally different algorithmic and analytical challenges.

\smallskip

Despite decades of work following Edmonds' blossom algorithm~\cite{Edmonds65}, no algorithm asymptotically faster than the Micali--Vazirani algorithm~\cite{MicaliV80,Vazirani24} is known for maximum matching in general graphs.
Among the simplest approaches for matchings are greedy algorithms, which iteratively select edges according to simple local rules. While they have poor worst-case guarantees, greedy algorithms often perform remarkably well on random inputs and frequently serve as primitives in large-scale
and distributed systems.

\smallskip

The seminal work of Karp and Sipser~\cite{KarpS81} initiated the study of sequential greedy processes for matchings on random graphs. A large body of subsequent work shows that these processes produce matchings close to maximum with high probability and provides a detailed understanding of their output distributions (see, e.g.,~\cite{AronsonFP98,Kreacic17,GlasgowKSS25} and references therein).
These analyses are primarily carried out using the differential equation method~\cite{Wormald95}, which tracks the gradual evolution of the degree distribution of the graph when edges are selected one at a time.

\smallskip

In many modern algorithmic settings, greedy algorithms are executed in parallel without centralized coordination, allowing many vertices to act simultaneously and significantly accelerate the process. At the same time, parallelism complicates the analysis: in a single round, many vertices are
matched at once, which can remove a linear number of edges, and the decisions made in that
round are correlated through overlapping neighbourhoods. An analysis therefore has to control
many simultaneous, dependent changes, rather than one edge at a time.

\smallskip

We consider a simple synchronous process for constructing matchings. In each round, every unmatched vertex proposes to a uniformly random neighbour, and pairs of vertices that mutually propose to each other are matched and removed from the graph. The process is then repeated on the residual graph. 

\smallskip

We analyze these dynamics on  \erenyi~random graphs $G\sim\gnp$ in the regime
$p = \Theta\!\left(\frac{\log^{1+\epsilon} n}{n}\right)$
for any constant $\eps>0$, which lies just above the threshold for the existence of perfect matchings.

\begin{result}[Informal]
Fix any constant $\epsilon>0$ and consider  $G \sim \gnp$ with $p=\Theta(\log^{1+\epsilon} n/n)$. During an execution of the parallel greedy process for $\Theta(\log^{1+\epsilon} n)$ rounds, with high probability, every unmatched vertex retains at least a constant fraction of its initial degree, simultaneously for all vertices and for all rounds.
Consequently, the process matches a linear number of vertices during these rounds.

\end{result}

Formally, the robustness property is captured in our main \Cref{thm:persistence}: although vertices are removed as matches occur, every remaining vertex retains many available edges.
This prevents the algorithm from quickly ``running out of edges'' and allows it to make steady progress over the analyzed rounds, as is formalized in \Cref{thm:linear-rounds-matching}.

Our analysis tracks the local evolution of vertex degrees rather than the global degree distribution. The key idea is a local-to-global martingale argument: we first obtain strong control on the degree loss of a vertex within each round, and then use this to bound the cumulative deviation of degree loss over many rounds. This prevents local fluctuations from building up, ensuring that degrees remain stable throughout the process for $\Omega(\log^{1+\epsilon} n)$ rounds.

\medskip
\subsection{Proof Overview}

We begin with an overview of the main ideas behind the proof. At a high level, the analysis proceeds from local to global: we first obtain strong control over the degree loss of individual vertices within a single round, which yields bounded increments for a martingale tracking cumulative degree loss, and then use this to control the cumulative deviation over many rounds.

\smallskip

Recall our matching process: on a random graph $G \sim \gnp$, in each round every unmatched
vertex proposes to a uniformly random neighbour, and two vertices are
matched if and only if they propose to each other.
We emphasize that the randomness governing the proposal choices is independent across vertices and across rounds.
This independence plays a crucial role in the analysis by partially
decorrelating the local interactions created by the simultaneous
proposals.

\paragraph{A guiding invariant for large matchings: degree-positivity.}

We begin with a simple observation that guides our entire analysis:
the growth of the matching can be reduced to maintaining a local
structural invariant.
We observe that if every unmatched vertex has available at least one unmatched neighbour at the
start of each round -- a property we call 
\emph{degree-positivity invariant} -- then every vertex has a
non-trivial probability of becoming matched in that round. In particular,  suppose degree-positivity holds, then we have for any vertex $v$,
\begin{align*}
	\label{eq:match}
	\Pr\!\left(v \text{ is matched} \right)
	=
	\sum_{u\in N(v)}
	\frac{1}{\deg (v)}\cdot 	\frac{1}{\deg (u)},
\end{align*}
where $\deg(\cdot)$ denotes the degree of a vertex and $N(v)$ denotes the set of neighbours of $v$.

\smallskip

Indeed, conditioning on the current residual graph, if the maximum
degree of the input graph is at most $\Delta$, then every unmatched vertex becomes matched
with probability at least $1/\Delta$. Thus, in each round, every unmatched vertex has a large probability of being removed. It follows that the probability that a fixed vertex remains unmatched for $T$ rounds decays exponentially in $T$, as long as degree-positivity is maintained. This connection is formalized in \Cref{lem:geometric-decay}.

\smallskip

Random graphs $\gnp$ with $p = c\log^{1+\epsilon}n/n$, for some constant $c>0$ and any constant $\epsilon>0$, have tightly concentrated degrees, so that $\Delta=\Theta(d_0)$, where $d_0=\log^{1+\epsilon}n$ denotes the order of the expected degree. 

Our goal is therefore to understand how long the degree-positivity
invariant can persist during the execution of the process:

\bigskip
\centerline{\emph{For how many rounds can the degree-positivity invariant be
maintained with high probability?}}
\medskip
In fact, we prove more: during these rounds every unmatched vertex keeps a constant fraction of
its initial degree. This stronger property, which we call \bstab, is what our analysis carries
from one round to the next.

\paragraph{How could the invariant fail?}

We begin by examining how the invariant could fail.
Consider an unmatched vertex $v$ at some round $t$.
The only way $v$ can violate the invariant is if it loses all of its
neighbours before it becomes matched itself.
This happens when all the neighbours of $v$ become matched elsewhere during
the process.

\smallskip

At first sight it seems possible that a vertex $v$ could lose all of its
neighbours before being matched. However, if the events that
different neighbours of $v$ become matched were independent, then the
probability that all of them disappear before $v$ itself is matched
would be extremely small.

\smallskip

The difficulty is that these events are not independent: matching
decisions depend on overlapping local neighbourhoods. Consequently,
correlations between the matching events of neighbours of $v$ could in
principle allow neighbour losses to accumulate over many rounds and
eventually isolate $v$.

\paragraph{Understanding the first round.}

To understand this phenomenon of isolation without being matched, we
begin with a warm-up analysis of the first round.

For a fixed vertex $v$, let
\[
X_1(v) := |\{u\in N(v): u \text{ is matched in round }1\}|
\]
denote the number of neighbours of $v$ that become matched in the first
round.

A simple calculation shows that
\[
\mathbb{E}[X_1(v)] \le \frac{\Delta_0}{\delta_0},
\]
where $\Delta_0$ and $\delta_0$ denote the maximum and minimum degrees
of the graph.

As the degrees in $\gnp$ are tightly concentrated, this ratio is a
constant, implying that the expected number of neighbours lost by $v$
in a round is small compared to its degree.

\smallskip

We next ask whether $X_1(v)$ is tightly concentrated around its mean.
To show this, we expose the proposal choices of the vertices in the neighbourhood of $v$ sequentially and analyze
their effect on $X_1(v)$ using a martingale.

\smallskip

Let $N(v) = \{u_1, \dots, u_d\}$ denote the neighbours of $v$ in the initial graph.
We reveal the random proposals $\pi(u_1), \dots, \pi(u_d)$ one by one, where $\pi(u)$ denotes the neighbour to which vertex $u$ proposes in the round.
Let
$\fil{1}{k}{v}$ denote the information revealed after exposing the first $k$
proposals. We then consider the Doob martingale
\[
\Zmart{1}{k}{v} := \mathbb{E}[X_1(v)\mid \fil{1}{k}{v}], \qquad k=0,1,\dots,d,
\]
which we refer to as the \emph{neighbourhood-exposure martingale}.

\smallskip

Revealing the proposal of a single neighbour $u_k$ can only affect matching
decisions involving $u_k$. Since each vertex can be part of at most one matched
edge, changing $\pi(u_k)$ can create or destroy at most one matched edge incident
to $u_k$. If this edge connects $u_k$ to another neighbour of $v$, then the matched
status of both endpoints may change, and hence $X_1(v)$ can change by at most $2$.

\smallskip

Thus, the martingale $\Zmartseq{1}{k}{v}{}{\N}$ satisfies the bounded difference property,
$|\Zmart{1}{k}{v} - \Zmart{1}{k-1}{v}| \le 2$ for $k \in \N$. Since $X_1(v)$ depends on $d = \Theta(d_0)$ independent
choices, applying Azuma--Hoeffding inequality yields
\[
\Pr(|X_1(v)-\mathbb{E}[X_1(v)]|\ge \lambda)
\le
\exp\!\left(-\Theta\!\left(\frac{\lambda^2}{d_0}\right)\right).
\]
In particular, deviations of order $O(\sqrt{d_0\log n})$ already occur
with inverse polynomially small probability:
\[
\Pr\!\left(
|X_1(v)-\mathbb{E}[X_1(v)]| \ge c\sqrt{d_0\log n}
\right)
\le O(n^{-10}) .
\]
Thus, by a union bound, we have the guarantee that with high probability no vertex
loses more than $O(\sqrt{d_0\log n})$ neighbours in a single round.
However, these deviations are still too large to allow this argument to
be applied iteratively over many rounds. Even though the expected
degree loss in each round is small, 
fluctuations of order
$O(\sqrt{d_0\log n})$ could accumulate across rounds and eventually
cause a vertex to lose a large fraction of its neighbours. As a result,
this approach only provides sufficient control to maintain the
invariant for $O\left(\sqrt{\frac{d_0}{\log{n}}}~\right)=
O (\log^{\epsilon/2}n)
$ rounds. 

\paragraph{Strong local concentration using the structure of $\gnp$.}
Degree positivity cannot be carried from one round to the next on its own. Whether $v$ keeps a
neighbour in round $t$ depends on how fast its neighbours are matched, and the probability that a
neighbour $u$ is matched depends on the degrees of the neighbours of $u$, about which degree
positivity says nothing. We therefore work under the stronger invariant we call \bstab~(\Cref{def:stability-invariant}): for a constant $\beta>0$, $every$ unmatched vertex keeps at least
$\beta d_0$ neighbours in $every$ round considered,
\[
\deg_t(v)\ge\beta d_0\qquad\text{for all } v\in U_t .
\]
Under this invariant each neighbour of $v$ is matched with probability $\Theta(1/d_0)$ in a round, so the
expected degree loss satisfies $\E[X_t(v)]\leq \Theta(1)$, and degree positivity holds automatically.

To analyze the degree loss under this invariant, we fix a vertex $v$ that is unmatched at the start of round $t$, and write 
\[
X_t(v)=\sum_{u\in N_{t-1}(v)} I_t(u).
\]
Here $I_t(u)$ is the indicator of the event that vertex $u$ becomes matched in round $t$ and $N_{t-1}$ is the set of neighbours of $v$ in the residual graph at the start of round $t$.

\smallskip

While the variables $\{I_t(u)\}$ are not independent, their dependencies are
extremely limited for random graphs in our regime.
In particular, overlaps between the neighbourhoods of vertices in
$N_{t-1}(v)$ are rare, and even when they occur, they involve only a very
small number of vertices.

\smallskip

We show that the dependency graph among the indicators $\{I_t(u)\}$
has maximum degree at most $1$, and is therefore a matching. This allows us to decompose the neighbourhood of $v$ into disjoint blocks of size at most $2$ that account for the dependencies.

\smallskip

More precisely, this induces a partition of $N_{t-1}(v)$ into disjoint blocks
$B_1,\dots,B_m$, and we can write
\[
X_t(v) = \sum_{i=1}^m Y_i,
\]
where each $Y_i$ is the sum of at most two dependent indicators.

\smallskip

To analyze this sum, we first condition on the proposal $\pi_t(v)$, where $\pi_t(v)$ denotes the neighbour to which vertex $v$ proposes in round $t$. This conditioning removes the only global source of dependence among the variables.
We then expose the blocks sequentially and consider the corresponding
Doob martingale. We show that each increment is bounded by a constant and that, under the
$\beta$-stability invariant, each block contributes only $O(1/d_0)$ to
the conditional variance. Summing over all blocks then yields that the
total conditional variance in a round is $O(1)$

\smallskip

Applying Freedman's inequality (see \Cref{fact:freedman} for a precise statement) to this martingale, together with the
bounded increments and $O(1)$ quadratic variation, yields a strong
concentration bound (we formalize this in \Cref{lem:bernstein-inround}):
\[
\Pr(|X_t(v)-\mathbb{E}[X_t(v)]|\ge \lambda )
\le
\exp\!\left(-\Theta\!\left(\frac{\lambda^2}{1+\lambda}\right)\right).
\]

In particular, deviations of order $O(\log n)$ already occur with
polynomially small probability:
\[
\Pr\!\left(
|X_t(v)-\mathbb{E}[X_t(v)]|
\ge c\log n
\right)
\le
O(n^{-10}).
\]
Compared to the Azuma--Hoeffding bound obtained previously, this improves the deviation scale
from $O(\sqrt{d_0\log n})$ to $O(\log n)$ and yields much stronger tail probabilities.

We now show how these local deviation bounds can be combined across
many rounds to control the global evolution of vertex degrees throughout the
process.

\paragraph{Controlling degree-loss over many rounds by cumulative-deviation martingales.}

Our local Freedman-based bounds show that the degree loss in a single round is tightly concentrated around its expectation. However, they do not by themselves rule out the possibility that small deviations accumulate over many rounds. In particular, even with these stronger bounds, a na\"{\i}ve union bound over rounds only ensures that the invariant persists for $O(d_0/\log n)=O(\log^\epsilon n)$ rounds, since each round may incur a loss of up to $O(\log n)$ neighbours. 

To overcome this limitation, we track how these deviations accumulate over time via a cumulative-deviation martingale. Fix a vertex $v$, and define
\[
\Mart{0}{}{v}=0,
\qquad
\Mart{t}{}{v}=\Mart{t-1}{}{v}+
\Bigl(
X_t(v)-\mathbb{E}[X_t(v)\mid\mathcal H_{t-1}]
\Bigr).
\]

Here $\mathcal H_{t-1}$ denotes the history of the process up to the start
of round $t$. By construction $\Martseq{t}{v}{\N}$ is a martingale with respect to the filtration $\big(\mathcal{H}_t\big)_{t\in \N}$.
A na\"{i}ve attempt to control this martingale would bound the increments
$X_t(v)$ deterministically.
Since a vertex may have degree as large as $O(d_0)$, this would yield
a worst-case bound of $O(d_0)$ on the martingale increments.
Such coarse bounds are too weak to control the cumulative deviation
over  $\Theta(d_0)$ rounds.

\smallskip

Our key idea is instead to use the stronger probabilistic control
obtained from the local concentration bounds established earlier.
Those results show that with extremely high probability the
per-round degree loss satisfies
\[
X_t(v)=O(\log n).
\]
Thus, rather than treating the martingale increments as potentially as
large as $d_0$, we may treat them as typically logarithmic.

\smallskip

More precisely, we proceed via a two-level analysis. \emph{First}, the
Freedman-based bounds obtained above provide strong control over the
\emph{local} degree loss in each round, showing that the increments of
the cumulative-deviation martingale are typically only $O(\log n)$.
\emph{Second}, we control the accumulation of these local deviations
across rounds via the martingale $\Mart{t}{}{v}$.

\smallskip

Our local concentration bounds ensure that the martingale increments are
$O(\log n)$ with high probability. Crucially, although the increments can be as large as $O(\log n)$, such large deviations occur with very small probability. Under the $\beta$-stability invariant, the conditional variance of each increment is $O(1)$, so over $T=\Theta(d_0)$ rounds the predictable quadratic variation is $O(d_0)$. Thus, Freedman's inequality applies in a regime where the total variance is small, yielding global control on the cumulative deviation $\Mart{T}{}{v}$ and showing that the cumulative deviation is $O(d_0)$ rather than $O(d_0 \log n)$. The logarithmic bound on the increments is essential here: using only the trivial bound $O(d_0)$ would yield a constant-probability deviation bound from Freedman's inequality, which is insufficient for our purposes, whereas the sharper $O(\log n)$ bound ensures exponentially small tail probabilities.

\smallskip

To make this argument rigorous, we use a stopping-time construction. We
define a stopping time corresponding to the first round in which some
vertex violates the $\beta$-stability invariant. Up to this time, the
degree of every vertex remains bounded below by a constant fraction of
its initial value, ensuring that the conditions required for the above
concentration and martingale bounds continue to hold.

\smallskip

Applying Freedman's inequality up to this stopping time shows that,
with high probability, it occurs only after $\Omega(d_0)$ rounds.
Consequently, the $\beta$-stability invariant holds for $\Omega(d_0)$
rounds (formalized in \Cref{thm:persistence}).

\smallskip

Since $\beta$-stability implies the degree-positivity invariant holds, the
geometric decay argument then guarantees that the algorithm produces a
matching of size $\Theta(n)$ (this is formalized in \Cref{thm:linear-rounds-matching}). 

Our work studies the greedy parallel dynamics at our chosen density $p=\frac{\log ^{1+\epsilon}n}{n}$, which is just above the the perfect matching threshold $p=\frac{\log n + O(\log\log n)}{n}$. This leaves the general question open:

\medskip
\centerline{\emph{What is the behaviour of the parallel greedy process for other densities $p$?}}
\medskip

Additionally, we are interested in the `end-game', i.e. to understand the matching process and its trajectory after it has matched a constant fraction of the vertices, at all densities $p$.

\medskip
\subsection{Related Work}\label{sec:related-work}
The design of efficient matching algorithms is a central topic in combinatorial optimization. Following Edmonds' landmark blossom algorithm \cite{Edmonds65} and the $O(\sqrt{|V|}\cdot |E|)$ algorithm of Micali and Vazirani~\cite{MicaliV80} (whose correctness was only recently formalized in \cite{Vazirani24}), substantial research has focused on improving worst-case performance. Despite decades of work, no asymptotically faster algorithm than that of Micali and Vazirani is known for maximum matching in general graphs.

\smallskip

On the complexity side, parallel matching has been extensively studied. 
Perfect matching has long been known to lie in $\mathcal{RNC}$ via randomized algebraic methods~\cite{MulmuleyVV87, KarpUW88}, most notably through the use of the isolation lemma. 
This connection has driven a substantial line of research on constructing deterministic or pseudorandom weight assignments that isolate a unique perfect matching. 
A recent breakthrough by Fenner, Gurjar, and Thierauf~\cite{FennerGT16} gives a near-derandomization of the isolation lemma, placing bipartite perfect matching in quasi-$\mathcal{NC}$. Subsequently, this result was extended to general graphs by Svensson and Tarnawski \cite{SvenssonT17}. On the algorithmic side, distributed and parallel algorithms for matching have been extensively studied. A classical result of Israeli and Itai~\cite{IsraeliI86} gives a randomized algorithm that computes a maximal matching (yielding a $1/2$-approximation to the maximum matching) in $O(\log n)$ synchronous rounds on general graphs. Subsequent work has focused on improving approximation guarantees and extending to weighted settings, including $(1-\varepsilon)$-approximation algorithms in the CONGEST model~\cite{LotkerPP15}, as well as earlier constant-factor approximation algorithms for weighted matching~\cite{LotkerPR07, WattenhoferW04, Hoepman04, CzygrinowHS04}.

One round of our process is the randomized handshake (rendezvous) procedure of M\'etivier, Saheb and
Zemmari~\cite{MetivierSZ03}: each vertex chooses a uniformly random neighbour, and two vertices that
choose each other are matched. Its one-round behaviour has been analysed on general
graphs~\cite{MetivierSZ03,Dietzfelbinger02,DietzfelbingerT05} and on $\gnp$~\cite{Zemmari08}, and its
iteration on the unmatched vertices, which is the process studied here, terminates with
probability~$1$~\cite{FontaineZ16}; the expected number of rounds of a closely related variant has also
been bounded~\cite[Ch.~6]{OuledAbdallah17}. The residual degrees, which are the subject of this paper,
are not analysed in these works.

\smallskip

We note that our argument follows every vertex round by round, in the style of the semi-random, or ``nibble'',
method of R\"odl~\cite{Rodl85} and Pippenger and Spencer~\cite{PippengerS89}. This method has also been applied to processes such as edge coloring, triangle-free and others~\cite{DubhashiGP98,GrableP97,BohmanK21,BennettB19}.

The key idea behind our analysis is to control the degree loss via a two-level martingale process. One that proves concentration around expected degree loss in a round and one that proves that the accumulation of the degree loss over multiple rounds is controlled.

\smallskip

While classical matching algorithms rely on augmenting path techniques,
greedy algorithms offer simple and efficient approaches to constructing matchings. However, in the worst case they achieve only constant-factor approximations. 
Understanding the limitations of randomized greedy matching algorithms in the worst case has been an important direction~\cite{KorteH78, DyerF91,PoloczekS12}. 
This has motivated the study of greedy algorithms under random graph models, where such worst-case lower bounds are not known.

\smallskip

Classical results identify sharp thresholds for the existence of perfect matchings in \erenyi~random graphs: above this threshold, such matchings exist with high probability \cite{ErdosR66, BollobasF85}.  Moreover, randomized algorithms can find them efficiently; for example, Angluin and Valiant \cite{AngluinV77} gave an $O(n \log n)$ algorithm that succeeds w.h.p. in graphs with at least $c n \log n$ edges.

\smallskip

An extensive line of work, initiated by the seminal paper of Karp and Sipser~\cite{KarpS81}, studies the performance of  greedy processes for finding large matchings in even sparser regimes. Specifically, Karp and Sipser analysed a simple greedy algorithm and showed that it finds a matching which is within $o(n)$ of the maximum
in $\gnp$ with $p= \frac cn$ where $c > 0$ is a constant. 
Subsequent refinements sharpened this analysis significantly. In particular, Aronson, Frieze, and Pittel~\cite{AronsonFP98} showed that the Karp--Sipser algorithm produces a matching within $\widetilde O(n^{1/5})$ of the maximum with high probability. Related work further established precise asymptotic and distributional properties of greedy matching algorithms on random graphs~\cite{DyerFP93, AronsonDFS95}. Complementary algorithmic results of Chebolu, Frieze, and Melsted~\cite{CheboluFM10} show that such near-maximum matchings can be augmented to a maximum matching in $O(n)$ expected time. More recent works, including the PhD thesis of Krea\v{c}i\'c~\cite{Kreacic17} and Glasgow, Kwan, Sah, and Sawhney~\cite{GlasgowKSS25}, establish central limit theorems for the matching size in sparse random graphs, building on the analysis of the Karp--Sipser process and showing that its fluctuations are asymptotically Gaussian.

\smallskip

Such analyses have also been extended beyond the general Erd\H{o}s--R\'enyi model: e.g., Bohman and Frieze~\cite{BohmanF11} analyze the Karp--Sipser algorithm on random graphs with a fixed degree sequence, showing that even under strong structural constraints, the algorithm finds almost-perfect matchings.

This line has helped establish the differential equation method (DEM)~\cite{Wormald95} as a powerful framework for analyzing stochastic processes on random graphs.

\smallskip

We study a fully synchronous process in which all vertices act in parallel each round, a natural model for greedy matching dynamics. While algorithms for matching have been studied extensively in parallel and distributed settings, their focus is on worst-case guarantees -- such as achieving constant-factor approximations in a small number of communication rounds -- on arbitrary graphs. In contrast, our focus is on the stochastic evolution of fully symmetric parallel greedy dynamics on random graphs, where the goal is to understand the trajectory of the process rather than to optimize worst-case approximation ratios.

\smallskip

\medskip
\subsection{Organization}\label{sec:organization}

We describe the randomized greedy matching process in
\Cref{sec:algorithm}, while \Cref{sec:prelim} introduces the notation,
random graph model, and concentration tools used throughout the paper.

\smallskip

\Cref{sec:warmup} serves as a warmup, analyzing the initial rounds via a neighbourhood-exposure martingale and the Azuma--Hoeffding inequality to obtain weak concentration bounds. This shows that degree-positivity persists for the first $\Theta(\log^{\epsilon/2} n)$ rounds.

\smallskip

\Cref{sec:strong-conc} exploits structural sparsity of $\gnp$ to obtain
strong local concentration bounds; the dependency graph among matching
indicators is a matching, allowing us to decompose dependencies into
small blocks and show that the variance of the degree loss in each
round is $O(1)$.

\smallskip

\Cref{sec:deg-evolution} combines these local bounds with a cumulative
martingale argument to control how degree fluctuations accumulate over
$\Theta(\log ^{1+\eps} n)$ rounds. Using Freedman's inequality together with a
stopping-time argument, we show that the $\beta$-stability invariant
persists for $\Omega(\log ^{1+\eps} n)$ rounds (\Cref{thm:persistence}),
which is the main focus of this work.

\smallskip

In \Cref{sec:deg-positivity} we show that if the degree-positivity invariant holds for $T$ rounds,
then the number of unmatched vertices decays geometrically (\Cref{lem:geometric-decay}), thus giving large matchings (\Cref{thm:linear-rounds-matching}).

\smallskip

Additional concentration results
are collected in the Appendix.

%% file: algorithm.tex
 \newpage
\section{A Simple Synchronous Greedy Process for  Matchings}\label{sec:algorithm}

We study a simple randomized greedy process that constructs a matching in an input graph
$G(V,E)$ over a sequence of synchronous rounds. The process is vertex-driven: in each round, each vertex
\emph{independently} proposes to a \emph{uniformly random} neighbour. 

\smallskip

We describe the process below.

\crefname{algorithm}{}{ALG}
\Crefname{algorithm}{\!\!}{ALG}
\begin{algorithm}[H]
\caption{Randomized Greedy Matching}
\label{alg:main}
\DontPrintSemicolon
\BlankLine 
\SetKwInOut{Input}{Input}
\SetKwInOut{Output}{Output}
\SetAlgoNoLine
\SetAlgoNoEnd
\LinesNotNumbered 
\Input{Graph $G(V,E)$, number of rounds $T$}
\Output{Matching $M \subseteq E$}

\algrule[.55pt]
 $M \leftarrow \emptyset$, $U \leftarrow V$\;

\For{$t = 1$ \KwTo $T$}{
    
    \ForEach{$v \in U$ s.t. $N_G(v) \cap U\neq \emptyset $ \textbf{independently}}{
        Choose $\pi(v)$ uniformly at random
         from $N_G(v) \cap U$\;
    }

    \ForEach{$(u,v)$ s.t. $u\in U$ and $v \in N_G(u) \cap U$}{
        \If{$\pi(u) = v$ and $\pi(v) = u$}{
            $M \leftarrow M \cup \{(u,v)\}$\;
            $U \leftarrow U \setminus \{u,v\}$\;
        }
    }
}

\Return{$M$}\;

\end{algorithm}

The parameter $T$ specifies the number of rounds for which the process
is executed. Although the process eventually converges to a maximal
matching as $T\to\infty$, our analysis concerns its first $T$ rounds.

\smallskip

We observe that \Cref{alg:main} always outputs a valid matching
of the input graph.

\begin{lemma}[Correctness of \ref{alg:main}]
The set $M$ output by \ref{alg:main} is a matching of the input graph $G(V,E)$.
\end{lemma}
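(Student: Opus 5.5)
We argue by induction on the rounds, maintaining the invariant that after every round $t$ the set $M$ is a matching of $G$ and that $U$ is exactly the set of vertices not covered by $M$, that is, $U = V \setminus V(M)$. Initially $M=\emptyset$ and $U=V$, so the invariant holds trivially. We then show that one execution of the outer loop preserves it.

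Fix a round $t$ and let $M$, $U$ denote the sets at the start of the round, which satisfy the invariant. Every edge $(u,v)$ added in this round satisfies three conditions:
\begin{itemize}
\item $u\in U$, so $u$ is uncovered by the old $M$;
\item $v\in N_G(u)\cap U$, so $(u,v)\in E$ and $v$ is also uncovered by the old $M$;
\item $\pi(u)=v$ and $\pi(v)=u$.
\end{itemize}
The key observation is that $\pi$ is a function: each vertex $u\in U$ proposes to exactly one vertex. Therefore $u$ can belong to at most one mutually proposing pair, namely $\{u,\pi(u)\}$, and only when $\pi(\pi(u))=u$. Consequently, the edges added in round $t$ are pairwise vertex-disjoint. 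None of them meets an edge of the old $M$, because both endpoints lie in $U$. So the updated $M$ is a matching.

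There is one subtlety. The inner loop modifies $U$ while it iterates, and a pair could in principle be visited twice, once as $(u,v)$ and once as $(v,u)$. We handle this by noting that the proposals $\pi$ are all fixed before the inner loop begins. The set of added edges is therefore exactly $\{\{u,\pi(u)\} : u\in U,\ \pi(\pi(u))=u\}$, which is well defined and does not depend on the iteration order. Once a pair is added, both endpoints leave $U$, so it is not added again; even if it were, re-adding the same edge would leave the set $M$ unchanged. Finally, the update $U\leftarrow U\setminus\{u,v\}$ removes precisely the newly covered vertices, so $U=V\setminus V(M)$ holds again at the end of the round.

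No step is a real obstacle. The only point that needs care is making explicit that the proposals are drawn before any matching happens in the round, so that the removals from $U$ during the inner loop cannot create conflicting additions. With the invariant established for every round, the returned $M$ is a matching of $G$.
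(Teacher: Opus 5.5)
Your proof is correct and follows essentially the same argument as the paper. Because $\pi$ is a function, each vertex lies in at most one mutually proposing pair per round, and once matched it leaves $U$ and never participates again; your invariant $U = V\setminus V(M)$ and your remark on the inner loop's iteration order simply make this reasoning more explicit.
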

\begin{proof}
We show that no vertex is incident to more than one edge in $M$.
In each round, every vertex $v\in U$ selects exactly one neighbour
$\pi(v)\in N_G(v)\cap U$.
An edge $(u,v)$ is added to $M$ only if $\pi(u)=v$ and $\pi(v)=u$,
and only while both $u$ and $v$ belong to $U$.
Once such an edge is added, both vertices are removed from $U$
and do not participate in subsequent rounds.
Therefore, a vertex can be matched in at most one round and cannot
be incident to more than one edge of $M$.
Hence $M$ is a matching of $G$.
\end{proof}

The remainder of the paper analyzes the behaviour of \Cref{alg:main} on
\erenyi~random graphs.

%% file: notation.tex
\newpage
\section{Notation and Preliminaries}\label{sec:prelim}

In this section we introduce the notation and probabilistic tools used
throughout the analysis.

\paragraph{Matching process notation.}
We begin by introducing notation for the evolution of the matching
process. For each $t \ge 0$, let $M_t$ denote the matching constructed
by the end of round $t$ during the execution of
\Cref{alg:main}, and let $U_t$ denote the set of vertices that remain
unmatched at the end of round $t$. We set $M_0=\emptyset$ and
$U_0=V(G)$, and refer to the initial configuration before any proposals
are made as round $0$.
\smallskip

For each $t \ge 0$ we define the \emph{residual graph} $G_t$
as
\[
G_t := G[U_t].
\]
For any $v \in U_t$, let $\deg_t(v)$ denote the degree of $v$ in $G_t$,
and let $N_t(v)$ denote its open neighbourhood in $G_t$.
\smallskip

Finally, let
\[
\Delta_t := \max_{v \in U_t} \deg_t(v),
\qquad
\delta_t := \min_{v \in U_t} \deg_t(v)
\]
denote the maximum and minimum degrees respectively, in $G_t$.
\smallskip

To track the evolution of the process round by round, we introduce the
random variables $I_t(u)$ and $X_t(v)$ defined below. The variables $I_t(u)$ capture which vertices become matched in round $t$, while $X_t(v)$ tracks the number of neighbours of $v$ that disappear in that round.
\begin{definition}[Matching indicator $I_t$]
Fix a round $t \ge 1$ and a vertex $u \in U_{t-1}$. We define the indicator random variable $I_t(u)$ as,
\[
I_t(u) := \mathds{1}\{u \in U_{t-1} \setminus U_t\}.
\]
\end{definition}

\begin{definition}[Round-$t$ degree-loss]
\label{def:xvt}
Fix a round $t \ge 1$ and a vertex $v \in V(G)$. Define the random variable
\[
X_t(v)
:=
\begin{cases}
\displaystyle\sum_{u \in N_{t-1}(v)} I_t(u),
& \text{if } v \in U_{t-1}, \\[6pt]
\hspace{21mm} 0, & \text{if } v \notin U_{t-1}.
\end{cases}
\]
\end{definition}

Although $X_t(v)$ primarily tracks the degree loss of unmatched
vertices, we extend its definition to the entire vertex set $V(G)$ by
setting $X_t(v)=0$ whenever $v \notin U_{t-1}$.
This allows us to analyze
the algorithm as a stochastic process evolving on the \emph{fixed index
set} $V(G)$, significantly simplifying the tracking of degree-loss over
multiple rounds.
\smallskip

We observe that for every $v \in U_{t}$ (that is, $v$ is still unmatched after round $t$),
\[
\deg_t(v)=\deg_{t-1}(v)-X_t(v).
\]
\paragraph{Invariants.} We now define the key invariants used in this work. The degree-positivity invariant ensures that every vertex has at least one neighbour to match with at the start of each round when it holds, while \bstab~ensures that the degree of every vertex in the residual graph remains at least a constant fraction of its initial degree.
\begin{definition}[Degree-positivity invariant up to round $T$]
\label{def:degree-positivity}
We say that the degree-positivity invariant holds up to round $T$ if:

$\text{For every } t \leq T \text{ and every } v \in U_{t-1}$, we have
\[
\deg_{t-1}(v) \ge 1.
\quad
\]
\end{definition}
\begin{definition}[\bstab~invariant]
    \label{def:stability-invariant}
    Let $\beta >0$ be a constant. We say that the $\beta$-stability invariant holds up to round $T$ if:
    
    $\text{For every } t \leq T \text{ and every } v \in U_{t-1}$, we have
    \[
        \deg_{t-1}(v) \geq \beta d_0.
    \]
    Equivalently, 
    \[
        \delta_{t-1} \geq \beta d_0.
    \]
\end{definition}

We note that if the $\beta$-stability invariant holds up to round $t$,
then the degree-positivity invariant also holds up to round $t$.

\paragraph{Random graph model.}
Throughout the paper we consider graphs drawn from the Erd\H{o}s--R\'enyi
random graph distribution defined below. 
\begin{definition}[Erd\H{o}s--R\'enyi random graph distribution]
\label{def:gnp}
For an integer $n \ge 1$ and a real number $p \in [0,1]$, the distribution $\gnp$ is the probability distribution over graphs on
vertex set $[n]$ in which each edge $\{u,v\}$ with $u \neq v$
is included independently with probability $p$.
\end{definition}

We write $G \sim \gnp$ for a graph sampled from this distribution,
and occasionally abuse notation by writing $\gnp$ to denote the same.
\smallskip

\paragraph{Density regime.}
Throughout the paper, we work with random graphs $G \sim \gnp$ with
\[
p = \frac{c\log^{1+\epsilon} n}{n}
\]
for \emph{any} $\epsilon>0$ and a constant $c>0$ sufficiently large. This regime lies just above the threshold for perfect matchings, which occurs at $p = \frac{\log{n} + O(\log\log{n})}{n}$
\cite[Theorem 6.2]{FriezeK15}. 
Some of our structural lemmas already hold under the weaker assumption $\epsilon\geq 0$ which includes the regime $p= c\log n/n$, while our main theorem requires the regime $\epsilon>0$.

\paragraph{Typical degree scale.}
For convenience we denote by
\[
d_0 := \log^{1+\epsilon} n
\]
the canonical degree scale in the density regime considered above. In
particular, when $G \sim \gnp$ with $p = \Theta(\log^{1+\epsilon}n/n)$,
the expected degree satisfies $pn = \Theta(d_0)$. We therefore use $d_0$
as a convenient shorthand for the typical magnitude of vertex degrees
in the graph.
\smallskip

A basic property of Erd\H{o}s--R\'enyi random graphs is that vertex degrees are concentrated around their expectation. We define the  event called degree-regularity which captures this quantifiably,
\begin{definition}[Degree-regularity event ($\mathcal E_{\mathrm{reg}}$)]
\label{def:ereg}
Let $G \sim \gnp$. We define the event $\mathcal E_{\mathrm{reg}}$ that the initial degrees are concentrated around the mean; specifically, 
\[
\frac{pn}{2} \le \deg_G(v) \le 2pn
\qquad
\text{for all } v \in V(G).
\]
\end{definition}
We require throughout this article a precise bound on the probability that $\ereg$ holds, and we prove one in \Cref{fact:deg-conc}.
\smallskip

For a random variable $X$, we write $\E[X]$ for its expectation and
$\mathrm{Var}(X) := \E[(X-\E[X])^2]$ for its variance.

\paragraph{Martingale preliminaries.}
Our analysis relies on martingale techniques, and we briefly recall the
definitions we will use. We use bold letter font to denote random variables concerning martingales. More details and background on martingales can be found in any standard textbook on probability theory, e.g. \cite{GrimmettS01}.

\begin{definition}[Filtration]
Let $(\Omega,\fil{}{}{}, \Pr)$ be a probability space.
A sequence of $\sigma$-algebras $\big(\mathcal{F}_t\big)_{t\in \N}$ is called a
\emph{filtration} if
\[
\fil{0}{}{} \subseteq \fil{1}{}{} \subseteq \fil{2}{}{} \subseteq \cdots .
\]
\end{definition}
Throughout the paper we write ``conditioning on $\calF$'' to mean
conditioning on an arbitrary realization of the information contained
in the $\sigma$-algebra $\calF$,  equivalently, conditioning on an arbitrary \emph{event} in $\calF$.

\begin{definition}[Martingale]
Let $\big(\mathcal{F}_t\big)_{t\in \N}$ be a filtration.
A sequence of random variables $\big(\Zmart{t}{}{}\big)_{t\in \N}$ is called a
\emph{martingale with respect to the filtration $\big(\mathcal{F}_t\big)_{t\in \N}$} if
\[
\E\left[\abs{\Zmart{t}{}{}}\right] < \infty
\quad \text{for all } t \in \N,
\]
and for every $t \ge 1$,
\[
\E\left[\Zmart{t}{}{} \mid \fil{t-1}{}{}\right] = \Zmart{t-1}{}{}.
\]
\end{definition}

\begin{definition}[Doob martingale]
Let $\big(\mathcal{F}_t\big)_{t\in \N}$ be a filtration and let $X$ be an integrable random variable with $\E[X] < \infty$. The sequence
\[
\Zmart{t}{}{} := \E\left[X \mid \fil{t}{}{}\right], \qquad t \ge 0,
\]
is called the \emph{Doob martingale} associated with $X$ and the filtration $\big(\mathcal{F}_t\big)_{t\in \N}$.
\end{definition}

We will use $[n]$ to denote the set $\{1,2, \ldots n\}$, and $\mathbb N_0$ denotes the set $\mathbb N \cup \{0\}$, where $\mathbb N$ is the set of natural numbers.  
\smallskip

\paragraph{Matching probabilities under \Cref{alg:main}}
To analyze the process round by round, we let $\mathcal H_t$ denote the history of the algorithm up to the end of round $t$, i.e., the $\sigma$-algebra generated by all proposal choices in rounds $1,\dots,t$. Note that $\mathcal H_{t-1}$ determines $U_{t-1}$ and the residual graph $G_{t-1}$. The
probability that a vertex is matched in the next round has a simple closed form:

For $t\geq1$ and  $u\in U_{t-1}$ with $\deg _{t-1} (u)\geq 1$, we have
\[
\Pr(I_t(u)=1 \mid \mathcal H_{t-1})= \sum_{w \in N_{t-1}(u) } \frac{1}{\deg_{t-1}(u)\deg_{t-1}(w)}
\]
Indeed, given $\calH_{t-1}$ the vertices of $U_{t-1}$ propose independently and uniformly to their
current neighbours; $u$ is matched if and only if $\pi_t(u)=w$ and $\pi_t(w)=u$ for some
$w\in N_{t-1}(u)$, and these events are disjoint for distinct $w$.
\smallskip

Finally,  $X \sim \mathrm{Bin}(n,p)$ denotes a binomial random variable with parameters $n$ and $p$, i.e.,
\[
    \Pr(X = k) = \binom{n}{k}p^{k}(1-p)^{n-k} \text{ for } k = 0,1,2,\dots, n.
\]

%% file: warmup.tex

\newpage
\section{Warmup: \bstab~Holds for Few Rounds}\label{sec:warmup}

In this section we show, using only coarse martingale concentration (Azuma--Hoeffding
inequality), 
that the $\beta$-stability invariant holds for the first
$\Theta(\log^{\epsilon/2}n)$ rounds. This serves as a warmup that captures the key difficulty, namely
controlling how many neighbours a vertex loses in each round, and shows where such coarse concentration
stops being enough.

\smallskip

We begin by studying the first round of the process on
a random graph $G \sim \gnp$. This analysis provides local control over the evolution of a vertex neighbourhood within one round.

\medskip
\subsection{Expected Loss of Degree is Bounded}\label{sec:expected-loss}

\begin{lemma}[Expected degree loss in the first round]
\label{lem:exp-matched-neighbours}
 Fix $\epsilon\geq 0$.   Let $G \sim \gnp$ with $p = \frac{c\log^{1+\epsilon} n}{n}$. Consider any vertex $v \in U_0 = V(G)$.
    Then
    \[
        \E[X_1(v)] \le \frac{\Delta_0}{\delta_0}.
    \]
    In particular, with probability $1-o(1)$,
    \[
        \E[X_1(v)] \le O(1).
    \]
\end{lemma}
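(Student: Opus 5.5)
The plan is to compute $\E[X_1(v)]$ exactly by linearity of expectation, bound each term using the closed-form matching probability from \Cref{sec:prelim}, and then use degree concentration to show $\Delta_0/\delta_0=O(1)$.

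First, conditional on $G$ (which here plays the role of $\mathcal H_0$), write $X_1(v)=\sum_{u\in N_0(v)} I_1(u)$. Linearity gives
\[
\E[X_1(v)] = \sum_{u\in N_0(v)} \Pr(I_1(u)=1) = \sum_{u\in N_0(v)} \sum_{w\in N_0(u)} \frac{1}{\deg_0(u)\deg_0(w)} .
\]
This uses the matching-probability formula stated in the preliminaries. The formula requires $\deg_0(u)\ge 1$, which holds automatically because $u$ is adjacent to $v$. Each inner term satisfies $\deg_0(w)\ge \delta_0$. The inner sum has exactly $\deg_0(u)$ terms, so it is at most $\deg_0(u)\cdot\frac{1}{\deg_0(u)\delta_0}=1/\delta_0$. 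Summing over the $\deg_0(v)\le\Delta_0$ neighbours $u$ gives $\E[X_1(v)]\le \Delta_0/\delta_0$. If $\delta_0=0$ the bound is vacuous, so we may assume $\delta_0\ge1$; this is the case on the event below.

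For the second assertion, work on the degree-regularity event $\ereg$, on which $pn/2\le\deg_G(v)\le 2pn$ for all $v$. This gives $\Delta_0/\delta_0\le 4$, so $\E[X_1(v)\mid G]\le 4=O(1)$. It remains to show that $\Pr(\ereg)=1-o(1)$. This follows from the Chernoff bound for $\deg_G(v)\sim\mathrm{Bin}(n-1,p)$ together with a union bound over the $n$ vertices. The deviation probability is $\exp(-\Theta(pn))=\exp(-\Theta(c\log^{1+\epsilon}n))\le n^{-\Theta(c)}$, which is $o(1/n)$ when $c$ is large. This covers $\epsilon\ge 0$, including $\epsilon=0$ with $c$ a sufficiently large constant. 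The paper defers this to \Cref{fact:deg-conc}, and I would cite it there.

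The only delicate point is interpretive: the expectation must be taken over the proposal randomness conditional on $G$, with the ``probability $1-o(1)$'' referring to the draw of $G$. Apart from that, the argument is a direct calculation, and the Chernoff and union-bound step is routine.
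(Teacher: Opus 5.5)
Your proposal is correct and follows the paper's proof essentially step for step: linearity over $N_0(v)$, the closed-form matching probability bounded by $1/\delta_0$ for each neighbour, which gives $\deg_0(v)/\delta_0\le\Delta_0/\delta_0$, and then $\Delta_0/\delta_0\le 4$ on $\ereg$, with the probability of $\ereg$ supplied by \Cref{fact:deg-conc}. Your reading of the statement is also the one the paper intends: the expectation is over the proposals given $G$, and the $1-o(1)$ refers to the draw of $G$.
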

\begin{proof}

Recall from \Cref{def:xvt}, we have for every $v\in U_0$,
\[
X_1(v) = \sum_{u\in N_0(v)} I_1(u).
\]
Hence
    \begin{align}
        \label{eq:expecx1}
        \E[X_1(v)]
        =
        \sum_{u\in N_0(v)} \Pr(u \in M_1).
    \end{align}
    Fix a neighbour $u\in N_0(v)$. Vertex $u$ is matched in round~1 if and only if there exists a vertex $w\in N_0(u)$ such that $\pi(u)=w$ and $\pi(w)=u$. 
\smallskip

Summing  over all $w\in N_0(u)$ we obtain,
    \begin{align*}
        \Pr(u \in M_1) &= \sum_{w\in N_0(u)}\Pr(\pi(u)=w,\ \pi(w)=u)  \\ 
        & = \sum_{w\in N_0(u)} \frac{1}{\deg_0(u)\deg_0(w)}\\
        &= \frac{1}{\deg_0(u)} \sum_{w\in N_0(u)} \frac{1}{\deg_0(w)}\\
        &\leq \frac{1}{\deg_0(u)} \cdot \frac{\deg_0(u)}{\delta_0} = \frac{1}{\delta_0}.
    \end{align*}
    Substituting into \cref{eq:expecx1} gives
    \begin{align}
        \label{eq:Ex1}
        \E[X_1(v)]
        = \sum_{u\in N_0(v)} \Pr(u \in M_1)
        \leq
        \sum_{u\in N_0(v)} \frac{1}{\delta_0}
        =
        \deg_0(v)\cdot \frac{1}{\delta_0}
        \leq
        \frac{\Delta_0}{\delta_0}.
    \end{align}
    We now condition on the degree-regularity event $\ereg$ from \Cref{fact:deg-conc}, which holds with probability at least $1-o(1)$. Under $\ereg$, every vertex $v \in U_0$ satisfies
    \[
        \delta_0 \geq  \frac{c}{2}\cdot d_0,
        \qquad
        \Delta_0 \leq 2c\cdot d_0.
    \]

    \Cref{eq:Ex1} then implies that
    \[
    \E[X_1(v)]
    \leq
    \frac{\Delta_0}{\delta_0} = \frac{2c\cdot d_0}{(c/2) \cdot d_0} = 4
    \]
    is true with probability at least $1-o(1)$.
\end{proof}

\medskip

\subsection{Degree-Loss in the First Round is Concentrated}\label{sec:no-isolation-first-round}

We now prove a concentration bound for the degree loss in the first round. Our bound will show that with high probability no vertex loses more than half its neighbours.
Since the argument uses martingales, we begin by describing the proposal exposure process.
\smallskip

Fix a vertex $v \in U_0$, and let $d:=\deg_0(v)$ for the purposes of this section. To analyse the number of neighbours $v$ loses in the first round, we separate the randomness according to whether it originates
from vertices inside or outside the neighbourhood of $v$. Since proposal choices are made \emph{independently} by each vertex, we may first
expose the choices made by vertices outside $N_0(v)\cup\{v\}$ and condition
on them.
\smallskip

Let $\mathcal R_1^{\mathrm{out}}(v)$ denote the proposal choices of all vertices outside $N_0(v)\cup\{v\}$, i.e.,
\begin{align}
\mathcal R_1^{\mathrm{out}}(v)
:=
\{\pi(x): x \in U_0 \setminus (N_0(v)\cup\{v\})\}.\notag
\end{align}

\smallskip

Fix an arbitrary realization $r_1^{\mathrm{out}} \leftarrow \mathcal R_1^{\mathrm{out}}$ and condition on it throughout, i.e., we work in the probability space
\begin{align}
    \Pr(\,\cdot \mid \mathcal R_1^{\mathrm{out}}(v) = r_1^{\mathrm{out}}).\notag
\end{align}
Let $X_1^{(r_1^{\mathrm{out}})}(v)$ denote the number of neighbours of $v$ matched in round~1 under this conditioning, i.e.,
\begin{align}
    X_1^{(r_1^{\mathrm{out}})}(v) := X_1(v) \,\big|\, \mathcal R_1^{\mathrm{out}}(v) = r_1^{\mathrm{out}} \notag
\end{align}

Under this conditioning, the proposal choices outside $N_0(v)\cup\{v\}$ are fixed. It therefore suffices to expose the proposal choices of vertices in $N_0(v)=\{u_1,\dots,u_d\}$. For each $i\in[d]$, let $\mathcal N_i$
denote the (fixed) set of neighbours from which $u_i$ chooses.
\smallskip

Let $\Omega := \mathcal N_1 \times \cdots \times \mathcal N_d$ denote the
sample space of proposal choices of the neighbours of $v$, where an
outcome $\omega\in\Omega$ is a tuple
$\omega=(\pi(u_1),\dots,\pi(u_d))$.
\smallskip

We reveal these proposals sequentially and define the corresponding filtration by
\begin{align}
    \fil{1}{k}{v} :=
    \begin{cases}
        \{\emptyset,\Omega\}, & k=0,\\[2mm]
        \sigma\bigl(\pi(u_1),\dots,\pi(u_k)\bigr), & k\in[d].
    \end{cases}\notag
\end{align}
Clearly, $\fil{1}{s}{v}\subseteq \fil{1}{k}{v}$ for all $s\leq k$, and thus $\filseq{1}{k}{v}{d}{}$ is a filtration of the conditional probability space induced by the conditioning $\mathcal R_1^{\mathrm{out}}(v)=r_1^{\mathrm{out}}$.

\smallskip

We consider the Doob martingale $\Zmartseq{1}{k}{v}{d}{}$ with respect to the filtration $\filseq{1}{k}{v}{d}{}$, i.e.,
\begin{align}
    \label{def:martingale}
    \Zmart{1}{k}{v} := \mathbb E\!\left[X_1^{(r_1^{\mathrm{out}})}(v) \mid \fil{1}{k}{v}\right] \qquad \text{for}
    ~~ k\in [d].
\end{align}
We refer to $\Zmartseq{1}{k}{v}{d}{}$ as the \emph{neighbourhood-exposure martingale}.

Note that 
\begin{align}
\Zmart{1}{0}{v} = \mathbb E\!\left[X_1^{(r_1^{\mathrm{out}})}(v)\right]\qquad  \text{and}
\qquad
\Zmart{1}{d}{v} = X_1^{(r_1^{\mathrm{out}})}(v).\notag
\end{align}
We now show that $\Zmartseq{1}{k}{v}{d}{}$ satisfies the bounded difference property.
\begin{lemma}[Bounded differences for the exposure martingale]
\label{lem:bounded-difference-round1}
For the martingale $\Zmartseq{1}{k}{v}{d}{}$ we have
\[
|\Zmart{1}{k}{v} - \Zmart{1}{k-1}{v}| \le 2
\quad \text{for all } k \in [d].
\]
\end{lemma}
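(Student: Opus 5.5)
The plan is to reduce the bound on the martingale increment to a Lipschitz (bounded-difference) property of the function $X_1^{(r_1^{\mathrm{out}})}(v)$ viewed as a function of the outcome $\omega=(\pi(u_1),\dots,\pi(u_d))\in\Omega$. Under the conditioning $\mathcal R_1^{\mathrm{out}}(v)=r_1^{\mathrm{out}}$, the proposal $\pi(v)$ is still random. I will either include it in the exposure (as an additional coordinate revealed first or last) or note that it is independent of the $\pi(u_i)$ and treat it as part of the remaining randomness. In either case the coordinates are independent. Write $f(\omega)$ for the number of neighbours of $v$ matched in round~1.

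The first step is the deterministic claim: if $\omega,\omega'$ differ only in coordinate $k$, then $|f(\omega)-f(\omega')|\le 2$. Changing $\pi(u_k)$ from $a$ to $b$ changes the mutual-proposal status only of pairs containing $u_k$. Specifically, the pair $\{u_k,a\}$ may cease to be matched and the pair $\{u_k,b\}$ may become matched. All other pairs $\{x,y\}$ with $u_k\notin\{x,y\}$ keep their status, since $\pi(x)$ and $\pi(y)$ are unchanged.

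I will then argue that the net change is at most $2$, not $4$. Under $\omega$, at most one of these pairs, namely $\{u_k,a\}$, can be matched, and under $\omega'$ at most $\{u_k,b\}$. Since each vertex lies in at most one mutual pair, the set of matched vertices changes only on $\{u_k,a,b\}$. Moreover, $u_k$ is matched in both or in neither, or its status flips. A more careful count gives the following:
\begin{itemize}
\item if $u_k$ is matched under both, then $a$ loses and $b$ gains, so the change is at most $1$ in absolute value;
\item otherwise, only one of the two pairs is active, so at most two vertices ($u_k$ and its partner) change status.
\end{itemize}
Hence $|f(\omega)-f(\omega')|\le 2$.

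The second step is the standard transfer from Lipschitz to martingale increments. By independence of the coordinates, we have
\[
\Zmart{1}{k}{v}-\Zmart{1}{k-1}{v}=\E_{\pi'}\bigl[\,\E[f\mid \pi(u_1),\dots,\pi(u_{k-1}),\pi(u_k)]-\E[f\mid \pi(u_1),\dots,\pi(u_{k-1}),\pi'(u_k)]\,\bigr],
\]
where $\pi'(u_k)$ is an independent copy. Coupling the remaining coordinates identically, each inner difference is an average of $f(\omega)-f(\omega')$ over pairs differing only in coordinate $k$. Its absolute value is therefore at most $2$, and so is the outer average.

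The main obstacle is the case analysis in the first step. One must be careful that $v$ itself or vertices outside $N_0(v)$ may be partners of $u_k$; these do not count toward $f$ (only neighbours of $v$ count), so they only help. The delicate subcase is when both $a$ and $b$ lie in $N_0(v)$ and $u_k$ is matched to $a$ under $\omega$ and to $b$ under $\omega'$. There the count gives change $0$, since $a$ and $b$ swap, with $u_k$ matched in both. I would verify the bound $2$ by enumerating whether $u_k$ is matched under $\omega$ and/or $\omega'$.
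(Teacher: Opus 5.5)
Your proof is correct and follows the same route as the paper: changing the single proposal $\pi(u_k)$ alters the matched status of at most two neighbours of $v$, and this Lipschitz property bounds the Doob increments by $2$. Your version is slightly more careful than the paper's in three places. You treat the swap case, where $\{u_k,a\}$ is destroyed and $\{u_k,b\}$ is created at once, which the paper's phrase ``at most one matched edge'' glosses over. You also make the Lipschitz-to-increment transfer explicit via independence and coupling, and you account for the proposal $\pi(v)$, which is still random under this conditioning.
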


\begin{proof}
Fix $k \in [d]$ and condition on the $\sigma$-algebra $\fil{1}{k-1}{v}$.
Under this conditioning, we note that the proposal choices
$\pi(u_1),\dots,\pi(u_{k-1})$ are fixed, while $\pi(u_k)$ and the
choices of $u_{k+1},\dots,u_d$ remain random.
\smallskip

Exposing the proposal $\pi(u_k)$ can only affect edges incident to $u_k$.
Consequently, changing the value of $\pi(u_k)$ can create or destroy
at most one matched edge, say $e$, in the matching produced in round $1$.
\smallskip

If the edge $e$ joins $u_k$ to another neighbour $u_j \in N_0(v)$,
then the matched status of both $u_k$ and $u_j$ may change.
Hence the value of $X_1^{(r_1^{\mathrm{out}})}(v)$
can change by at most $2$.
\smallskip

Therefore, we have
\[
\abs{\Zmart{1}{k}{v} - \Zmart{1}{k-1}{v}}
=
\abs{\mathbb E\!\left[X_1^{(r_1^{\mathrm{out}})}(v) \; \middle| \; \fil{1}{k}{v}\right]
-
\mathbb E\!\left[X_1^{(r_1^{\mathrm{out}})}(v) \; \middle| \; \fil{1}{k-1}{v}\right]}
\le 2.
\]
\end{proof}

The bounded difference property allows us to apply Azuma--Hoeffding inequality,
yielding concentration of the number of neighbours of $v$ matched in
the first round.

\begin{lemma}[Degree-stability in the first round]
    \label{lem:round1-degree-stability}
   Fix $\epsilon\geq 0$. Let $G\sim \gnp$ with $p= \frac{c\log^{1+\epsilon} n}{n}$ for a sufficiently large 
    constant $c$.
    
    Then with probability at least $1-o(1)$, the following holds simultaneously for all vertices $v\in V(G)$:
    \[
    \deg_1(v) \geq \tfrac12 \deg_0(v).
    \]
\end{lemma}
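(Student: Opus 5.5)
We plan to combine the expectation bound of \Cref{lem:exp-matched-neighbours}, the bounded-difference property of \Cref{lem:bounded-difference-round1}, the Azuma--Hoeffding inequality, and a union bound over all $n$ vertices. Throughout we work on the degree-regularity event $\ereg$, which by \Cref{fact:deg-conc} holds with probability $1-o(1)$. On $\ereg$ we have $\frac{c}{2}d_0 \le \deg_0(v)\le 2c\,d_0$ for all $v$, and $\Delta_0/\delta_0\le 4$. Since $\deg_1(v)=\deg_0(v)-X_1(v)$ for every $v\in U_1$, it suffices to show that with probability $1-o(1)$ we have $X_1(v)\le \frac12\deg_0(v)$ simultaneously for all $v$. (Matched vertices have no residual degree, so the claim concerns $v\in U_1$; note also that bounding $X_1(v)$ applies to any $v\in U_0$.)

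Fix $v$ and let $d=\deg_0(v)$. First condition on the realization $r_1^{\mathrm{out}}$ of the outside proposals, as in the setup above. The conditional expectation $\Zmart{1}{0}{v}=\E[X_1^{(r_1^{\mathrm{out}})}(v)]$ is not the same as $\E[X_1(v)]$, so the mean bound has to be re-derived under the conditioning. The computation in \Cref{lem:exp-matched-neighbours} goes through essentially verbatim. For $u_i\in N_0(v)$, $u_i$ is matched only if $\pi(u_i)=w$ and $\pi(w)=u_i$ for some $w$. The choice $\pi(u_i)$ is still uniform, with probability $1/\deg_0(u_i)$ for each $w$. The factor $\Pr(\pi(w)=u_i)$ is now either $1/\deg_0(w)$ (if $w\in N_0(v)\cup\{v\}$) or $0/1$ (if $w$ is outside and hence fixed). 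The outside contribution is at most $\Pr(\pi(u_i)\in\{w\text{ outside}: \pi(w)=u_i\})$. Summing this over $i$ gives at most $\frac{1}{\delta_0}\cdot\#\{(w,u_i):\pi(w)=u_i\}\le \frac{\Delta_0}{\delta_0}\cdot$ (number of outside vertices proposing into $N_0(v)$, divided by $\Delta_0$).

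This step is the main obstacle, since the count of outside proposals into $N_0(v)$ is not bounded deterministically. The simplest fix is to avoid conditioning on the outside proposals altogether. Instead, we expose all $n$ proposals in a vertex-exposure Doob martingale for $X_1(v)$ and note which proposals actually matter. Only the proposals of vertices in $N_0(v)\cup\{v\}\cup N_0(N_0(v))$ affect $X_1(v)$, which is too many for Azuma. So instead we keep the conditioning and separately bound $Y:=\#\{w\notin N_0(v)\cup\{v\}: \pi(w)\in N_0(v)\}$. Then $Y$ is a sum of independent indicators with mean at most $\sum_{w}|N_0(w)\cap N_0(v)|/\delta_0\le d\Delta_0/\delta_0\le 4d$. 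That is too crude, so we refine: the conditional expected loss is at most $\sum_i \frac{1}{\deg_0(u_i)}\bigl(\sum_{w\in N_0(u_i)\cap (N_0(v)\cup\{v\})}\frac{1}{\deg_0(w)} + |\{w:\pi(w)=u_i\}|\bigr)$. The first part is $O(1)$. The second is at most $Y/\delta_0$, and by Chernoff $Y\le 8d$ with probability $1-n^{-3}$, giving a conditional mean at most $O(1)+16\Delta_0/\delta_0 = O(1)$. Since $d\ge \frac c2 d_0$, this is certainly at most $d/4$ for large $n$.

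Then Azuma--Hoeffding with increments at most $2$ over $d$ steps (\Cref{lem:bounded-difference-round1}) gives
\[
\Pr\Bigl(X_1(v)-\Zmart{1}{0}{v}\ge \tfrac d4 \Bigm| r_1^{\mathrm{out}}\Bigr)\le \exp\!\Bigl(-\frac{(d/4)^2}{8d}\Bigr)=\exp(-d/128)\le \exp\!\Bigl(-\frac{c\,d_0}{256}\Bigr)\le n^{-3},
\]
using $d_0\ge\log n$ and $c$ sufficiently large. Averaging over the good realizations of $r_1^{\mathrm{out}}$ (those with $Y\le 8d$) and adding the failure probability of the Chernoff bound, we get $\Pr(X_1(v)>d/2)\le 2n^{-3}$. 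A union bound over all $n$ vertices, together with $\Pr(\overline{\ereg})=o(1)$, gives the claim with probability $1-o(1)$.
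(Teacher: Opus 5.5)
Your argument is correct. It follows the paper's skeleton: condition on the outside proposals $\mathcal R_1^{\mathrm{out}}(v)$, run the neighbourhood-exposure martingale with increments at most $2$, apply Azuma--Hoeffding, and take a union bound over $v$ on $\ereg$. Where the two proofs genuinely diverge is the centring.

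The paper applies Azuma conditionally on $r_1^{\mathrm{out}}$. It then asserts that, because this bound is uniform over realizations, $\Pr(X_1(v)-\E[X_1(v)]\ge\lambda)\le\exp(-\lambda^2/(8d))$. Finally it substitutes the unconditional bound $\E[X_1(v)]\le 4$ from \Cref{lem:exp-matched-neighbours}. That transfer does not follow. Uniformity only gives concentration around the random centre $\E[X_1(v)\mid\mathcal R_1^{\mathrm{out}}(v)]$. For atypical outside configurations this centre is of order $d$, for instance when most outside neighbours of each $u_i$ propose to $u_i$.

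You identified exactly this problem. You computed the conditional mean as at most $d/\delta_0+Y/\delta_0$. You then used a Chernoff bound to get $Y\le 8d$ outside an event of probability $n^{-3}$. On the good realizations the centre is therefore $O(1)\le d/4$, and Azuma finishes the argument.

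The paper's route is shorter and reuses \Cref{lem:exp-matched-neighbours} verbatim, but as written it leaves this gap. Yours pays one extra Chernoff bound and an additional $n^{-3}$ failure term per vertex, which the union bound absorbs, and in exchange actually proves the lemma. The same device is also what the round-$t$ analogue \Cref{lem:roundt-degree-stability} needs, since the paper obtains it by the same argument.

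One small point you inherit from the paper: $\pi(v)$ is neither fixed by $r_1^{\mathrm{out}}$ nor exposed in the filtration. So $\mathbf{Z}_1^{d}(v)$ is not literally $X_1^{(r_1^{\mathrm{out}})}(v)$. Exposing $\pi(v)$ as one extra step, which changes $X_1(v)$ by at most $1$, fixes this at no cost.
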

\begin{proof}
Fix a vertex $v\in V(G)$ and write $d=\deg_0(v)$.
Let $X_1^{(r_1^{\mathrm{out}})}(v)$ denote the random variable $X_1(v)$ conditioned on an arbitrary realization
$r_1^{\mathrm{out}} \leftarrow \mathcal R_1^{\mathrm{out}}(v)$ of the proposal choices made by
vertices outside $N_0(v)\cup v$. We consider the martingale defined in \cref{def:martingale}.
\smallskip

By \Cref{lem:bounded-difference-round1}, we have that 
 $\Zmartseq{1}{k}{v}{d}{}$ satisfies the bounded difference property.

\smallskip

Therefore, by the Azuma--Hoeffding inequality (\Cref{fact:azuma}), we obtain:

For any $\lambda>0$,
\begin{align*}
    \Pr\!\left(
    \abs{\Zmart{1}{d}{v} - \Zmart{1}{0}{v}} \geq \lambda
    \;\middle|\; \mathcal R_1^{\mathrm{out}}(v)=r_1^{\mathrm{out}}
    \right)
    \leq
    \exp\!\left(-\frac{\lambda^2}{8d}\right)
\end{align*}
Equivalently,
\begin{align}
    \Pr\!\left(
    X_1^{(r_1^{\mathrm{out}})}(v) - \E\left[X_1^{(r_1^{\mathrm{out}})}(v)\right] \geq \lambda
    \;\middle|\; \mathcal R_1^{\mathrm{out}}(v)=r_1^{\mathrm{out}}
    \right)
    \leq
    \exp\!\left(-\frac{\lambda^2}{8d}\right).
    \label{eq:conditional}
\end{align}

We note that the bound in \cref{eq:conditional} holds uniformly for every realization $r_1^{\mathrm{out}}$. This yields,
\begin{align}
    \label{eq:round1-ineq}
    \Pr\!\left(
    X_1(v) - \E[X_1(v)] \ge \lambda
    \right)
    \leq
    \exp\!\left(-\frac{\lambda^2}{8d}\right).
\end{align}

Conditioning on the degree-regularity event $\ereg$ from \Cref{fact:deg-conc}, which holds with probability $1-o(1)$, we have $d=\deg_0(v)=\Theta(d_0)$ and $\E[X_1(v)]\le 4$. Therefore, choosing $\lambda = \frac{d}{2} - \E[X_1(v)]$ yields $\lambda = \Theta(d_0)$.

Substituting this choice of $\lambda$ into \cref{eq:round1-ineq} yields
\[
\Pr\!\left(X_1(v) \ge \tfrac{d}{2} \mid \ereg \right)
\le n^{-10}.
\]
Applying a union bound over all vertices $v \in V(G)$ gives
\[
\Pr\!\left(\exists v \in V(G): X_1(v) \ge \tfrac12 \deg_0(v)
\;\middle|\; \ereg \right)
\le n^{-9}.
\]
Since $\Pr(\ereg) \ge 1-o(1)$, it follows that with probability $1-o(1)$,
for every vertex $v$ we have
\[
X_1(v) \le \tfrac12 \deg_0(v).
\]
Thus we have,
\[
\deg_1(v)=\deg_0(v)-X_1(v) \ge \tfrac12 \deg_0(v),
\]
with probability at least $1-o(1)$, which proves the lemma.
\end{proof}

\medskip
\subsection{Extending the First-Round Analysis and Iterating Degree-Loss}\label{sec:first-to-arbitrary}

We now extend the first-round analysis to an arbitrary round $t$ and use it to show \bstab~for a few rounds.

\smallskip

Condition on the history $\calH_{t-1}$ and suppose that \bstab~holds. Under this conditioning, the proposal choices made in round $t$ remain independent and uniform over current neighbourhoods.
\smallskip

Fix a vertex $v\in U_{t-1}$ and let $d := \deg_{t-1}(v)$. We expose the proposal choices of the neighbours of $v$ sequentially, inducing a filtration and the corresponding Doob martingale for $X_t(v)$. The same argument therefore yields the following round-$t$ analogues of \Cref{lem:bounded-difference-round1,lem:exp-matched-neighbours}, which we then iterate to control degree evolution.
\begin{lemma}
    \label{lem:exp-matched-neighbours-t}
    Fix $\eps \geq 0$. Let $G \sim \gnp$ with $p = \frac{c\log^{1+\epsilon} n}{n}$. Assume that \bstab~holds up to round $t$. 
    
    Then, for every vertex $v \in U_{t-1}$, 
    \begin{align*}
        \mathbb E[X_t(v) \mid \mathcal{H}_{t-1}] \le \frac{\Delta_{t-1}}{\delta_{t-1}}.
    \end{align*}
\end{lemma}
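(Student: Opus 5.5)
The plan is to repeat the computation from \Cref{lem:exp-matched-neighbours} verbatim, but in the residual graph $G_{t-1}$ and conditionally on $\calH_{t-1}$. The key observation is that, given $\calH_{t-1}$, the set $U_{t-1}$ and the residual graph $G_{t-1}$ are determined. Moreover, the round-$t$ proposals $\pi_t(x)$ for $x\in U_{t-1}$ are independent and each is uniform on $N_{t-1}(x)$, since fresh randomness is used in every round.

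Fix $v\in U_{t-1}$. By \Cref{def:xvt} and linearity of conditional expectation,
\[
\E[X_t(v)\mid\calH_{t-1}]=\sum_{u\in N_{t-1}(v)}\Pr(I_t(u)=1\mid\calH_{t-1}).
\]
The $\beta$-stability invariant up to round $t$ gives $\deg_{t-1}(w)\ge\beta d_0\ge1$ for all $w\in U_{t-1}$. So every vertex of $U_{t-1}$ actually proposes, and the closed form from \Cref{sec:prelim} applies:
\[
\Pr(I_t(u)=1\mid\calH_{t-1})=\sum_{w\in N_{t-1}(u)}\frac{1}{\deg_{t-1}(u)\deg_{t-1}(w)}.
\]
This formula rests on two facts: the events $\{\pi_t(u)=w,\pi_t(w)=u\}$ are disjoint for distinct $w$, and for each fixed $w$ the two proposals are independent.

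Each term satisfies $1/\deg_{t-1}(w)\le 1/\delta_{t-1}$, and there are $\deg_{t-1}(u)$ terms. Hence
\[
\Pr(I_t(u)=1\mid\calH_{t-1})\le \frac{1}{\delta_{t-1}}.
\]
Summing over the $\deg_{t-1}(v)\le\Delta_{t-1}$ neighbours of $v$ gives
\[
\E[X_t(v)\mid\calH_{t-1}]\le\frac{\deg_{t-1}(v)}{\delta_{t-1}}\le\frac{\Delta_{t-1}}{\delta_{t-1}}.
\]

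There is no genuine obstacle. The only point needing care is justifying that conditioning on $\calH_{t-1}$ leaves the round-$t$ proposals independent and uniform on current neighbourhoods, and that $\delta_{t-1}>0$ so the ratio is well defined. Both follow from the description of \Cref{alg:main} and from $\beta$-stability.
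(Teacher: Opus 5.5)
Your proposal is correct and follows the paper's own route: the paper proves this lemma by noting that the first-round computation of \Cref{lem:exp-matched-neighbours} carries over verbatim to the residual graph $G_{t-1}$ once one conditions on $\calH_{t-1}$. It uses the closed-form matching probability from \Cref{sec:prelim} and the bound $1/\deg_{t-1}(w)\le 1/\delta_{t-1}$ for each neighbour $w$, exactly as you do, with $\beta$-stability guaranteeing $\delta_{t-1}>0$.
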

\begin{lemma}[Concentration of degree-loss in round $t$]
\label{lem:roundt-degree-stability}
Fix $\eps \geq 0$. Let $G \sim \mathds{G}(n, p)$ with $p=\frac{c \log^{1+\epsilon} n} {n}$. Suppose the degree regularity event $\ereg$ holds and there exists an absolute constant $\kappa$ such that:
\[
\frac{\Delta_{t-1}}{\delta_{t-1}} \le \kappa.
\]
Fix any $\lambda > 0$. Condition on $\calH_{t-1}$ and suppose \bstab~holds up to round $t$.
Then for every vertex $v \in U_{t-1}$, we have
\[
\Pr\big( X_t(v) \ge \kappa + \lambda \,\big|\, \calH_{t-1} \big)
\le \exp\!\Big(-\frac{\lambda^2}{8 \deg_{t-1}(v)}\Big).
\]
\end{lemma}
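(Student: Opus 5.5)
We follow the first-round argument of \Cref{lem:round1-degree-stability}, now working in the probability space conditioned on $\calH_{t-1}$. Given $\calH_{t-1}$, the residual graph $G_{t-1}$ is fixed. The round-$t$ proposals $\pi_t(x)$, $x\in U_{t-1}$, are independent, and each is uniform on $N_{t-1}(x)$, which is nonempty by \bstab. Fix $v\in U_{t-1}$ and write $d:=\deg_{t-1}(v)$ and $N_{t-1}(v)=\{u_1,\dots,u_d\}$.

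\textbf{Step 1: reduce to the neighbours' proposals.} Let $\mathcal R_t^{\mathrm{out}}(v)$ be the proposals of all vertices in $U_{t-1}\setminus N_{t-1}(v)$. This includes $v$ itself, so $v$'s proposal is fixed by this conditioning. Fix an arbitrary realization $r_t^{\mathrm{out}}$ and condition on it. We expose $\pi_t(u_1),\dots,\pi_t(u_d)$ one at a time, which defines the filtration $\fil{t}{k}{v}=\sigma(\pi_t(u_1),\dots,\pi_t(u_k))$ and the Doob martingale $\Zmart{t}{k}{v}=\E[X_t(v)\mid \calH_{t-1},r_t^{\mathrm{out}},\fil{t}{k}{v}]$ for $k=0,\dots,d$.

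\textbf{Step 2: bounded differences.} This is the same argument as in \Cref{lem:bounded-difference-round1}. Changing $\pi_t(u_k)$ alone can create or destroy at most one mutual pair containing $u_k$. The matched status of other vertices is unaffected, because a pair not containing $u_k$ depends only on its two endpoints' proposals. So the matched status of at most two vertices of $N_{t-1}(v)$ changes, and $X_t(v)$ changes by at most $2$. Hence $|\Zmart{t}{k}{v}-\Zmart{t}{k-1}{v}|\le 2$, and Azuma--Hoeffding (\Cref{fact:azuma}) gives
\[
\Pr\bigl(X_t(v)-\E[X_t(v)\mid \calH_{t-1},r_t^{\mathrm{out}}]\ge\lambda \,\big|\, \calH_{t-1},r_t^{\mathrm{out}}\bigr)\le \exp\!\Bigl(-\frac{\lambda^2}{8d}\Bigr).
\]

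\textbf{Step 3: remove the conditioning on $r_t^{\mathrm{out}}$.} This step is the main obstacle, because the centring term in Step 2 is the conditional mean given $r_t^{\mathrm{out}}$. That mean is not the unconditional mean $\E[X_t(v)\mid\calH_{t-1}]$, which \Cref{lem:exp-matched-neighbours-t} bounds by $\Delta_{t-1}/\delta_{t-1}\le\kappa$. We therefore bound the conditional mean directly, uniformly over $r_t^{\mathrm{out}}$. Fix $u_i\in N_{t-1}(v)$ and a candidate partner $w\in N_{t-1}(u_i)$.
\begin{itemize}
\item If $w$'s proposal is still random, i.e.\ $w$ is another neighbour of $v$, then $\Pr(\pi_t(u_i)=w,\ \pi_t(w)=u_i)=\frac{1}{\deg_{t-1}(u_i)\deg_{t-1}(w)}$.
\item If $w$'s proposal is fixed by $r_t^{\mathrm{out}}$, the event has probability $\frac{1}{\deg_{t-1}(u_i)}\mathds 1\{\pi_t(w)=u_i\}$.
\end{itemize}
Summing over $i$, the first type contributes at most $d/\delta_{t-1}\le\kappa$. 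The second type contributes $\sum_i \frac{1}{\deg_{t-1}(u_i)}\cdot\#\{w\notin N_{t-1}(v):\pi_t(w)=u_i\}$. In the worst case this sum is not $O(1)$, since many outside vertices could all propose into $N_{t-1}(v)$. So a clean uniform bound fails.

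I would instead avoid conditioning on $r_t^{\mathrm{out}}$ and build the Doob martingale over all of $U_{t-1}$. Under $\ereg$ and \bstab, a proposal $\pi_t(w)$ with $w$ at distance at least $3$ from $v$ cannot change $X_t(v)$, so only the proposals of $v$ and of vertices within distance two of $v$ matter. The price is more exposure steps, so we order the exposure so that $\pi_t(v)$ and the second-neighbourhood proposals come first. Those increments have small conditional variance, each being $O(1/\delta_{t-1})$ in expectation, but they do not obviously give the $8d$ denominator. If this is too lossy, the fallback is a stochastic-domination argument. Each outside proposal $\pi_t(w)$ lands in $N_{t-1}(v)$ independently with probability at most $|N_{t-1}(w)\cap N_{t-1}(v)|/\delta_{t-1}$. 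Averaging over $r_t^{\mathrm{out}}$ then shows that the conditional mean exceeds $\kappa$ only with negligible probability, and that slack is absorbed into the stated tail.

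In either version the final step is the same. Combining the centring bound, at most $\kappa$, with the tail $\exp(-\lambda^2/(8\deg_{t-1}(v)))$ and averaging over the outside randomness gives the statement.
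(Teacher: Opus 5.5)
\paragraph{Steps 1--2 and the gap in Step 3.} Your Steps 1--2 are the paper's proof. The paper says the round-1 argument of \Cref{lem:round1-degree-stability} carries over: condition on the proposals outside $N_{t-1}(v)$, expose the neighbours' proposals, use increments bounded by $2$, and apply Azuma. Your diagnosis in Step 3 is correct, and it applies to the paper's argument too. There, the bound centred at $\E[X\mid r^{\mathrm{out}}]$ becomes one centred at the unconditional mean only because it ``holds uniformly for every realization $r^{\mathrm{out}}$''. That does not follow, since the centre itself moves with $r^{\mathrm{out}}$.

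Your proposal does not close this gap either. Step 3 ends with two unfinished sketches. The fallback's key claim, that the slack ``is absorbed into the stated tail'', is unsupported. The target $\exp(-\lambda^2/(8\deg_{t-1}(v)))$ is exactly the Azuma bound with centre at most $\kappa$, so it leaves no room for an additive term such as $\Pr(\E[X_t(v)\mid\calH_{t-1},r_t^{\mathrm{out}}]>\kappa+s)$. Nor does it leave room for an upward shift of the centre on bad realizations. Carried out carefully, that route gives only a weaker inequality: a worse constant in the exponent plus an additive error. That would suffice for \Cref{lem:additive-loss-t-round} after enlarging $R$, but it is not the statement.

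\paragraph{The missing idea.} It is in your first alternative, which you dismissed too quickly: the outside increments are bounded almost surely, not just small in variance. Expose, in order, the proposals of the vertices $w$ at distance two from $v$ in $G_{t-1}$, then $\pi_t(v)$, then $\pi_t(u_1),\dots,\pi_t(u_d)$. While no $\pi_t(u_i)$ is exposed,
\[
\E[I_t(u_i)\mid\cdot\,]=\frac{1}{\deg_{t-1}(u_i)}\sum_{x\in N_{t-1}(u_i)}\Pr(\pi_t(x)=u_i\mid\cdot\,).
\]
Revealing $\pi_t(w)$ therefore only replaces $1/\deg_{t-1}(w)$ by $\mathds 1\{\pi_t(w)=u_i\}$ in these sums, and $\pi_t(w)$ equals at most one $u_i$. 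So that increment lies in an interval of length at most $1/\delta_{t-1}$, and the same holds for the $\pi_t(v)$ step. There are at most $d\Delta_{t-1}$ such $w$, and $d\le\Delta_{t-1}\le\kappa\delta_{t-1}$. These steps therefore contribute at most $\kappa^2+1$ to the sum of squared ranges. The neighbour steps contribute $4d$, since each has range at most $2$ by your Step 2. The range form of Azuma--Hoeffding, via \Cref{fact:eps-zero-hoeffding}, then gives
\[
\Pr\bigl(X_t(v)-\E[X_t(v)\mid\calH_{t-1}]\ge\lambda\,\big|\,\calH_{t-1}\bigr)\le\exp\Bigl(-\frac{2\lambda^2}{4d+\kappa^2+1}\Bigr)\le\exp\Bigl(-\frac{\lambda^2}{8d}\Bigr)
\]
whenever $12d\ge\kappa^2+1$, which holds because $d\ge\beta d_0$. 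The absolute-increment form in \Cref{fact:azuma} would only give the denominator $8d+2\kappa^2+2$, which is why the range form is needed. Combined with $\E[X_t(v)\mid\calH_{t-1}]\le\kappa$ from \Cref{lem:exp-matched-neighbours-t}, this is exactly the statement, with no conditioning on $r_t^{\mathrm{out}}$ at all.
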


We now iterate these bounds to control the cumulative degree loss over multiple rounds. In particular, we show that with high probability the degree of every
vertex decreases by at most a fixed additive amount in each round.

\begin{lemma}[Additive loss of degree over multiple rounds]
\label{lem:additive-loss-t-round}
Fix $\eps \geq 0$. Let $G \sim \mathbb{G}(n,p)$ with $p = \frac{c \log^{1+\eps} n}{n}$ and let $R := 8 + \sqrt{160 c \log^{2+\epsilon} n}$. Suppose that $TR < \frac{c}{4}\log ^{1+\epsilon} n$.

Then for every round $t \le T$ and every vertex $v \in U_t$, we have 
\[
\deg_t(v) \ge \deg_0(v) - tR
\]
with probability at least $1 - (T+1)\cdot n^{-9}$.
\end{lemma}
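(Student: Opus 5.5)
The plan is an induction on $t$, run on the degree-regularity event $\ereg$ from \Cref{fact:deg-conc}. The inductive hypothesis is the event
\[
\mathcal A_t := \{\deg_s(v) \ge \deg_0(v) - sR \text{ for all } s\le t \text{ and all } v\in U_s\}.
\]
The base case $t=0$ is trivial. Combined with $\ereg$ (so $\frac{c}{2}d_0 \le \deg_0(v)\le 2c d_0$) and the hypothesis $TR<\frac c4 d_0$, the event $\mathcal A_{t-1}$ for $t\le T$ gives
\[
\delta_{t-1}\ \ge\ \tfrac{c}{2}d_0 - (t-1)R\ \ge\ \tfrac c4 d_0 .
\]
Thus the $\beta$-stability invariant holds up to round $t$ with $\beta=c/4$. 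Since residual degrees only decrease, $\Delta_{t-1}\le\Delta_0\le 2cd_0$, and so $\Delta_{t-1}/\delta_{t-1}\le 8=:\kappa$. This is precisely where the constant $8$ in $R$ comes from.

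For the inductive step, fix a history in $\calH_{t-1}$ consistent with $\ereg\cap\mathcal A_{t-1}$. The hypotheses of \Cref{lem:exp-matched-neighbours-t} and \Cref{lem:roundt-degree-stability} then hold with $\kappa=8$. Apply \Cref{lem:roundt-degree-stability} with $\lambda=\sqrt{160c\log^{2+\epsilon}n}$, so that $\kappa+\lambda=R$. Using $\deg_{t-1}(v)\le 2cd_0=2c\log^{1+\epsilon}n$,
\[
\Pr\bigl(X_t(v)\ge R\mid\calH_{t-1}\bigr)\le\exp\!\Big(-\frac{160c\log^{2+\epsilon}n}{16c\log^{1+\epsilon}n}\Big)=\exp(-10\log n)=n^{-10}.
\]
A union bound over the at most $n$ vertices of $U_{t-1}$ shows that, conditionally, with probability at least $1-n^{-9}$ every $v\in U_{t-1}$ has $X_t(v)<R$. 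For $v\in U_t$ we then get $\deg_t(v)=\deg_{t-1}(v)-X_t(v)\ge \deg_0(v)-tR$, which establishes $\mathcal A_t$.

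Since the conditional bound is uniform over admissible histories, integrating gives $\Pr(\mathcal A_{t-1}\setminus\mathcal A_t,\ \ereg)\le n^{-9}$. Summing over $t=1,\dots,T$ gives failure probability at most $Tn^{-9}$ on $\ereg$. I would absorb the cost of $\ereg$ into the remaining $n^{-9}$ term via \Cref{fact:deg-conc}, presumably its failure probability is at most $n^{-9}$ for $c$ large, yielding total $(T+1)n^{-9}$.

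The main obstacle is that \Cref{lem:roundt-degree-stability} is a conditional statement requiring $\beta$-stability and the ratio bound at round $t$, which are themselves random. The induction has to be phrased carefully, always conditioning on histories in the good event $\mathcal A_{t-1}$, so that the lemma is legitimately invoked and the union over rounds telescopes. One should also check that the stated bound $TR<\frac c4 d_0$ suffices to keep $\delta_{t-1}$ at least a constant times $d_0$, so that $\kappa$ is a genuine constant. If $\ereg$ in \Cref{fact:deg-conc} is only known to hold with probability $1-o(1)$, the statement should be read as conditional on $\ereg$, or with that term added.
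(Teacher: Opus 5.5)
Your proposal is correct and follows the paper's proof essentially step for step. Both arguments run an induction on the good event under $\ereg$, use $\delta_{t-1}\ge \frac c4 d_0$ to get $\kappa=8$, apply \Cref{lem:roundt-degree-stability} with $\lambda=\sqrt{160c\log^{2+\epsilon}n}$ and $\deg_{t-1}(v)\le 2c\log^{1+\epsilon}n$ to obtain $n^{-10}$, take union bounds over vertices and rounds, and absorb $\Pr(\ereg^c)$ into the extra $n^{-9}$ term. The differences are cosmetic: you start the induction at the trivial case $t=0$ rather than treating $t=1$ separately, and your hedge about $\ereg$ is unnecessary since \Cref{fact:deg-conc} gives $\Pr(\ereg^c)\le n^{-10}$.
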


Consequently, the minimum degree of the residual graph remains large
for a number of rounds, implying that the \bstab~invariant
holds during this period.

\begin{corollary}[\bstab~for $\Theta(\log^{\epsilon/2} n)$ rounds]
\label{cor:weak-iterative-bound}
Fix $\epsilon \geq 0$. Let $G \sim \mathbb{G}(n,p)$ with $p = \frac{c\log^{1+\epsilon} n}{n}$.
Then \bstab~holds up to
\[
T = \Theta(\log^{\epsilon/2} n)
\]
rounds with probability at least $1-o(1)$.
\end{corollary}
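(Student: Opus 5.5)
The plan is to derive the corollary directly from \Cref{lem:additive-loss-t-round}, working on the degree-regularity event $\ereg$ (\Cref{fact:deg-conc}), which holds with probability $1-o(1)$. First we fix the stability constant: on $\ereg$ every vertex has $\deg_0(v)\ge pn/2 = \frac{c}{2}d_0$. We take $\beta := c/4$, so $\beta d_0 = \frac{c}{4}\log^{1+\epsilon}n$.

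Next we choose $T$ to meet the hypothesis $TR<\frac{c}{4}\log^{1+\epsilon}n$. Since $R = 8+\sqrt{160c\log^{2+\epsilon}n} = \Theta(\log^{1+\epsilon/2}n)$, it suffices to set
\[
T := \left\lfloor \frac{\frac{c}{4}\log^{1+\epsilon}n}{2R}\right\rfloor = \Theta\!\left(\frac{\log^{1+\epsilon}n}{\log^{1+\epsilon/2}n}\right)=\Theta(\log^{\epsilon/2}n).
\]
This gives $TR\le \frac{c}{8}\log^{1+\epsilon}n$. If $\epsilon=0$, then $T$ is a constant; we may need to adjust constants so that $T\ge1$, or simply treat the statement as trivial in that case.

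We then apply \Cref{lem:additive-loss-t-round}. With probability at least $1-(T+1)n^{-9}=1-o(1)$, for every $t\le T$ and every $v\in U_t$,
\[
\deg_t(v)\ge \deg_0(v)-tR \ge \tfrac{c}{2}d_0-\tfrac{c}{8}d_0 = \tfrac{3c}{8}d_0\ge \beta d_0 .
\]
This covers the indices $t-1\in\{0,\dots,T-1\}$ required by \Cref{def:stability-invariant}, together with round $0$ itself, which holds directly from $\ereg$. So \bstab\ holds up to round $T$. A union bound with the failure probability of $\ereg$ then yields overall success probability $1-o(1)$.

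The only delicate point is bookkeeping rather than any real obstacle. We must make sure that the probability in \Cref{lem:additive-loss-t-round} already accounts for, or is compatible with, the conditioning on $\ereg$. If that lemma is stated conditionally on $\ereg$, we multiply by $\Pr(\ereg)$. We must also check that the lemma's internal use of \bstab\ (through \Cref{lem:roundt-degree-stability}) is consistent with the $\beta$ chosen here. The lemma is essentially an induction on $t$ whose inductive hypothesis $\deg_{t}(v)\ge \deg_0(v)-tR\ge \frac{c}{4}d_0$ is exactly the $\beta$-stability needed at the next round. Choosing $\beta=c/4$, which matches the threshold $\frac{c}{4}\log^{1+\epsilon}n$ in its hypothesis, keeps the two consistent.
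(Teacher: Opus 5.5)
Your proposal is correct and matches the paper's proof essentially step for step. Both arguments pick $T=\Theta(\log^{\epsilon/2}n)$ so that $TR$ is a small constant multiple of $\log^{1+\epsilon}n$ (the paper takes $T=\lfloor \frac{1}{16}\sqrt{c/160}\,\log^{\epsilon/2}n\rfloor$, you take $T=\lfloor \frac{c}{8}\log^{1+\epsilon}n/R\rfloor$), apply \Cref{lem:additive-loss-t-round}, and use $\deg_0(v)\ge\frac{c}{2}d_0$ on $\ereg$ to get $\deg_t(v)\ge\frac{c}{4}d_0$, i.e.\ $\beta=c/4$; your bookkeeping remarks (consistency of $\beta$ with the lemma's internal threshold, and $T\ge 1$ when $\epsilon=0$ requiring $c$ large) are accurate points the paper leaves implicit.
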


The proofs of \Cref{lem:additive-loss-t-round} and \Cref{cor:weak-iterative-bound} are given in \Cref{app:iterating-deg-pos}.

\begin{remark}
\label{rem:weak-bound-limit}
While \Cref{lem:additive-loss-t-round} and \Cref{cor:weak-iterative-bound} hold for any $\epsilon \ge 0$, they are inherently limited by the coarse concentration of the Azuma--Hoeffding inequality. Specifically, because this approach allows for deviations of order $\sqrt{d_0 \log n}$ in each round, the iterative argument only yields \bstab~for $T = O(\sqrt{d_0 / \log n})$ rounds. In the case where $\epsilon = 0$, this simply ensures that no vertex becomes isolated for a constant number of rounds with high probability.
\end{remark}

%% file: variance-bound.tex
\newpage
\section{Strong Local Concentration via Structural Sparsity of $\gnp$}\label{sec:strong-conc}

In this section we exploit structural sparsity of \erenyi~random graphs to obtain strong concentration bounds for the per-round degree loss.
\smallskip

In contrast to the Azuma--Hoeffding bound obtained earlier, which yields control over deviations of order $\sqrt{d_0\log n}$, we obtain that deviations of order $O(\log n)$  occur with exponentially small tails. This is done by exploiting the structural sparsity of $\gnp$, which ensures that the variance of the degree loss in each round is $O(1)$. Combined with a martingale argument, this yields a Freedman-type concentration bound. This stronger control of per-round degree loss serves as the local ingredient for controlling the evolution of vertex degrees across many rounds.
\smallskip

To achieve this, we must control correlations between the matching events of different
neighbours of a vertex. Two neighbours $u$ and $w$ may either be adjacent or share a common neighbour $z$, introducing correlations between the indicators $I_t(u)$ and $I_t(w)$. We capture these interactions using a dependency graph defined on the neighbours of a vertex.

\begin{definition}[Dependency Graph for In-Round Indicators]
\label{def:dependency_graph}
Fix a round $t$ and a vertex $v \in U_{t-1}$.
The \emph{dependency graph} $\dtv$ is the graph with vertex set
$N_{t-1}(v)$ in which two distinct vertices $u,w \in N_{t-1}(v)$ are
adjacent if either
\begin{itemize}
    \item $u \sim w$, or
    \item there exists a vertex $z \notin \{u,w,v\}$ such that
    $uz, wz \in E(G_{t-1})$.
\end{itemize}
\end{definition}

Bounding the maximum degree of $\dtv$
will be the key structural input for our variance-based concentration arguments. 

We now define the structural event that captures this bounded
dependency property.

\begin{definition}[Structural Event ($\mathcal E_{\mathrm{struct}}$)]
\label{def:structural_event}
  Let $\mathcal E_{\mathrm{struct}}$ denote the event that
  for every vertex $v$ in the graph $G$ and every neighbour
  $u \in N_0(v)$, there is at most one vertex
  $w \in N_0(v)\setminus\{u\}$ such that $u$ and $w$ are adjacent or have a common neighbour other than $v$, i.e.,
\[
\mathcal E_{\mathrm{struct}}
=
\left\{
\forall v \in V(G),\ \forall u \in N_0(v):
\left|
\left\{
w \in N_0(v)\setminus\{u\}
:
\left( u \sim w \ \text{or}\ N_0(u)\cap N_0(w)\setminus\{v\} \neq \emptyset \right)\right\}
\right|
\le 1
\right\}.
\]
\end{definition}

\medskip

We remark that the role of the random graph model $\gnp$ in our analysis
is primarily to guarantee certain structural properties with high probability.
In particular, the key ingredient is the bounded dependency structure
captured by the event $\mathcal E_{\mathrm{struct}}$, which ensures that
the dependency graph among neighbours of a vertex has maximum degree at most one.
This condition can be viewed as enforcing a form of local sparsity,
in which interactions between neighbours are limited to isolated pairs.
\smallskip

Conditioned on such structural properties, the analysis of the matching process
relies only on the randomness of the algorithm itself (i.e., the proposal choices),
rather than on additional randomness from the graph. The $\gnp$ model thus serves
as a convenient framework in which these properties hold with high probability.
\smallskip

It would be interesting to understand to what extent our analysis extends to
broader classes of graphs satisfying similar structural conditions.

\medskip
\subsection{Local Dependency Graph is a Matching}\label{sec:dependancy-graph}
We now show that the structural event $\estr$ holds with high probability in $\gnp$. As a consequence, w.h.p., for every vertex $v$ and round $t$, the corresponding dependency graph $\dtv$ has maximum degree at most $1$, and is therefore a matching.
\begin{lemma}[Bounded Local Dependency]
\label{lem:local_dependency}
Fix $\epsilon \geq 0$. Let $G \sim \gnp$ with
$p=\frac{c\log^{1+\epsilon} n}{n}$.
Then
\[
\Pr(\mathcal E_{\mathrm{struct}}) = 1-o(1).
\]
\end{lemma}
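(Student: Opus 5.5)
The plan is a first-moment (union bound) argument over small "witness" subgraphs. If $\estr$ fails, there exist distinct vertices $v,u,w_1,w_2$ with $u,w_1,w_2\in N_0(v)$ such that, for each $i\in\{1,2\}$, either $uw_i\in E(G)$ or there is a vertex $z_i\notin\{u,w_i,v\}$ with $uz_i,w_iz_i\in E(G)$.

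Let $H\subseteq G$ be the union of three pieces: the star $\{vu,vw_1,vw_2\}$, the connection witnessing $u$--$w_1$ (the edge $uw_1$ or the path $u z_1 w_1$), and the connection witnessing $u$--$w_2$. Then $H$ is connected and has $k:=|V(H)|\le 6$ vertices. We allow $z_1=z_2$, and we allow $z_1=w_2$ or $z_2=w_1$; these are exactly the coincidences permitted by the definition.

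The key structural claim is that every such $H$ satisfies $e(H)\ge k+1$, i.e.\ its cyclomatic number is at least $2$. To see this, delete the edge $vw_2$: the graph stays connected, because $w_2$ is still joined to $u$ by its connection and $u$ is joined to $v$. The witness $u$--$w_2$ does not use $v$, since $z_2\neq v$ by definition, so this deletion does not break it. Next delete $vw_1$: the graph stays connected for the same reason, using the $u$--$w_1$ witness, which also avoids $v$ and the edge $vw_2$. A connected graph that remains connected after deleting two edges in succession has at least $(k-1)+2=k+1$ edges. This settles all coincidence cases at once, so no separate case analysis is needed.

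Next, count. Only finitely many labelled graph types $F$ can arise: at most $6$ vertices, with a bounded number of ways to identify $z_1,z_2$ with each other or with $w_2,w_1$. For a fixed type $F$ with $k$ vertices and $m\ge k+1$ edges, the expected number of copies of $F$ in $G$ is at most
\[
n^{k}p^{m}\le n^k p^{k+1} = \frac{(c\log^{1+\epsilon}n)^{k+1}}{n} = O\!\left(\frac{\log^{7(1+\epsilon)}n}{n}\right),
\]
using $np\ge 1$ and $k\le 6$. Summing over the $O(1)$ types and applying Markov's inequality, the probability that some witness subgraph appears in $G$ is $O(\mathrm{polylog}(n)/n)=o(1)$. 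Hence $\Pr(\estr)=1-o(1)$, for every fixed $\epsilon\ge0$.

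The only step needing care is the claim $e(H)\ge |V(H)|+1$ under all possible vertex coincidences; a naive enumeration of cases is error-prone. The two-edge-deletion connectivity argument above handles every case uniformly, and I would present the proof through that argument.
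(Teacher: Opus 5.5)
Your proof is correct and reaches the same final bound, $O(\log^{7(1+\epsilon)}n/n)$, as the paper, but it organizes the first-moment argument differently.

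The paper conditions on $\ereg$ and fixes $v$ and $u\in N_0(v)$. It bounds a single interaction by $p+np^2$. A pair $w,w'$ is then handled by an explicit case split: adjacency, mixed, shared witness $z$, and distinct witnesses $z,z'$. This gives $O(p^2+np^3+n^2p^4)$ per pair, followed by union bounds over the $O(d_0^2)$ pairs and the $O(nd_0)$ choices of $(v,u)$.

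You instead sum unconditionally over all vertex tuples, folding the star edges $vu,vw_1,vw_2$ into the witness graph $H$. You replace the case split by one observation: $H-\{vw_1,vw_2\}$ stays connected. Hence $e(H)\ge |V(H)|+1$, and each of the $O(1)$ types contributes at most $(np)^{|V(H)|+1}/n$.

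Your route buys rigour at two points where the paper is loose. First, the paper treats the edges among $u,w,w'$ and their witnesses as independent $p$-coins while also conditioning on $\ereg$ and on $u,w,w'\in N_0(v)$. Second, it leaves coincidences such as $z=w$, $z=w'$ or $z'=w$ implicit; they do fit inside its bound, but this is never checked. Your count over fixed tuples has exact independence, and your two-edge-deletion argument covers every coincidence at once.

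The paper's version, in exchange, makes the per-vertex estimate $\E[Y_0(u)]=o(1)$ visible. It also shows which configuration dominates: two distinct witnesses, which is your type with $|V(H)|=6$ and $e(H)=7$.
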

\begin{proof}
We condition on $\mathcal E_{\mathrm{reg}}$,
which holds with probability $1-o(1)$.

Fix a vertex $v$ and consider $u \in N_0(v)$. Let $d = \deg_0(v)$.

We define

\[
Y_0(u) :=
\left|
\left\{
w \in N_0(v)\setminus\{u\}
:
\left( u \sim w \ \text{or}\ N_0(u)\cap N_0(w)\setminus\{v\} \neq \emptyset \right)
\right\}
\right|.
\]
For each $w \in N_0(v)\setminus\{u\}$ define
\[
\diamd{0}{u}{w}
=
\mathds{1}\{
u \sim w \ \text{or}\ \exists z \notin \{u,w,v\}
\text{ such that } uz, wz \in E(G)
\}.
\]
Thus, 
\[
Y_0(u) = \sum_{w} \diamd{0}{u}{w}.
\]
We note that the lemma is equivalent to showing
\[
\Pr(\exists v,u : Y_0(u) \ge 2) = o(1).
\]
The remainder of the proof bounds this probability.

Fix $u,w$ and analyse $\diamd{0}{u}{w}$. We have
\begin{align}
\Pr(\diamd{0}{u}{w}=1)
\le
\Pr(u \sim w) + \Pr(\exists z \notin \{u,w,v\} : uz, wz \in E(G)).\label{eqn:prob-split}
\end{align}
The first term in \cref{eqn:prob-split} is $p$. For the second term, the number of common neighbours
of $u$ and $w$ outside $\{v\}$ is distributed as $\mathrm{Bin}(n-3,p^2)$. Therefore,
\[
\Pr(\exists z : uz, wz \in E)
\le
(n-3)p^2.
\]
Thus we have,
\[
\Pr(\diamd{0}{u}{w}=1)
\le
p + (n-3)p^2
=
O\left(\frac{\log^{2+2\epsilon} n}{n}\right).
\]

This implies,
\[
\E[Y_0(u)]
\le
O\left(d \cdot  \frac{\log^{2+2\epsilon} n}{n}\right)
=
o(1),
\]
where we used $ d_0 = \Theta(\log^{1+\epsilon} n)$.

If $Y_0(u) \ge 2$, then there exist distinct $w,w'$
such that $\diamd{0}{u}{w}=\diamd{0}{u}{w'}=1$.
By union bound we have,
\begin{align}
\Pr(Y_0(u) \ge 2)
\le
\sum_{w\ne w'} \Pr(\diamd{0}{u}{w}=1 \text{ and } \diamd{0}{u}{w'}=1).\label{eq:union1}
\end{align}
For a fixed pair of distinct neighbours $w, w' \in N_0(v) \setminus \{u\}$, 
the joint event $(\diamd{0}{u}{w}=1$ and $\diamd{0}{u}{w'}=1)$
occurs if $u$ interacts with both $w$ and $w'$ either via adjacency
or via shared neighbours.

We bound $\Pr(\diamd{0}{u}{w}=1 \text{ and } \diamd{0}{u}{w'}=1)$
by considering the following cases.

\paragraph{Case A (Adjacency).}

Both interactions occur via direct edges, i.e.,
$u \sim w$ and $u \sim w'$.

This requires the existence of both the edges
$uw$ and $uw'$. Since each such edge appears independently with probability $p$, this event occurs with probability at most $O(p^2)$.

\paragraph{Case B (Mixed Interaction).}

One interaction occurs via adjacency and the other via a shared neighbour.
\smallskip

Without loss of generality, suppose $u \sim w$ and there exists a vertex
$z \notin \{u,w',v\}$ such that $uz, w'z \in E$.

\smallskip

This requires the existence of three edges: $uw, uz,$ and $w'z$.
As there are at most $n$ choices for $z$, we have
\[
\Pr(\exists z : uw, uz, w'z \in E)
\le
n p^3.
\]
The symmetric case where $u \sim w'$ and $u$ shares a neighbour with $w$
is handled identically. Thus, by a union bound,
\[
\Pr(\text{Case B})
\le
2 n p^3.
\]
\paragraph{Case C (Shared Witness).}

A single vertex $z$ is adjacent to all three vertices $u, w,$ and $w'$.

This requires the existence of three distinct edges $uz, wz,$ and $w'z$.
As there are at most $n$ choices for $z$, we have
\[
\Pr \left (\exists z : uz, wz, w'z \in E \right )\leq np^3.
\]
\paragraph{Case D (Distinct Witnesses).}

There exist distinct vertices $z, z'$ such that
$z \in N_0(u) \cap N_0(w)$ and
$z' \in N_0(u) \cap N_0(w')$.
\smallskip

This requires the existence of four distinct edges
$uz, wz, uz',$ and $w'z'$.
As there are at most $n^2$ choices for $(z, z')$, we have
\[
\Pr \left (\exists z, z' : uz, wz, uz', w'z' \in E \right )\leq n^2p^4.
\]

Putting together the probabilities derived in the above cases we have that
\begin{align}
\Pr(\diamd{0}{u}{w}=1 \text{ and } \diamd{0}{u}{w'}=1)
=
O(p^2+3np^3 + n^2 p^4)
=
O\left(\frac{\log^{4+4\epsilon} n}{n^2}\right).\label{eq:union2}
\end{align}
Using \cref{eq:union2} and the fact that there are at most $O(d^2)$ unordered pairs $(w,w')$, the bound in \cref{eq:union1} simplifies to
\[
\Pr(Y_0(u) \ge 2)
=
O\left(
\frac{\log^{6+6\epsilon} n}{n^2}
\right).
\]
As the number of ordered pairs $(v,u)$
is $O(n\log^{1+\epsilon} n)$,
so a union bound gives
\[
\Pr(\exists v,u : Y_0(u) \ge 2)
=
O\left(
\frac{\log^{7+7\epsilon} n}{n}
\right)
=
o(1),
\]
thus proving the lemma.
\end{proof}

Since $G_t$ is always a subgraph of $G$, no new neighbours can appear during the process. Thus, if $\mathcal E_{\mathrm{struct}}$ holds for the graph $G$,
then for every round $t$ and every vertex $v$,
the dependency graph $\dtv$
has maximum degree at most $1$. We thus have the following lemma.

\begin{lemma}[Dynamic Bounded Dependency]
\label{lem:dynamic-dependency}
Assume that the structural event
$\mathcal E_{\mathrm{struct}}$ holds.
Then for every round $t$ and every vertex $v$,
the dependency graph $\dtv$
has maximum degree at most $1$, and is therefore a matching.
\end{lemma}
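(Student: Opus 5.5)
The plan is a monotonicity argument: every edge of $\dtv$ is witnessed by a structure in $G$ that is already counted in $\estr$, so the degree bound transfers from $G$ to $\dtv$. Fix a round $t$ and a vertex $v\in U_{t-1}$. The first step is to record that $G_{t-1}=G[U_{t-1}]$ is an induced subgraph of $G$. Consequently $N_{t-1}(x)=N_0(x)\cap U_{t-1}\subseteq N_0(x)$ for every $x\in U_{t-1}$, and $E(G_{t-1})\subseteq E(G)$. In particular, the vertex set $N_{t-1}(v)$ of $\dtv$ is a subset of $N_0(v)$.

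Next I would show that adjacency in $\dtv$ implies the relation counted in $\estr$. Take $u,w\in N_{t-1}(v)$ adjacent in $\dtv$. By \Cref{def:dependency_graph} there are two cases.
\begin{itemize}
\item If $u\sim w$ in $G_{t-1}$, then $uw\in E(G)$, so $u\sim w$ in $G$.
\item If there is some $z\notin\{u,w,v\}$ with $uz,wz\in E(G_{t-1})$, then $uz,wz\in E(G)$. Hence $z\in N_0(u)\cap N_0(w)\setminus\{v\}$, and this set is nonempty.
\end{itemize}
In both cases $w\in N_0(v)\setminus\{u\}$ satisfies the condition appearing in the definition of $\estr$ relative to the pair $(v,u)$, where $u\in N_0(v)$.

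Therefore the neighbourhood of $u$ in $\dtv$ is contained in the set
\[
\bigl\{w\in N_0(v)\setminus\{u\} : u\sim w \text{ or } N_0(u)\cap N_0(w)\setminus\{v\}\neq\emptyset\bigr\}.
\]
On $\estr$ this set has size at most $1$. So every vertex of $\dtv$ has degree at most $1$.

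A graph with maximum degree at most $1$ is a disjoint union of isolated vertices and single edges, i.e.\ a matching. Since $t$, $v$ and $u$ were arbitrary, this proves the lemma.

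The argument is essentially deterministic, and I expect no real obstacle. The one point needing care is that the witness $z$ for $\dtv$ is required to avoid $\{u,w,v\}$, while $\estr$ only removes $v$ from the common neighbourhood. This is harmless, since the $\estr$ condition is the weaker one, so the containment still holds.
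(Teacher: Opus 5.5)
Your proof is correct and is essentially the paper's argument: the paper justifies the lemma in one line by noting that $G_{t-1}$ is a subgraph of $G$, so no new adjacencies or common neighbours can appear and every edge of $\dtv$ is already counted by $\estr$, which is exactly the containment you spell out. Your caveat about the witness set is even more harmless than you say, since a common neighbour of $u$ and $w$ can never be $u$ or $w$ itself, so the two witness conditions coincide.
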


\medskip
\subsection{Variance of Degree-Loss is Bounded}\label{sec:bounded-variance}

We now use the bounded dependency structure to control correlations between matching events of neighbours of a vertex $v$, thereby bounding the variance of the degree loss.

\smallskip

A priori, since $v$ has $\Theta(d_0)$ neighbours, the variance of $X_t(v)=\sum_{u\in N_{t-1}(v)} I_t(u)$ could be as large as $\Theta(d_0)$. However, the dependencies among the indicators $I_t(u)$ are tightly constrained. Competition among neighbours to match with $v$ is self-limiting, as $v$ can be matched to at most one vertex, while all remaining dependencies are sparse and captured by the dependency graph $\dtv$, which has maximum degree at most one. Together, these properties reduce the conditional variance of $X_t(v)$ to $O(1)$.
\smallskip
\begin{lemma}[Conditional Variance of Degree Loss]
\label{lem:variance_loss_global}
Fix $\epsilon\geq 0$. Let $G \sim \mathbb{G}(n,p)$ with $p = \frac{c\log^{1+\epsilon} n}{n}$.
Assume that the events $\ereg$ and $\estr$ hold.
Fix a vertex $v$ and a round $t$. Condition on the history
$\calH_{t-1}$ and assume that the $\beta$-stability invariant holds up to round $t$.

Then,
\[
\mathrm{Var}(X_t(v) \mid \calH_{t-1}) = O(1).
\]
\end{lemma}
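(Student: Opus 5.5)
Throughout, work conditionally on $\calH_{t-1}$, so that the round-$t$ proposals are independent and uniform over the current neighbourhoods in $G_{t-1}$. Write $N:=N_{t-1}(v)$. Under $\ereg$ and $\beta$-stability every $u\in U_{t-1}$ has
\[
\beta d_0\le \deg_{t-1}(u)\le \Delta_0=O(d_0).
\]
In particular every vertex of $G_{t-1}$ is matched in round $t$ with probability at most $1/\delta_{t-1}=O(1/d_0)$. We use the law of total variance with respect to $\pi_t(v)$:
\[
\Var(X_t(v)\mid\calH_{t-1})=\E\bigl[\Var(X_t(v)\mid \calH_{t-1},\pi_t(v))\bigr]+\Var\bigl(\E[X_t(v)\mid\calH_{t-1},\pi_t(v)]\bigr).
\]
For the second term, note that $0\le X_t(v)\le |N|$ and that the conditional mean given $\pi_t(v)=x$ changes only through the edge $vx$. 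I would show that $\E[X_t(v)\mid \pi_t(v)=x]=O(1)$ uniformly in $x$ by repeating the computation of \Cref{lem:exp-matched-neighbours-t}: fixing $\pi_t(v)$ only alters the matching probabilities of $x$ and of vertices adjacent to $v$ through $v$, and the latter is a change of $O(1/d_0)$ each. A random variable bounded by $O(1)$ has variance $O(1)$, so this term is settled.

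For the first term, fix $\pi_t(v)=x$. Then $I_t(u)=\mathds 1\{\exists w\in N_{t-1}(u):\pi(u)=w,\pi(w)=u\}$ is a function of the proposals of $u$ and its neighbours. Since $v$'s proposal is fixed, $v$ no longer couples different neighbours: two neighbours $u,u'$ can interact only through an edge $uu'$ or a common neighbour $z\ne v$. So, after removing $v$, $I_t(u)$ and $I_t(u')$ are independent whenever $u,u'$ are non-adjacent in $\dtv$. Strictly, independence requires disjoint sets of determining proposals, and $N[u]\cap N[u']\setminus\{v\}=\emptyset$ is exactly what non-adjacency in $\dtv$ means. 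By \Cref{lem:dynamic-dependency}, $\dtv$ is a matching. Hence $X_t(v)=\sum_i Y_i$ over blocks of size at most $2$, and distinct blocks are independent given $\pi_t(v)$. Therefore
\[
\Var(X_t(v)\mid\pi_t(v))=\sum_i\Var(Y_i)\le\sum_i \E[Y_i^2]\le 4\sum_{u\in N}\Pr(I_t(u)=1\mid\pi_t(v)).
\]
The right-hand side is $4\,\E[X_t(v)\mid \pi_t(v)]=O(1)$ by the bound above.

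The main obstacle is making the independence claim rigorous. The set of proposals determining $I_t(u)$ is $\{\pi(y):y\in N_{t-1}(u)\cup\{u\}\}$, which includes $\pi(v)$ (fixed) and $\pi(u')$ whenever $u'\sim u$. For non-adjacent blocks these sets are disjoint apart from $v$, and this is precisely the definition of $\dtv$. I would also check the edge case $u=x$: it contributes a single indicator of $O(1)$ size, which is harmless. Finally, I need a uniform bound $\Pr(I_t(u)=1\mid \pi_t(v)=x)=O(1/d_0)$ for $u\neq x$. Conditioning on $\pi_t(v)=x$ removes the possibility that $u$ matches with $v$, and leaves the other terms $\frac{1}{\deg(u)\deg(w)}$ untouched, giving at most $1/\delta_{t-1}$. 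For $u=x$ the probability is at most $1$. Summing over $N$ then gives $O(\Delta_0/(\beta d_0))+1=O(1)$.
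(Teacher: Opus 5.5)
Your proposal is correct, but it organizes the argument differently from the paper.

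\textbf{The paper's route.} It does not condition on $\pi_t(v)$ in this lemma. Instead it splits $I_t(u)=\mat{t}{v}{u}+\nmat{t}{v}{u}$ according to whether $u$ is matched to $v$ or elsewhere. It notes that $X_t^{(v)}\in\{0,1\}$ has variance at most $1$. It then expands $\Var(X_t^{(\neg v)})$ into variances plus covariances, argues that $\nmat{t}{v}{u}$ and $\nmat{t}{v}{w}$ are uncorrelated unless $uw$ is an edge of $\dtv$, and bounds each of the $O(d_0)$ surviving covariances by $1/(\beta d_0)$. Finally it controls $\mathrm{Cov}(X_t^{(v)},X_t^{(\neg v)})$ by Cauchy--Schwarz.

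\textbf{Your route.} You remove $v$'s global coupling by conditioning on $\pi_t(v)$ and applying the law of total variance. The between-term is $O(1)$ because the conditional mean is uniformly $O(1)$. The within-term follows from independence of the $\dtv$-blocks given $\pi_t(v)$, together with $\E[Y_i^2]=O(\E[Y_i])$. This absorbs the intra-block covariances without estimating them separately, and it needs no Cauchy--Schwarz step.

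\textbf{What each buys.} Your route proves the stronger pointwise statement $\Var(X_t(v)\mid\calH_{t-1},\pi_t(v))=O(1)$ for every value of $\pi_t(v)$. This is exactly what \Cref{lem:bernstein-inround} needs. The paper remarks there that the present lemma only controls the average over $\pi_t(v)$. It therefore re-derives the pointwise bound via the quadratic variation of the block martingale, using the same blocks and the same $Y_k^2\le 2Y_k$ trick you use. The paper's route, for its part, avoids conditioning on $v$'s proposal. It also makes the ``self-limiting competition for $v$'' visible as a single $\{0,1\}$-valued term.
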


\begin{proof}
By assumption $\estr$ holds. We condition on the history
$\calH_{t-1}$. Under this conditioning the residual graph $G_{t-1}$
and the neighbourhood $N_{t-1}(v)$ are fixed, while the proposal
choices in round $t$ remain random. Suppose that the $\beta$-stability invariant holds up to round $t$, i.e., we have
\begin{equation}
\delta_{t-1} \ge \beta d_0.\label{eq:degree-bounds}
\end{equation}

Recall that
\begin{align}
X_t(v)=\sum_{u\in N_{t-1}(v)} I_t(u).\label{eq:xtv-nbd}
\end{align}

For each $u\in N_{t-1}(v)$ we define the indicators
\begin{align}
\mat{t}{v}{u} := \mathds{1}\{u \text{ is matched with } v \text{ in round } t\},\notag \\
\nmat{t}{v}{u} := \mathds{1}\{u \text{ is matched with some } z \neq v \text{ in round } t\}.\notag
\end{align}

Then 
\begin{align}I_t(u)=\mat{t}{v}{u}+\nmat{t}{v}{u} \notag \\
\mat{t}{v}{u}\cdot \nmat{t}{v}{u}=0.\notag 
\end{align}
\Cref{eq:xtv-nbd} thus simplifies to
\[
X_t(v) = \sum_{u\in N_{t-1}(v)} \mat{t}{v}{u} + \sum_{u\in N_{t-1}(v)} \nmat{t}{v}{u} .
\]
We define
\begin{align}
X_t^{(v)} := \sum_{u\in N_{t-1}(v)} \mat{t}{v}{u}\\
\qquad
X_t^{(\neg v)} := \sum_{u\in N_{t-1}(v)} \nmat{t}{v}{u} .
\end{align}
Hence
\begin{equation}
X_t(v)=X_t^{(v)}+X_t^{(\neg v)}.\label{eq:X-decomposition}
\end{equation}

We analyse the variance of the two components $X_t^{(v)}$ and $X_t^{(\neg v)}$ separately.

\paragraph{Bounding the variance of $X_t^{(v)}$}\hspace{1mm}

\smallskip

For distinct $u,w \in N_{t-1}(v)$ the events $\{\mat{t}{v}{u}=1\}$ and $\{\mat{t}{v}{w}=1\}$ are
mutually exclusive, since both require $\pi_t(v)$ to take different values.
Hence $X_t^{(v)} \in \{0,1\}$ and thus
\begin{align}
\mathrm{Var}\left(X_t^{(v)}\right) \le 1.\label{eq:var-sum-pos}
\end{align}

\paragraph{Bounding the variance of $X_t^{(\neg v)}$}\hspace{1mm}
\smallskip

We now analyse $X_t^{(\neg v)}=\sum_u \nmat{t}{v}{u}$. 

\smallskip

Expanding the variance gives
\[
\mathrm{Var}\left(X_t^{(\neg v)}\right)
=
\sum_{u\in N_{t-1}(v)} \mathrm{Var}(\nmat{t}{v}{u})
+
\sum_{\substack{u,w \in N_{t-1}(v)\\ u\neq w}}\mathrm{Cov}(\nmat{t}{v}{u},\nmat{t}{v}{w}).
\]
Fix a vertex $u  \in N_{t-1}(v)$. As the r.v. $\nmat{t}{v}{u}$ is an indicator, we have
\begin{align}
  \mathrm{Var}(\nmat{t}{v}{u})
&  \le \E[\nmat{t}{v}{u}]\notag\\
 &= \Pr(u \text{ is matched with some } z \neq v \text{ in round } t)\notag\\
  &= \sum_{w \in N_{t-1}(u) \setminus v} \frac{1}{\deg_{t-1}(u)\deg_{t-1}(w)}\notag\\
  &\le \frac{1}{\delta_{t-1}} \cdot \sum_{w \in N_{t-1}(u) \setminus v} \frac{1}{\deg_{t-1}(u)}\notag\\
  &= \frac{1}{\delta_{t-1}} \cdot \frac{\deg_{t-1}(u)-1}{\deg_{t-1}(u)}\notag\\
  &\le \frac{1}{\delta_{t-1}} \notag\\
  &\le
  \frac{1}{\beta d_0}.\label{eq:var-sum1}
\end{align}
Here the last inequality follows from the assumption that \bstab~holds (\cref{eq:degree-bounds}).

\smallskip

As we have at most $\Delta_{t-1}=O(d_0)$ neighbours of $v$, using the bound in \cref{eq:var-sum1} we have
\begin{equation}
\sum_{u\in N_{t-1}(v)} \mathrm{Var}(\nmat{t}{v}{u}) = O(1).\label{eq:variance-sum}
\end{equation}
Dependence between $\nmat{t}{v}{u}$ and $\nmat{t}{v}{w}$ can arise only when $u$ and $w$
interact through a potential matching event, i.e., when either $u \sim w$
or they share a common neighbour $z \neq v$. Equivalently, this occurs
precisely when $\{u,w\}$ is an edge of the dependency graph $\dtv$. Hence the number of unordered pairs
$(u,w)$ for which $\mathrm{Cov}(\nmat{t}{v}{u},\nmat{t}{v}{w}) \neq 0$ is at most the number
of edges of $\dtv$.

\smallskip

By \Cref{lem:dynamic-dependency}, $\dtv$ is a matching, so it has
at most $O(d_0)$ edges and therefore we have at most $O(d_0)$ dependent pairs.
For each such pair, we have
\begin{align}
|\mathrm{Cov}(\nmat{t}{v}{u},\nmat{t}{v}{w})|
&\le
\min\left(\E[\nmat{t}{v}{u}], \E[\nmat{t}{v}{w}]\right)\notag \\
&\le
\frac{1}{\beta d_0},\label{eq:bd0}
\end{align}
where the last inequality (\cref{eq:bd0}) again uses the assumption that \bstab~holds.

Thus,
\begin{align}
\sum_{\substack{u,w \in N_{t-1}(v)\\ u\neq w}}
|\mathrm{Cov}(\nmat{t}{v}{u},\nmat{t}{v}{w})|
=
O(d_0)\cdot O\!\left(\frac{1}{\beta d_0}\right)
=
O(1).\label{eq:covariance-sum}
\end{align}
Combining the bounds in \cref{eq:variance-sum,eq:covariance-sum} yields
\begin{align}
\mathrm{Var}\left(X_t^{(\neg v)}\right) = O(1).\label{eq:var-sum-neg}
\end{align}
\paragraph{Conclusion.}

Writing out the variance of $X_t(v)$ using the decomposition in \cref{eq:X-decomposition} we have
\[
\mathrm{Var}(X_t(v))
=
\mathrm{Var}\left(X_t^{(v)}\right)
+
\mathrm{Var}\left(X_t^{(\neg v)}\right)
+
2\,\mathrm{Cov}\left(X_t^{(v)},X_t^{(\neg v)}\right).
\]
As we have shown in \cref{eq:var-sum-pos,eq:var-sum-neg} that $\mathrm{Var}\left(X_t^{(v)}\right) \le 1$ and 
$\mathrm{Var}\left(X_t^{(\neg v)}\right) = O(1)$ respectively, using the Cauchy--Schwarz inequality
implies
\[
\left| \mathrm{Cov}\left(X_t^{(v)},X_t^{(\neg v)}\right)\right|
\le
\sqrt{\mathrm{Var}\left(X_t^{(v)}\right)\cdot \mathrm{Var}\left(X_t^{(\neg v)}\right)}
= O(1).
\]
Thus, we have shown $\mathrm{Var}(X_t(v)) = O(1)$, which completes the proof.
\end{proof}

\medskip
\subsection{Strong Concentration of Degree-Loss via Variance Control}\label{sec:strong-conc-variance}
We are now ready to establish stronger concentration for the degree loss
in a single round under the $\beta$-stability invariant.
 This provides control of local deviations within a round and serves as the key input for controlling the evolution of degrees over many rounds.

\begin{lemma}[Strong One-Round Concentration of Degree Loss]
\label{lem:bernstein-inround}
Fix $\epsilon\geq 0$. Let $G \sim \mathbb{G}(n,p)$ with $p = \frac{c\log^{1+\epsilon} n}{n}$. Assume that the events
$\ereg$ and $\estr$ hold.
Fix a vertex $v$ and a round $t$. Condition on the history 
$\calH_{t-1}$ and suppose the $\beta$-stability invariant holds up to round $t$.

Then there exists absolute constant $C>0$,
such that for all $\lambda \ge 0$,
\[
\Pr\!\left(
\left|X_t(v) - \E[X_t(v)\mid \calH_{t-1}]\right| 
\ge \lambda
\mid \calH_{t-1}
\right)
\le
2\exp\!\left(
-\frac{\lambda^2}{C(1+\lambda)}
\right).
\]
In particular, there exists a sufficiently large absolute constant $C_1$
such that
\[
\Pr\!\left(
X_t(v) \ge C_1 \log n
\mid \calH_{t-1}
\right)
\le O(n^{-12}).
\]
\end{lemma}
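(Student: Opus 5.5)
We follow the route sketched in the proof overview: condition on $\calH_{t-1}$ and on $\pi_t(v)$, expose the neighbourhood of $v$ blockwise, and apply Freedman's inequality (\Cref{fact:freedman}) to the resulting Doob martingale. Throughout we use the bounded increments together with the $O(1)$ conditional variance established in \Cref{lem:variance_loss_global}.

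\textbf{Step 1: reduce to $X_t^{(\neg v)}$ and set up the blocks.}
Split $X_t(v)=X_t^{(v)}+X_t^{(\neg v)}$ as in the proof of \Cref{lem:variance_loss_global}. Since $X_t^{(v)}\in\{0,1\}$, it shifts any deviation by at most $1$, which the constant $C$ absorbs. So it suffices to prove the tail bound for $X_t^{(\neg v)}$ conditional on $\calH_{t-1}$ and on $\pi_t(v)=a$, uniformly in $a$, and then average over $a$.

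By \Cref{lem:dynamic-dependency}, $\dtv$ is a matching. Its edges and isolated vertices partition $N_{t-1}(v)$ into blocks $B_1,\dots,B_m$ with $|B_i|\le 2$. For each block $B_i$, let $\Pi_i$ be the proposals of the vertices of $B_i$ together with the proposals of their neighbours other than $v$. Expose $\Pi_1,\dots,\Pi_m$ in order, and let $\mathbf Z_k$ be the Doob martingale of $X_t^{(\neg v)}$ with respect to the generated filtration.

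\textbf{Step 2: bounded increments.}
Revealing $\Pi_k$ changes the conditional expectation only through the indicators that involve these proposals. Under $\estr$, a vertex outside the block can be a neighbour of at most a bounded number of neighbours of $v$ from distinct blocks. Moreover, each vertex lies in at most one matched edge. A case analysis in the spirit of \Cref{lem:bounded-difference-round1} should therefore give $|\mathbf Z_k-\mathbf Z_{k-1}|\le b$ for an absolute constant $b$.

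\textbf{Step 3: conditional variance (the main obstacle).}
We need $\sum_k \Var(\mathbf Z_k-\mathbf Z_{k-1}\mid\cdot)=O(1)$ almost surely, not merely in expectation. The natural bound is $\E[(\mathbf Z_k-\mathbf Z_{k-1})^2\mid\cdot]\le b\,\E[|\mathbf Z_k-\mathbf Z_{k-1}|\mid\cdot]$, and this increment is nonzero essentially only when a vertex of $B_k$ becomes matched. For that, some $u\in B_k$ must propose to a $w$ that proposes back, which has probability $O(1/\delta_{t-1})=O(1/(\beta d_0))$ by $\beta$-stability, uniformly in the past.

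The difficulty is that proposals of common neighbours $z$ shared with earlier blocks may already be revealed. We handle this by noting that a revealed $\pi_t(z)=u$ raises the matching probability of $u$ only to at most $1/\deg_{t-1}(u)\le 1/(\beta d_0)$. By $\estr$, each $u$ has at most one such shared $z$, so the per-block bound $O(1/d_0)$ survives. Summing over $m\le\Delta_{t-1}=O(d_0)$ blocks (using $\ereg$) gives total predictable variance $W\le K$ for an absolute constant $K$.

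\textbf{Step 4: Freedman and the logarithmic corollary.}
Freedman's inequality with increments at most $b$ and variance at most $K$ yields
\[
\Pr\bigl(|X_t^{(\neg v)}-\E[X_t^{(\neg v)}\mid\calH_{t-1},\pi_t(v)]|\ge\lambda\bigr)\le 2\exp\!\Bigl(-\frac{\lambda^2}{2(K+b\lambda/3)}\Bigr).
\]
Two corrections remain, and both are absorbed into $C$:
\begin{itemize}
\item replacing the conditional mean given $\pi_t(v)$ by the mean given $\calH_{t-1}$ alone, which differ by $O(1)$ because conditioning on $\pi_t(v)$ alters only the indicators of $\pi_t(v)$ and its block partner;
\item the $\pm1$ shift from $X_t^{(v)}$ in Step 1.
\end{itemize}

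For the ``in particular'' statement, \Cref{lem:exp-matched-neighbours-t} together with $\ereg$ and $\beta$-stability gives $\E[X_t(v)\mid\calH_{t-1}]\le 2c d_0/(\beta d_0)=O(1)$. Taking $\lambda=C_1\log n-O(1)$ makes the exponent $\Omega(C_1\log n)$, so the tail is at most $n^{-12}$ once $C_1$ is large enough.
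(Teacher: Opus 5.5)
Your proposal is correct and is essentially the paper's proof: condition on $\calH_{t-1}$ and $\pi_t(v)$, expose the blocks of the matching $\dtv$ together with their neighbours' proposals, bound the increments by a constant and the predictable quadratic variation by $O(1/d_0)$ per block via $\beta$-stability, apply Freedman's inequality, and absorb the $O(1)$ mean shift into $C$. Splitting off $X_t^{(v)}\in\{0,1\}$ first is only a cosmetic variant of the paper's bound $|Z-\E[X_t(v)\mid\calH_{t-1}]|\le 2$.

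One simplification you missed: the cross-block shared neighbours you worry about in Step~3 cannot occur. Any $z\neq v$ adjacent to two vertices of $N_{t-1}(v)$ makes them adjacent in $\dtv$, hence puts them in the same block. So the sets $\Pi_i$ are pairwise disjoint and the block sums are conditionally independent, which gives $|\mathbf Z_k-\mathbf Z_{k-1}|\le 2$ immediately, with no case analysis.
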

\begin{proof}

Condition on $\calH_{t-1}$ and assume that $\estr$
and the $\beta$-stability invariant hold up to round $t$.

We analyze $X_t(v)$ using a martingale argument that exploits the
bounded dependency structure of the indicators $\{I_t(u) : u \in N_{t-1}(v)\}$.

\smallskip

By \Cref{lem:dynamic-dependency}, the dependency graph $\dtv$ has
maximum degree at most $1$, and hence is a matching. Therefore,
$N_{t-1}(v)$ can be partitioned into disjoint blocks
$B_1,\dots,B_m$ of size at most $2$, where each block contains all
mutually dependent indicators.

For each $i \in [m]$, define
\begin{align}
Y_i := \sum_{u \in B_i} I_t(u).\notag
\end{align}
Thus,
\begin{align}
X_t(v) = \sum_{i=1}^m Y_i,
\label{eq:decomp-X}
\end{align}
with $0 \le Y_i \le 2$.

\smallskip

In what follows, we will condition on the proposal $\pi_t(v)$. Conditioning on $\pi_t(v)$ eliminates the only source of global
dependency among the indicators $\{I_t(u)\}_{u \in N_{t-1}(v)}$,
ensuring that all remaining dependencies are captured by the local
dependency graph $\dtv$.

\smallskip

Define,
\begin{align}
\widetilde{Y}_i := Y_i - \E[Y_i \mid \calH_{t-1}, \pi_t(v)].
\notag
\end{align}
Then,
\begin{align}
|\widetilde{Y}_i| \le 2
\quad  \text{and} \quad 
\E[\widetilde{Y}_i \mid \calH_{t-1}, \pi_t(v)] = 0. \label{eq:Ytilde}
\end{align}
Let
\begin{align}
Z := \E[X_t(v) \mid \calH_{t-1}, \pi_t(v)].\notag
\end{align}
Then,
\begin{align}
\sum_{i=1}^m \widetilde{Y}_i = X_t(v) - Z.
\label{eq:centered-sum}
\end{align}
Since changing $\pi_t(v)$ affects the matching status of at most two
neighbours of $v$, we have
\begin{align}
\big|Z - \E[X_t(v)\mid \calH_{t-1}]\big| \le 2.
\label{eq:mean-shift}
\end{align}Using  triangle inequality, we have for any $\lambda \ge 2$,
\begin{align}
|X_t(v) - \E[X_t(v)\mid \calH_{t-1}]| \ge \lambda
\quad \;\Rightarrow\;\quad 
\left|
\sum_{i=1}^m \widetilde{Y}_i
\right| 
\ge \lambda - 2.
\label{eq:triangle-reduction}
\end{align}
Let
\begin{align}
\sigma^2 :=
\Var\!\left(\sum_{i=1}^m \widetilde{Y}_i \,\middle|\, \calH_{t-1}, \pi_t(v)\right).
\notag
\end{align}
Since $Z$ is determined by $\calH_{t-1}$ and $\pi_t(v)$, $\sigma^2$ is the conditional variance of the
degree loss $X_t(v)$ once the proposal of $v$ is fixed. \Cref{lem:variance_loss_global} bounds the
variance given $\calH_{t-1}$ alone; by the law of total variance, this bounds only the average of
$\sigma^2$ over the values of $\pi_t(v)$. We show below that $\sigma^2=O(1)$ for every value of
$\pi_t(v)$ (see \cref{eq:sigma-bound}); this follows from the bound on the quadratic variation of a
block martingale, which is also the bound used in Freedman's inequality.
Fix $\calH_{t-1}$ and $\pi_t(v)$. Let $\fil{t}{k}{v}[G]$ be the $\sigma$-algebra generated by $\calH_{t-1}$,
$\pi_t(v)$, and all proposal variables
$\{\pi_t(x) : x \in \bigcup_{j=1}^k (B_j \cup N_{t-1}(B_j))\}$.

\smallskip

Each $Y_i$ is determined by the proposal variables of vertices in $B_i$
and their neighbours; hence $\fil{t}{k}{v}[G]$ reveals $Y_1,\dots,Y_k$.

\smallskip

Define the Doob martingale
\begin{align}
\Mart{t}{k}{v}[W] := \E\!\left[ \sum_{i=1}^m \widetilde{Y}_i \,\middle|\, \fil{t}{k}{v}[G] \right],
\qquad k=0,1,\dots,m.
\label{eq:martingale}
\end{align}
Then $\Mart{t}{0}{v}[W] = 0$ and $\Mart{t}{m}{v}[W] = \sum_{i=1}^m \widetilde{Y}_i$.

Let
\begin{align}
D_k := \Mart{t}{k}{v}[W] - \Mart{t}{k-1}{v}[W].
\label{eq:increments}
\end{align}
Since $\fil{t}{k}{v}[G]$ reveals all randomness determining $Y_k$, we have
\[
D_k = Y_k - \E[Y_k \mid \fil{t}{k-1}{v}[G]],
\]
Using \cref{eq:Ytilde}, we have $|\widetilde{Y}_i| \le 2$, and hence
\begin{align}
|D_k| \le 2 \quad \textrm{for all $k$.}
\label{eq:bounded-increment}
\end{align}
Let
\begin{align}
V_m(v) := \sum_{k=1}^m \E[D_k^2 \mid \fil{t}{k-1}{v}[G]].
\label{eq:quad-var}
\end{align}
We show that $V_m(v) = O(1)$ almost surely.

\medskip

Fix $k$ and condition on $\fil{t}{k-1}{v}[G]$. By construction of the blocks as
connected components of the dependency graph $\dtv$, all proposal variables
that influence the indicators $\{I_t(u) : u \in B_k\}$ are contained within
the block $B_k$. In particular, vertices in earlier blocks cannot introduce
new proposal conflicts with vertices in $B_k$.

\smallskip

Moreover, any vertex $w \in N_{t-1}(v)$ whose proposal could affect the
matching outcome of a vertex $u \in B_k$ must lie in the same block $B_k$.
Indeed, if such a vertex $w$ matches a neighbour $x$ of $u$, then
$w \sim x$ and $u \sim x$, so $w$ and $u$ share a common neighbour.
By the definition of the dependency graph $\dtv$, this implies that
$w$ and $u$ are adjacent in $\dtv$, and hence belong to the same block.

\smallskip

Therefore, conditioning on $\fil{t}{k-1}{v}[G]$ reveals no proposal variables that
affect the matching outcomes of vertices in $B_k$. In particular, for every
$u \in B_k$,
\begin{align}
\Pr(I_t(u)=1 \mid \fil{t}{k-1}{v}[G])
=
\Pr(I_t(u)=1 \mid \calH_{t-1}, \pi_t(v)).\notag 
\end{align}
As \bstab~implies $\Pr(I_t(u)=1 \mid \calH_{t-1}, \pi_t(v))=O(1/d_0)$, we have for every $u \in B_k$,
\begin{align}
\Pr(I_t(u)=1 \mid \fil{t}{k-1}{v}[G]) = O(1/d_0),\label{eq:itu-cond}
\end{align}
uniformly over $\fil{t}{k-1}{v}[G]$.

We are now ready to bound $\E[D_k^2 \mid \fil{t}{k-1}{v}[G]]$.
\begin{align}
\E[D_k^2 \mid \fil{t}{k-1}{v}[G]]
&= \E\big[ Y_k^2 \mid \fil{t}{k-1}{v}[G] \big]
- \E[Y_k \mid \fil{t}{k-1}{v}[G]]^2 \le \E\big[ Y_k^2 \mid \fil{t}{k-1}{v}[G] \big].
\label{eq:dk12}
\end{align}
As $0 \le Y_k \le 2$, we have $Y_k^2 \le 2 Y_k$. Using this in  \cref{eq:dk12}  we obtain
\begin{align}
\E[D_k^2 \mid \fil{t}{k-1}{v}[G]]
\le 2 \cdot \E[Y_k \mid \fil{t}{k-1}{v}[G]].
\notag
\end{align}
Since $Y_k = \sum_{u \in B_k} I_t(u)$ with $|B_k| \le 2$, we have
\begin{align}
\E[Y_k \mid \fil{t}{k-1}{v}[G]]
= \sum_{u \in B_k} \Pr(I_t(u)=1 \mid \fil{t}{k-1}{v}[G]).
\notag
\end{align}
Using \cref{eq:itu-cond} we obtain
\begin{align}
\E[D_k^2 \mid \fil{t}{k-1}{v}[G]] = O(1/d_0).
\label{eq:step-variance}
\end{align}
Summing \cref{eq:step-variance} over all $m = O(d_0)$ blocks yields,
\begin{align}
V_m(v) = \sum_{k=1}^m \E[D_k^2 \mid \fil{t}{k-1}{v}[G]] = O(1).
\label{eq:Vm-bound}
\end{align}

In particular, $\sigma^2=O(1)$. Indeed, $\fil{t}{0}{v}[G]$ is generated by $\calH_{t-1}$ and $\pi_t(v)$,
and the martingale starts at $\Mart{t}{0}{v}[W]=0$ and ends at
$\Mart{t}{m}{v}[W]=\sum_{i=1}^m\widetilde Y_i$, so
$\sum_{i=1}^m\widetilde Y_i=\sum_{k=1}^m D_k$. By \cref{eq:Ytilde} this sum has conditional mean zero,
so its variance equals its conditional second moment. Moreover, for $j<k$ the increment $D_j$ is
$\fil{t}{k-1}{v}[G]$-measurable and $\E[D_k\mid\fil{t}{k-1}{v}[G]]=0$, hence
$\E[D_jD_k\mid\fil{t}{0}{v}[G]]=0$ by the tower property. The cross terms therefore vanish, and
\begin{align}
\sigma^2
&=\E\!\left[\Big(\sum_{i=1}^m\widetilde Y_i\Big)^2 \,\middle|\, \calH_{t-1}, \pi_t(v)\right]
=\E\!\left[\Big(\sum_{k=1}^m D_k\Big)^2 \,\middle|\, \calH_{t-1}, \pi_t(v)\right]\notag\\
&=\E\!\left[\E\!\left[\Big(\sum_{k=1}^m D_k\Big)^2 \, \middle|\, \fil{t}{k-1}{v}[G]\right] \,\middle|\, \calH_{t-1}, \pi_t(v)\right]
=\E\!\left[V_m(v) \,\middle|\, \calH_{t-1}, \pi_t(v)\right]
=O(1)
\label{eq:sigma-bound}
\end{align}
by \cref{eq:Vm-bound}.

The bound in \cref{eq:Vm-bound} holds for every realization, so in \Cref{fact:freedman} the event
$\{V_m(v)\le C_V\}$, where $C_V$ is the constant in \cref{eq:Vm-bound}, has probability one, and the
probability of the joint event there is the probability of the deviation alone. Applying
\Cref{fact:freedman} to the martingale $\big(\Mart{t}{k}{v}[W]\big)_{k=0}^{m}$ and to its negative,
with $R=2$ by \cref{eq:bounded-increment} and $\sigma^2=C_V$, we obtain for every $\mu>0$
\[
\Pr\!\left(
\left|\sum_{i=1}^m \widetilde{Y}_i\right| \ge \mu
\;\middle|\;
\calH_{t-1}, \pi_t(v)
\right)
\le
2\exp\!\left(-\frac{\mu^2}{2C_V+\frac{4}{3}\mu}\right).
\]
Let $C:=\max\{8C_V,6\}$. For $\lambda\ge 4$ we take $\mu=\lambda-2$, so that $\lambda/2\le\mu\le\lambda$;
since $C(1+\lambda)\ge 8C_V+\frac{16}{3}\lambda$, this gives
\begin{align}
\Pr\!\left(
\left|
\sum_{i=1}^m \widetilde{Y}_i
\right|
\ge \lambda - 2
\;\middle|\;
\calH_{t-1}, \pi_t(v)
\right)
\le
2\exp\!\left(
-\frac{\lambda^2}{C(1+\lambda)}
\right)
\qquad\text{for all } \lambda\ge 4.
\label{eq:freedman123}
\end{align}
By \cref{eq:triangle-reduction} and \cref{eq:freedman123}, we obtain for all $\lambda\ge4$
\begin{align}
\Pr\!\left(
|X_t(v) - \E[X_t(v)\mid \calH_{t-1}]|
\ge \lambda
\;\middle|\;
\calH_{t-1}, \pi_t(v)
\right)
\le
2\exp\!\left(
-\frac{\lambda^2}{C(1+\lambda)}
\right).\notag
\end{align}
Since the bound holds uniformly over all realizations of $\pi_t(v)$,
we obtain
\begin{align}
\Pr\!\left(
|X_t(v) - \E[X_t(v)\mid \calH_{t-1}]|
\ge \lambda
\;\middle|\;
\calH_{t-1}
\right)
\le
2\exp\!\left(
-\frac{\lambda^2}{C(1+\lambda)}
\right)\notag
\end{align}
for all $\lambda\ge4$. For $0\le\lambda<4$ the right-hand side is at least $2e^{-16/30}>1$, since
$C\ge6$, so the bound holds trivially. This proves the first claim of the lemma.
Finally, setting
\[
\lambda
=
C_1 \log n
-
\E[X_t(v) \mid \calH_{t-1}],
\]
and using $\E[X_t(v)\mid \calH_{t-1}] = O(1)$
(\Cref{lem:exp-matched-neighbours-t}),
we obtain for sufficiently large $C_1$,
\[
\frac{\lambda^2}{C(1+\lambda)} \ge 12 \log n.
\]
This implies,
\[
\Pr\!\left(
X_t(v) \ge C_1 \log n
\;\middle|\;
\calH_{t-1}
\right)
\le O(n^{-12}),
\]
thus proving the lemma.
\end{proof}

%% file: linear-rounds.tex

\newpage

\section{Controlling Degree Evolution via Cumulative-Deviation Martingales}
\label{sec:deg-evolution}

Although the expected degree loss in each round is bounded as seen in the previous section, small random deviations could in principle accumulate over time. We control this accumulation by tracking the difference between the actual degree loss and its conditional expectation using a martingale. Applying Freedman's inequality (\Cref{fact:freedman}), together with the $O(1)$ variance bound and $O(\log n)$ deviation bounds from \Cref{sec:strong-conc}, shows that these cumulative deviations remain small for $T=\Theta(d_0)$ rounds. Consequently, the $\beta$-stability invariant holds throughout these rounds with high probability.

\smallskip

To formalize when the $\beta$-stability invariant may fail, we begin by defining the notion of threshold time for each vertex, which records the first round in which its degree drops below the \bstab~threshold. We then define a global stopping time corresponding to the earliest such violation over all vertices.

\begin{definition}[Threshold time]
\label{def:threshold-time}
    Let $\beta > 0$ be a constant. For each vertex $v \in V(G)$, define the
threshold time
    \begin{align}
        \tau(v)
        :=
        \min \left\{
        t \ge 0 :
        v \in U_t
        \text{ and }
        \deg_{t+1}(v) < \beta d_0
        \right\},
    \end{align}
    and $\tau(v) = \infty$ if no such time exists.
\end{definition}

\begin{definition}[Critical vertex]
\label{def:critical-vertex}
Let $\crtcl$ denote the lexicographically first vertex attaining
\[
\min_{v \in V(G)} \tau(v).
\]
We call $\crtcl$ the \emph{critical vertex}.
\end{definition}

\begin{definition}[Global stopping time]
\label{def:global-stopping-time}
We define the stopping time
\[
\tau := \tau(\crtcl).
\]
\end{definition}

We note that $\tau$ is the last round before the \bstab~invariant is violated for some vertex. In particular, $\tau \le \tau(v)$ for every $v \in V(G)$. Moreover, 
the threshold time $\tau(v)$ is finite only if $v$ remains unmatched until its degree drops below the threshold. If $v$ becomes matched before its degree falls below the threshold, then {$\tau(v)=\infty$}, and the $\beta$-stability invariant is never violated for that vertex.

Consequently, the analysis only needs to control the process up to the stopping time $\tau$, during which the \bstab~invariant holds.

\smallskip

To analyse the concentration of the process, we define a martingale that tracks the deviation between the actual cumulative degree loss of the critical vertex and its conditional expectation.
\begin{definition}[Cumulative-Deviation Martingale]
\label{def:critical-martingale}
For the critical vertex $\crtcl$, we define the stochastic process $\Martseq{t}{\crtcl}{\N}$
by
\begin{align}
\Mart{t}{}{\crtcl}:=
\begin{cases}
0 & t=0, \\
\displaystyle \sum_{k=1}^t
\left(
X_k(\crtcl)-
\E[X_k(\crtcl)\mid \calH_{k-1}]
\right) & t \ge 1 .
\end{cases}
\end{align}
\end{definition}

We now verify that the process defined above forms a martingale with
respect to the natural filtration of the process.

\begin{lemma}
    $\Martseq{t}{\crtcl}{\N}$ is a martingale with respect to the filtration
$\Hseq{t}{\N}$, i.e.,
    \begin{align}
        \E\left[\Mart{t}{}{\crtcl}\mid\calH_{t-1}\right] = \Mart{t-1}{}{\crtcl}\quad  \text {for ~all}
        \quad t \ge 1.
    \end{align}
\end{lemma}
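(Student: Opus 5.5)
The plan is to verify the three defining properties of a martingale (adaptedness, integrability, and the one-step conditional-mean identity) directly from \Cref{def:critical-martingale}, with all the work coming from the telescoping structure of the increments. I will first do this for an \emph{arbitrary fixed} vertex $v\in V(G)$ in place of $\crtcl$, i.e.\ for $\Mart{t}{}{v}:=\sum_{k=1}^t \bigl(X_k(v)-\E[X_k(v)\mid\calH_{k-1}]\bigr)$. Afterwards I will explain how the statement for $\crtcl$ is to be read.

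\textbf{Adaptedness and integrability.} For each $k\le t$, the variable $X_k(v)$ is determined by $U_{k-1}$, $N_{k-1}(v)$ and the round-$k$ proposals. By \Cref{def:xvt} it also covers the case $X_k(v)=0$ when $v\notin U_{k-1}$, so it is $\calH_k$-measurable. The conditional expectation $\E[X_k(v)\mid\calH_{k-1}]$ is $\calH_{k-1}\subseteq\calH_t$-measurable. Hence $\Mart{t}{}{v}$ is $\calH_t$-measurable. Integrability is immediate: $0\le X_k(v)\le \deg_0(v)\le n$ deterministically, so $|\Mart{t}{}{v}|\le 2tn<\infty$.

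\textbf{Martingale identity.} Write $\Mart{t}{}{v}=\Mart{t-1}{}{v}+D_t$ with $D_t:=X_t(v)-\E[X_t(v)\mid\calH_{t-1}]$. Since $\Mart{t-1}{}{v}$ is $\calH_{t-1}$-measurable, it pulls out of the conditional expectation. Moreover,
\[
\E[D_t\mid\calH_{t-1}]=\E[X_t(v)\mid\calH_{t-1}]-\E[X_t(v)\mid\calH_{t-1}]=0.
\]
This uses only that a conditional expectation is measurable with respect to the conditioning $\sigma$-algebra. This yields $\E[\Mart{t}{}{v}\mid\calH_{t-1}]=\Mart{t-1}{}{v}$.

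\textbf{The main obstacle: $\crtcl$ is random.} The critical vertex $\crtcl$ is defined through the threshold times $\tau(v)$, which depend on the entire future trajectory, so $\crtcl$ is not $\calH_{t-1}$-measurable. One therefore cannot naively substitute $v=\crtcl$ above: conditioning on $\calH_{t-1}$ does not fix which vertex's increment is being taken. I would handle this in one of two ways.
\begin{enumerate}
\item Interpret the lemma, as its later use requires, as the statement that for every fixed $v$ the process $\bigl(\Mart{t}{}{v}\bigr)_t$ is a martingale. Any tail event concerning $\crtcl$ is then bounded by a union bound over the $n$ possible values of $\crtcl$, since $\{\crtcl=v\}$ is contained in the event concerning $v$.
\item Alternatively, consider the stopped processes $\Mart{t\wedge\tau(v)}{}{v}$. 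Here $\tau(v)+1$ is a stopping time for $(\calH_t)$, because $\{\tau(v)< t\}$ is decided by $\deg_{s}(v)$ for $s\le t$. The optional stopping theorem for bounded stopping times then shows these stopped processes are martingales too.
\end{enumerate}
Either way, the per-vertex computation above is the core. The only care needed is to state explicitly that the martingale property is asserted for each deterministic index $v$, which is exactly what Freedman's inequality (\Cref{fact:freedman}) will be applied to afterwards.
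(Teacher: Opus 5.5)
Your fixed-$v$ computation is exactly the paper's proof: write $\mathbf{M}_t=\mathbf{M}_{t-1}+D_t$, pull $\mathbf{M}_{t-1}$ out of $\E[\,\cdot\mid\calH_{t-1}]$, and cancel the conditional mean. The difference is that the paper runs this argument directly with $v=\crtcl$. It justifies the pull-out step by the claim that $\mathbf{M}_{t-1}(\crtcl)$ is fully determined by $\calH_{t-1}$, and your objection to that step is correct. Since $\crtcl$ is defined through $\min_v\tau(v)$, it depends on the whole trajectory. The displayed identity would force $\mathbf{M}_{t-1}(\crtcl)$ to be $\calH_{t-1}$-measurable. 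For $t=2$, for instance, $\mathbf{M}_1(\crtcl)$ would have to be determined by the round-$1$ proposals, although $\crtcl$ is typically decided much later. So the lemma fails in general as literally stated. Proving it for every deterministic $v$, as you do, is the correct statement, and your adaptedness and integrability checks are fine.

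Two cautions about your remedies.

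\emph{Option~2.} You stop at $\tau(v)$ but cite $\tau(v)+1$ as the stopping time; only the latter is one. The event $\{t\le\tau(v)\}$ involves $\deg_t(v)$, so it lies in $\calH_t$, not $\calH_{t-1}$. Stopping at $\tau(v)$ discards exactly the increment of the round in which the threshold is crossed, so that process is not a martingale in general. The paper's $\mathbf{M}_{t\wedge\tau}$ has the same off-by-one error. For the later bounds you should moreover stop at the global time $\tau+1$, since \Cref{lem:variance_loss_global,lem:bernstein-inround} require $\beta$-stability of all vertices, not only of $v$.

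\emph{Option~1.} The union bound over the $n$ possible values of $\crtcl$ is the right repair, but it is not free. The Freedman computation of \Cref{lem:freedman-deviation}, run for a fixed $v$, gives only a tail of $\exp(-\Theta(\log^{\epsilon}n))$. This is far from $o(1/n)$ when $\epsilon<1$, so \Cref{thm:persistence} does not follow after the union bound. You need per-vertex tails of order $o(1/n)$. One way: the sub-exponential per-round tails of \Cref{lem:bernstein-inround} give $\E[e^{\theta D_t}\mid\calH_{t-1}]\le e^{K\theta^2}$ for a constant $K$ and all $|\theta|\le\theta_0$. The resulting exponential supermartingale, stopped at $\tau+1$, yields a tail of $\exp(-\Theta(d_0))=n^{-\omega(1)}$ over $T=\alpha d_0$ rounds for small $\alpha$. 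This avoids truncating the increments at $C_1\log n$.
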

\begin{proof}
By definition,
\[
\Mart{t}{}{\crtcl}=\Mart{t-1}{}{\crtcl}+X_t(\crtcl)-\E\left[X_t(\crtcl)\mid\calH_{t-1}\right].
\]
Taking conditional expectation with respect to $\calH_{t-1}$ gives
\[
\E\left[\Mart{t}{}{\crtcl}\mid\calH_{t-1}\right]
=
\Mart{t-1}{}{\crtcl}
+
\E\left[X_t(\crtcl)\mid\calH_{t-1}\right]
-
\E\left[X_t(\crtcl)\mid\calH_{t-1}\right]
=
\Mart{t-1}{}{\crtcl},
\]
 where the last equality holds as   $\Mart{t-1}{}{\crtcl}$ is fully determined by $\calH_{t-1}$.
\end{proof}

To restrict attention to the regime where the \bstab~invariant holds, we consider the martingale $\Martseq{t}{\crtcl}{\N}$ stopped at 
time $\tau$. This avoids the need for a longer and more involved round-by-round inductive argument and allows us to analyze the cumulative deviations directly. The stopped martingale ensures all bounds relying on \bstab~remain valid till our target round $T$.

\smallskip

\begin{definition}[Stopped Martingale]
\label{def:stopped-martingale}
Let  $t \wedge \tau :=\min\{t, \tau\}$. For the critical vertex $\crtcl$, we define the stochastic process $\tildeMart{t}{\crtcl}$ as
\[
\tildeMart{t}{\crtcl} := \Mart{t \wedge \tau}{}{\crtcl}.
\]
\end{definition}

We note that as $\E\left[\tildeMart{t}{\crtcl} \mid \calH_{t-1}\right] = \tildeMart{t-1}{\crtcl}$, the stopped process $\tildeMartseq{t}{\crtcl}{\N}$ is also a martingale
with respect to the filtration $\Hseq{t}{\N}$.

\smallskip

In order to apply Freedman's inequality to the stopped martingale, we need to
establish two properties: a uniform bound on its increments, and a bound
on its predictable quadratic variation. We prove these in the following
subsections.
\medskip
\subsection{Bounding the Increments of the Cumulative-Deviation Martingale}\label{sec:increments-cumulative}

We first establish a uniform bound on the increments of the stopped martingale. This step lifts the strong per-round concentration bounds from \Cref{lem:bernstein-inround} to control the cumulative-deviation martingale: the local $O(\log n)$ bounds on the degree loss imply that the martingale increments are uniformly bounded with high probability. We formalize this through the following bounded increment event.

\begin{definition}[Bounded Increment Event]
\label{def:martingale-increment}
We define $\mathcal E_{\mathrm{incr}}(T,R)$ to be the event
\[
\mathcal E_{\mathrm{incr}}(T,R)
:=
\left\{
\forall t \le T :
\left|
\tildeMart{t}{\crtcl} - \tildeMart{t-1}{\crtcl}
\right|
\le R
\right\}.
\]
\end{definition}

\begin{lemma}[Uniform Martingale Increment Bound]
\label{lem:martingale-jump}
Assume $\mathcal E_{\mathrm{struct}}$ and $\mathcal E_{\mathrm{reg}}$ hold.
Fix $T = \alpha d_0$ for a constant $\alpha>0$.
Then for sufficiently large constant $C_1>0$ we have,
\[
\Pr(\mathcal E_{\mathrm{incr}}^c(T,C_1 \log n))
= O(n^{-10}).
\]
\end{lemma}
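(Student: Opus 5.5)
The plan is to reduce the increment bound to the one-round estimate of \Cref{lem:bernstein-inround} and then take a union bound over rounds and vertices. For $t\le T$ the increment of the stopped martingale equals
\[
\tildeMart{t}{\crtcl}-\tildeMart{t-1}{\crtcl}=\mathds{1}\{t\le\tau\}\bigl(X_t(\crtcl)-\E[X_t(\crtcl)\mid\calH_{t-1}]\bigr).
\]
Here $\{t\le\tau\}$ is determined by the history, which is the point of stopping. On this event the $\beta$-stability invariant holds up to round $t$, because $\tau\le\tau(v)$ for every $v$ and so no vertex has dropped below $\beta d_0$ before round $t$.

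The only obstacle is that $\crtcl$ is not a fixed vertex: it depends on the whole future trajectory, so \Cref{lem:bernstein-inround} cannot be applied to it directly. I handle this by bounding uniformly over all vertices. The increment at round $t$ has absolute value at most
\[
\max_{v\in V(G)} \mathds{1}\{t\le\tau\}\,\bigl|X_t(v)-\E[X_t(v)\mid\calH_{t-1}]\bigr|,
\]
and the event $\{t\le\tau\}$ does not depend on which vertex is critical. So it suffices to show that, with probability $1-O(n^{-10})$, for every $t\le T$ and every $v$ with $v\in U_{t-1}$ and $t\le\tau$, we have $|X_t(v)-\E[X_t(v)\mid\calH_{t-1}]|\le C_1\log n$. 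For $v\notin U_{t-1}$ both terms are $0$.

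For a fixed pair $(t,v)$, condition on $\calH_{t-1}$. If $t>\tau$ there is nothing to prove. Otherwise $\beta$-stability holds up to round $t$, and we are assuming $\ereg$ and $\estr$, so \Cref{lem:bernstein-inround} applies with $\lambda=C_1\log n$. It gives
\[
\Pr\bigl(|X_t(v)-\E[X_t(v)\mid\calH_{t-1}]|\ge C_1\log n\mid\calH_{t-1}\bigr)\le 2\exp\!\Bigl(-\tfrac{C_1^2\log^2 n}{C(1+C_1\log n)}\Bigr)\le n^{-12}
\]
once $C_1$ is large compared to $C$. Averaging over $\calH_{t-1}$, the bad event for $(t,v)$ has probability at most $n^{-12}$.

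A union bound over the $n$ vertices and the $T=\alpha d_0=O(\log^{1+\epsilon}n)$ rounds then gives total failure probability $O(n\,T\,n^{-12})=O(n^{-10})$. The main subtlety in writing this up is making sure the conditioning on the graph events $\ereg,\estr$ does not interfere with the proposal randomness. This is fine because the graph is fixed first and only the proposals are random given $G$.
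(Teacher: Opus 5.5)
Your proposal is correct and has the same skeleton as the paper's proof. On rounds before stopping, $\beta$-stability holds, so \Cref{lem:bernstein-inround} gives an $n^{-12}$ tail for the per-round deviation, and a union bound over the $T=\alpha d_0$ rounds finishes.

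The real difference is your extra union bound over all $n$ vertices. The paper applies \Cref{lem:bernstein-inround} to $\crtcl$ directly, conditioning on $\calH_{t-1}$ as though $\crtcl$ were a fixed vertex, and gets $T\cdot n^{-12}$. However, $\crtcl$ is defined through $\min_v \tau(v)$, so it depends on the whole future trajectory. It is therefore biased toward vertices that suffer large losses, and the one-round lemma, which is stated for a fixed $v$, does not literally apply to it. Your pointwise bound by $\max_v$ avoids this selection effect. The cost is a factor $n$, which the $n^{-12}$ tail absorbs since $nTn^{-12}=O(n^{-10})$, so your version is the more careful one.

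One small point needs tidying. With the paper's definition, $\tau(v)$ looks at $\deg_{s+1}(v)$, so the event $\{t\le\tau\}$ depends on $\deg_t$. It is $\calH_t$-measurable, not $\calH_{t-1}$-measurable, so you cannot branch on it after conditioning on $\calH_{t-1}$. The fix is to use instead the $\calH_{t-1}$-measurable event $S_t$ that $\beta$-stability holds up to round $t$. As you observe, $S_t$ contains $\{t\le\tau\}$ (using $\ereg$ for round $0$). Then bound
$\Pr\bigl(S_t\cap\{|X_t(v)-\E[X_t(v)\mid\calH_{t-1}]|\ge C_1\log n\}\bigr)=\E\bigl[\mathds{1}_{S_t}\Pr(\,\cdot\mid\calH_{t-1})\bigr]\le n^{-12}$.
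The rest of your argument goes through unchanged.
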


\begin{proof}
We recall the definition of the stopped martingale,
\[
\tildeMart{t}{\crtcl} - \tildeMart{t-1}{\crtcl}
=
\begin{cases}
X_t(\crtcl) - \mathbb E[X_t(\crtcl)\mid \calH_{t-1}] & t \le \tau, \\
0 & t > \tau.
\end{cases}
\]
Thus, for $ t > \tau$ the lemma holds trivially, so we focus on the case
when  $t \leq \tau$. 

Whenever $t \le \tau$, the \bstab~invariant holds and we can apply \Cref{lem:bernstein-inround}. This yields
\[
\Pr\!\left(
\left|
X_t(\crtcl)
-
\E\left[X_t(\crtcl) \mid \mathcal H_{t-1}\right]
\right|
\ge C_1 \log{n}
\mid
\mathcal H_{t-1}
\right)
= O(n^{-12}),
\]
for sufficiently large $C_1$.

\smallskip

Finally, as $\tildeMart{t}{\crtcl} - \tildeMart{t-1}{\crtcl}=X_t(\crtcl)
-
\E\left[X_t(\crtcl) \mid \mathcal H_{t-1}\right] $, taking a union bound over all
$T$ rounds then gives
\[
\Pr(\mathcal E_{\mathrm{incr}}^c(T,C_1 \log n))
\le
T \cdot n^{-12}
=
O(n^{-10}).
\]
This proves the lemma.
\end{proof}
\medskip
\subsection{Bounding the Quadratic Variation}\label{sec:bounding-quadratic-var}

We next define the predictable quadratic variation of the stopped
martingale in our setting, and then prove that it admits a uniform bound. 
\begin{definition}[Predictable Quadratic Variation]
\label{def:predictable-variation}
We define the predictable quadratic variation $V_T(\crtcl)$ of the stopped martingale $\Martseq{t}{\crtcl}{\N}$ as
\[
V_T(\crtcl)
:=
\sum_{t=1}^{T}
\mathrm{Var}\!\left(
\tildeMart{t}{\crtcl} - \tildeMart{t-1}{\crtcl}
\;\middle|\;
\calH_{t-1}
\right).
\]
\end{definition}

We next control the predictable quadratic variation of the stopped martingale, which captures how variance accumulates over many rounds. Using the $O(1)$ conditional variance bounds established above, we show that this quantity grows at most linearly in the number of rounds.
\begin{lemma}[Quadratic Variation Bound]
\label{lem:quadratic-variation}
Suppose that the events $\ereg$ and $\estr$ hold.

Then for every $T$, we have
\[
V_T(\crtcl) \le C' T
\]
for some constant $C' > 0$.
\end{lemma}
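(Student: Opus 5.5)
The plan is to bound each summand of $V_T(\crtcl)$ separately by an absolute constant and then sum over the $T$ rounds. Fix $t\le T$ and condition on $\calH_{t-1}$. The increment $\tildeMart{t}{\crtcl}-\tildeMart{t-1}{\crtcl}$ takes one of two forms, and we split into cases according to whether $t\le\tau$.

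\emph{Case $t>\tau$.} The increment is identically zero, so its conditional variance vanishes.

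\emph{Case $t\le\tau$.} The increment equals $X_t(\crtcl)-\E[X_t(\crtcl)\mid\calH_{t-1}]$. Subtracting a $\calH_{t-1}$-measurable quantity does not change conditional variance, so
\[
\Var\!\left(\tildeMart{t}{\crtcl}-\tildeMart{t-1}{\crtcl}\,\middle|\,\calH_{t-1}\right)=\Var\!\left(X_t(\crtcl)\mid\calH_{t-1}\right).
\]
We then invoke \Cref{lem:variance_loss_global}, whose hypotheses are $\ereg$, $\estr$, and the $\beta$-stability invariant up to round $t$. The first two are assumed. For the third, $t\le\tau$ together with the definition of $\tau$ as the minimum threshold time means that no vertex $v\in U_{t-1}$ has $\deg_{t-1}(v)<\beta d_0$. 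Hence $\delta_{t-1}\ge\beta d_0$ whenever $t\le\tau$. We must also cover the round-$0$ boundary, i.e.\ $\delta_0\ge\beta d_0$. This follows from $\ereg$ provided $\beta\le c/2$. The lemma then bounds the conditional variance by a constant $C'$. That constant depends only on $\beta$, $c$ and the $\ereg$ ratio $\Delta_0/\delta_0\le 4$, and in particular not on $t$, $\crtcl$ or the realization.

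Summing over the two cases gives $V_T(\crtcl)\le C'\cdot|\{t\le T: t\le\tau\}|\le C'T$.

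\textbf{Main obstacle: measurability.} The event $\{t\le\tau\}$ must be determined by $\calH_{t-1}$, so that the case split is legitimate inside the conditional variance. By \Cref{def:threshold-time}, $\tau(v)$ refers to $\deg_{\tau+1}$. So $\{\tau\ge t\}$ means that all degrees through round $t$ stay at least $\beta d_0$, and this depends on round-$t$ proposals.

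We would handle this in one of two ways:
\begin{itemize}
\item Interpret the stopped martingale so that the increment at round $t$ is included exactly when the invariant holds at the start of round $t$, i.e.\ $\delta_{t-1}\ge\beta d_0$. This event is $\calH_{t-1}$-measurable, and on it \Cref{lem:variance_loss_global} applies.
\item Observe that on the complementary event the increment is zero.
\end{itemize}

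A further wrinkle is that $\crtcl$ itself is defined via the whole future. We would avoid this by bounding, for every fixed $v$ simultaneously, the conditional variance of the $v$-increment on $\{\delta_{t-1}\ge\beta d_0\}$ by the same uniform $C'$. The bound then transfers to $\crtcl$ regardless of which vertex it is.
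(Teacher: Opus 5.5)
Your argument is correct and is the paper's: split on $t\le\tau$, note the increment vanishes for $t>\tau$, apply \Cref{lem:variance_loss_global} to $\Var(X_t(\crtcl)\mid\calH_{t-1})$ when $t\le\tau$, and sum to get $C'(T\wedge\tau)\le C'T$. The paper simply writes $\Var(\cdot\mid\calH_{t-1})=\mathds{1}_{\{t\le\tau\}}\Var(X_t(\crtcl)\mid\calH_{t-1})$, so your extra points are legitimate repairs of gaps it glosses over rather than a different route: $\{t\le\tau\}$ depends on $\deg_t$ and is not $\calH_{t-1}$-measurable, the case $\delta_0\ge\beta d_0$ needs separate treatment, and the bound should be taken uniformly over fixed $v$ because $\crtcl$ depends on the future; these close cleanly via $\Var(\cdot\mid\calH_{t-1})\le\E[(\cdot)^2\mid\calH_{t-1}]\le\mathds{1}_{B}\Var(X_t(v)\mid\calH_{t-1})\le C'$, where $B\supseteq\{t\le\tau\}$ is the $\calH_{t-1}$-measurable event that $\beta$-stability holds up to round $t$.
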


\begin{proof}
We recall the definition of the stopped martingale,
\[
\tildeMart{t}{\crtcl} - \tildeMart{t-1}{\crtcl}
=
\begin{cases}
X_t(\crtcl) - \mathbb E[X_t(\crtcl)\mid \calH_{t-1}] & t \le \tau, \\
0 & t > \tau.
\end{cases}
\]
Hence,
\[
\mathrm{Var}\!\left(
\tildeMart{t}{\crtcl} - \tildeMart{t-1}{\crtcl}
\;\middle|\;
\calH_{t-1}
\right)
=
\mathds{1}_{\{t \le \tau\}}
\mathrm{Var}\!\left(
X_t(\crtcl)
\;\middle|\;
\calH_{t-1}
\right).
\]
By definition of stopping time $\tau$, the \bstab~invariant holds for all vertices.
Thus, \Cref{lem:variance_loss_global} applies, and we have
\[
\mathrm{Var}(X_t(\crtcl)\mid \calH_{t-1}) \le C'
\]
whenever $t \leq  \tau$.

We then have
\[
V_T(\crtcl)
=
\sum_{t=1}^{T}
\mathds{1}_{\{t \le \tau\}}
\mathrm{Var}(X_t(\crtcl)\mid \calH_{t-1})
\le
C' \cdot (T \wedge \tau)
\le
C' T,
\]
proving the lemma.
\end{proof}

\medskip
\subsection{Concentration of Cumulative Degree-Loss via Freedman's Inequality}\label{sec:freedman}

We combine the increment and quadratic variation bounds to apply Freedman's inequality to the stopped martingale, yielding a deviation bound for the cumulative degree loss.
\begin{lemma}[Freedman Deviation Bound]
\label{lem:freedman-deviation}
Fix $\epsilon>0$. Let $T = \alpha d_0$ for a constant $\alpha>0$ and let $R=C_1 \log n$ for a large constant $C_1$. Suppose that the events $\ereg$, $\estr$
and $\mathcal E_{\mathrm{incr}}(T,R)$ hold.

Then for the stopped martingale $\tildeMartseq{t}{\crtcl}{\N}$ we have,
\[
\Pr\!\left(
\max_{k \le T}
\left|\tildeMart{k}{\crtcl}\right|
\ge \frac{d_0}{4}
\right)
\le
\exp\!\big(-\Theta(\log^{\epsilon} n)\big).
\]
\end{lemma}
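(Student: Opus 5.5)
The plan is to apply Freedman's inequality (\Cref{fact:freedman}) directly to the stopped martingale $\tildeMartseq{t}{\crtcl}{\N}$ and its negative. Freedman needs two inputs. The first is a bound on the increments, which on the event $\mathcal E_{\mathrm{incr}}(T,R)$ is $R=C_1\log n$. The second is a bound on the predictable quadratic variation, which by \Cref{lem:quadratic-variation} is $V_T(\crtcl)\le C'T=C'\alpha d_0$ under $\ereg$ and $\estr$.

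Since $\mathcal E_{\mathrm{incr}}$ holds only with high probability and not surely, I would not condition on it, because conditioning would destroy the martingale property. Instead I would use the standard form of Freedman's inequality in which the increment bound is part of the event being estimated. Alternatively, I would truncate the increments. Concretely, I would define $D_t:=\tildeMart{t}{\crtcl}-\tildeMart{t-1}{\crtcl}$ and the truncated, recentred increments
\[
D_t' := D_t\mathds{1}\{|D_t|\le R\}-\E\bigl[D_t\mathds{1}\{|D_t|\le R\}\mid\calH_{t-1}\bigr].
\]
These satisfy $|D_t'|\le 2R$, and their conditional variance is at most $\Var(D_t\mid\calH_{t-1})$ plus a negligible term. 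On $\mathcal E_{\mathrm{incr}}(T,R)$ the truncated and untruncated sums differ only by the accumulated recentring terms. Each recentring term is bounded by $\E[|D_t|\mathds{1}\{|D_t|>R\}\mid\calH_{t-1}]$. Using the tail bound of \Cref{lem:bernstein-inround} (valid since $t\le\tau$) together with the trivial bound $|D_t|=O(d_0)$, each such term is $O(d_0 n^{-12})$. Summed over $T$ rounds they total $o(1)$, so they cannot affect a deviation of size $d_0/4$.

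Applying Freedman to the truncated martingale with $\lambda=d_0/4$, increment bound $2R$, and variance budget $\sigma^2=C'\alpha d_0$ gives
\[
\Pr\Bigl(\max_{k\le T}\bigl|\textstyle\sum_{t\le k}D_t'\bigr|\ge \tfrac{d_0}{8}\Bigr)\le 2\exp\!\left(-\frac{(d_0/8)^2}{2C'\alpha d_0+\tfrac{4}{3}\cdot 2R\cdot d_0/8}\right).
\]
The denominator is $\Theta(d_0\log n)$ because $R=C_1\log n$ dominates the constant $C'\alpha$. The exponent is therefore $\Theta(d_0^2/(d_0\log n))=\Theta(d_0/\log n)=\Theta(\log^{\epsilon}n)$, which is where $\epsilon>0$ is needed. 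Freedman controls the running maximum, so the bound on $\max_{k\le T}$ comes for free. Combining this with the $o(1)$ recentring error, and intersecting with $\mathcal E_{\mathrm{incr}}$, gives the stated bound $\exp(-\Theta(\log^{\epsilon}n))$.

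The step I expect to be the main obstacle is the rigorous handling of the fact that $\mathcal E_{\mathrm{incr}}$ is a high-probability event rather than an almost-sure bound. There are two difficulties. First, the stopping time $\tau$ and the identity of $\crtcl$ are defined via the whole trajectory, so they are not obviously adapted, and the per-round bounds must be applied only on $\{t\le\tau\}$. Second, the recentring terms must be shown to be negligible.

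If truncation proves awkward, I would instead invoke the version of Freedman that bounds $\Pr(\exists k\le T:|M_k|\ge\lambda,\ V_k\le\sigma^2,\ \max_{j\le k}|D_j|\le R)$ and add $\Pr(\mathcal E^c_{\mathrm{incr}})=O(n^{-10})$ from \Cref{lem:martingale-jump}. That additive term is dominated by $\exp(-\Theta(\log^\epsilon n))$. The adaptedness of $\crtcl$ itself I would sidestep by proving the bound for each fixed vertex $v$ with its own stopped martingale (stopped at the global $\tau$) and taking a union bound over $n$ vertices. This would require the per-vertex tail to be $n^{-1}\exp(-\Theta(\log^\epsilon n))$, which holds only for $\epsilon\ge1$. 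Otherwise I would work with the critical vertex as the statement does and note that the estimate is uniform over which vertex it is.
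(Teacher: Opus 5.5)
Your proposal is correct and is essentially the paper's proof. Both apply Freedman's inequality (\Cref{fact:freedman}) to the stopped martingale with increment bound $R=C_1\log n$, variance budget $\sigma^2=C'\alpha d_0$ from \Cref{lem:quadratic-variation}, and threshold $d_0/4$, which gives exponent $\Theta(d_0/\log n)=\Theta(\log^{\epsilon}n)$. The paper simply treats the bound supplied by $\mathcal E_{\mathrm{incr}}(T,R)$ as if it held almost surely and plugs it in. Your truncation-and-recentring step, with its $O(Td_0n^{-12})$ correction, is the rigorous version of that same step.

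Your concern that $\crtcl$ depends on the future trajectory is legitimate, and it is not resolved by saying the estimate is uniform over vertices. A bound for each fixed $v$ does not transfer to a vertex selected by the trajectory without a union bound, and this tail is too weak to survive one for $\epsilon\le 1$. The paper does not address this either: it takes the martingale property of $\tildeMart{t}{\crtcl}$ with respect to $(\calH_t)$ from its earlier lemma. On the paper's own terms, your argument for this lemma is complete.
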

\begin{proof}
As the  event $\mathcal E_{\mathrm{incr}}(T,C_1 \log n)$ holds, we have that the martingale  $\tildeMartseq{t}{\crtcl}{\N}$ satisfies
\begin{align}
\left|
 \tildeMart{t}{\crtcl} - \tildeMart{t-1}{\crtcl}
\right|
\le 
C_1 \log n \label{eq:Mt1}
\quad
\text{for all } t \le T .
\end{align}
Since the events $\mathcal E_{\mathrm{reg}}$ and $\mathcal E_{\mathrm{struct}}$ hold, \Cref{lem:quadratic-variation} implies 
\begin{align}
V_T(\crtcl)
\le
C' \alpha d_0 =: \sigma^2.\label{eq:Vt1}
\end{align}
The bound in \cref{eq:Vt1} holds for every realization of the process, so the event
$\{V_T(\crtcl)\le\sigma^2\}$ in \Cref{fact:freedman} has probability one, and the probability of the
joint event there is the probability of the deviation alone.
Plugging in the bounds from \cref{eq:Mt1} and \cref{eq:Vt1} and $
\Lambda := \frac{d_0}{4}$ into Freedman's inequality
(\Cref{fact:freedman}) we get,
\[
\Pr\!\left(
\max_{k \le T}
\left|\tildeMart{k}{\crtcl}\right|
\ge \frac{d_0}{4}
\right)
\le
2 \exp\!\left(
-\frac{\Lambda^2}
{2(\sigma^2 + R\Lambda/3)}
\right).
\]
As
\[
\Lambda^2 = \frac{d_0^2}{16}\quad \text{and} 
\quad
\sigma^2 + R\Lambda/3
= O(d_0 \log n),
\]
we have
\[
\frac{\Lambda^2}{2(\sigma^2 + R\Lambda/3)}
=
\Theta\!\left(
\frac{d_0}{\log n}
\right).
\]
Finally, since $d_0 = \Theta(\log^{1+\epsilon} n)$, we get
\[
\Pr\!\left(
\max_{k \le T}
\left|\tildeMart{k}{\crtcl}\right|
\ge \frac{d_0}{4}
\right)
\le
\exp\!\big(-\Theta(\log^{\epsilon} n)\big),
\]
 proving the lemma.
\end{proof}
Thus, cumulative deviations remain $O(d_0)$
over $T=\Theta(d_0)$ rounds.

\medskip
\subsection{Putting Everything Together: \bstab~Holds for $\Omega(d_0)$ Rounds}\label{sec:putting-all-together}

We now combine the deviation bound from \Cref{lem:freedman-deviation} with the expectation estimates for
the degree loss derived in \Cref{lem:exp-matched-neighbours-t} to prove that the $\beta$-stability
invariant persists for $T=\Theta(d_0)$ rounds.

\smallskip

The key idea is that while the expected degree loss in each round is small, tight martingale control prevents cumulative deviations from causing vertex degrees to drop below a constant fraction of their initial values.

\smallskip

Intuitively, the strong per-round concentration bounds control the local fluctuations of degree loss, while the cumulative-deviation martingale  ensures that these fluctuations do not accumulate enough to destabilize vertex degrees over time.

\begin{theorem}[Persistence of $\beta$-stability for $\Omega(d_0)$ rounds]
\label{thm:persistence}
Fix any $\epsilon>0$. Let $G \sim \mathbb{G}(n,p)$ with $p = \frac{c \log^{1+\epsilon} n}{n}$ for a sufficiently large
constant $c>0$, and let $d_0 := \log^{1+\epsilon} n$. There exist constants
$\alpha,\beta>0$ such that, for $T = \alpha d_0$, the following holds with probability $1-o(1)$:

For all rounds $t \le T$ and all vertices $v \in U_t$,
\[
\deg_t(v) \ge \beta d_0 .
\]
\end{theorem}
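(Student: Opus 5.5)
We will argue on the intersection of the good events $\ereg$, $\estr$ and $\mathcal E_{\mathrm{incr}}(T,C_1\log n)$, which by \Cref{fact:deg-conc}, \Cref{lem:local_dependency} and \Cref{lem:martingale-jump} together have probability $1-o(1)$. We further intersect with the complement of the Freedman deviation event of \Cref{lem:freedman-deviation}, which costs only $\exp(-\Theta(\log^\epsilon n))=o(1)$. The constants are fixed as follows. Under $\ereg$ we have $\deg_0(v)\ge \frac c2 d_0$ and $\Delta_t\le\Delta_0\le 2cd_0$. Take $\beta:=c/4$; since $c$ is large we may assume $c\ge 4$, so in particular $\beta\ge 1$. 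While \bstab\ holds, \Cref{lem:exp-matched-neighbours-t} gives $\E[X_t(v)\mid\calH_{t-1}]\le\Delta_{t-1}/\delta_{t-1}\le 2c/\beta=8=: \kappa$. Choose $\alpha$ so small that $\kappa\alpha d_0\le \frac{c}{4}d_0-\frac{d_0}{4}$, for example $\alpha=(c-1)/(4\kappa)$ up to constants; with $c$ large this leaves room.

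The core step is a contradiction argument on the stopping time. Suppose $\tau<T$, i.e.\ \bstab\ first fails for the critical vertex $\crtcl$. Then $\crtcl\in U_\tau$ and $\deg_{\tau+1}(\crtcl)<\beta d_0$, while for all $t\le\tau$ every unmatched vertex has degree at least $\beta d_0$. We telescope the degree identity $\deg_t=\deg_{t-1}-X_t$ to obtain
\[
\deg_{\tau+1}(\crtcl)=\deg_0(\crtcl)-\sum_{k=1}^{\tau+1}\E[X_k(\crtcl)\mid\calH_{k-1}]-\Mart{\tau+1}{}{\crtcl}.
\]
Each conditional expectation is computed at a round where \bstab\ holds up to that round, since the invariant concerns $\delta_{k-1}$ for $k\le\tau+1$. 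Hence the drift is at most $\kappa(\tau+1)\le\kappa T$. The martingale term equals the stopped martingale. Here one must be careful with the indexing of $\tau$ in \Cref{def:stopped-martingale}: the violating round $\tau+1$ still has \bstab\ at its start, so the increment at round $\tau+1$ is legitimately controlled by \Cref{lem:bernstein-inround} and \Cref{lem:variance_loss_global}. I would therefore stop at $\tau+1$ (or reinterpret $\tau$ accordingly) so that $|\Mart{\tau+1}{}{\crtcl}|\le\max_{k\le T}|\tildeMart{k}{\crtcl}|<d_0/4$. Combining these bounds gives
\[
\deg_{\tau+1}(\crtcl)\ge \tfrac c2 d_0-\kappa\alpha d_0-\tfrac{d_0}{4}\ge \tfrac c4 d_0=\beta d_0,
\]
which is a contradiction. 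Therefore $\tau\ge T$, and the statement follows for all $t\le T$ and $v\in U_t$.

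The main obstacle is that $\crtcl$ is itself random and defined through the future, so \Cref{lem:freedman-deviation} applied to "the martingale of $\crtcl$" is not a priori a statement about a fixed adapted martingale. I would handle this by running the same argument for each fixed vertex $v$. Define the martingale $M_t(v)$ stopped at the global stopping time $\tau$, which is a genuine stopping time with respect to $(\calH_t)$ since "\bstab\ holds up to round $t$" is $\calH_{t-1}$-measurable. Apply Freedman to each fixed $v$ and take a union bound over $n$ vertices. This union bound is where the stated tail $\exp(-\Theta(\log^\epsilon n))$ is too weak. To fix it, I would rerun Freedman with $\Lambda=d_0/4$, $R=C_1\log n$ and $\sigma^2=C'\alpha d_0$. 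The exponent is $\Theta(d_0/\log n)=\Theta(\log^{\epsilon}n)$, which for fixed $\epsilon$ is not $\ge 2\log n$. The fix I would try first is to truncate increments more cleverly, using the Bernstein-type tail of \Cref{lem:bernstein-inround}: the increments are sub-exponential with $O(1)$ scale, not merely bounded by $\log n$. A Freedman/Bernstein inequality for martingales with conditionally sub-exponential increments then gives an exponent $\Theta(\min(\Lambda^2/\sigma^2,\Lambda))=\Theta(d_0)\gg\log n$, enough for the union bound over vertices. The bounded-increment event is retained only to make the truncation error negligible.
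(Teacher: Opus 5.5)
Your proof is correct. Its skeleton is the paper's: write $\deg_{\tau+1}(\crtcl)$ as the initial degree, minus a drift of at most $\Delta_{k-1}/\delta_{k-1}=O(1)$ per round, minus a martingale term, and contradict the violation at round $\tau+1$. The genuine difference is the concentration step.

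The paper applies \Cref{lem:freedman-deviation} to the single process $\widetilde{\mathbf M}_t(\crtcl)$. It uses only the $C_1\log n$ increment bound from $\mathcal E_{\mathrm{incr}}$ and the $O(d_0)$ quadratic variation, which gives a tail $\exp(-\Theta(d_0/\log n))=\exp(-\Theta(\log^{\epsilon}n))$. That suffices only because just one martingale is controlled. You instead control the martingale of every fixed vertex, stopped at $\tau+1$, and take a union bound over $n$ vertices. The bounded-increment tail cannot survive that union bound (for $\epsilon<1$ it plainly does not), so you use the full sub-exponential tail of \Cref{lem:bernstein-inround} and get $\exp(-\Theta(d_0))$ per vertex, which does survive.

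Your route buys rigor exactly where the paper's written proof is loose. First, $\crtcl$ is chosen using the whole trajectory, so $t\mapsto\mathbf M_t(\crtcl)$ is not adapted to $(\calH_t)$, and Freedman's inequality does not apply to it as to a fixed martingale. Second, with $\widetilde{\mathbf M}_t=\mathbf M_{t\wedge\tau}$ one has $\widetilde{\mathbf M}_{\tau+1}(\crtcl)=\mathbf M_{\tau}(\crtcl)$, so the paper's bound omits precisely the increment of the violating round. The predictable indicator is $\mathds{1}\{t\le\tau+1\}$, i.e.\ ``\bstab\ holds up to round $t$'', not $\mathds{1}\{t\le\tau\}$. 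Third, $\mathcal E_{\mathrm{incr}}$ is a global event, so conditioning on it does not make the increments almost surely bounded. The paper's route buys brevity, and it needs \Cref{lem:bernstein-inround} only through its $O(\log n)$ consequence.

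One simplification for your version: you need neither truncation nor $\mathcal E_{\mathrm{incr}}$. On the $\calH_{t-1}$-measurable event that \bstab\ holds up to round $t$, the tail of \Cref{lem:bernstein-inround} plus conditional mean zero gives $\E[e^{\theta D_t}\mid\calH_{t-1}]\le e^{K\theta^2}$ for $|\theta|\le\theta_0$. Here $D_t$ are the increments and $K,\theta_0>0$ are constants. Hence $\exp(\pm\theta\,\widetilde{\mathbf M}_t(v)-K\theta^2t)$ are supermartingales. The maximal inequality with $\theta=\min\{\theta_0,d_0/(8KT)\}$ then gives $\Pr(\max_{t\le T}|\widetilde{\mathbf M}_t(v)|\ge d_0/4)\le2\exp(-\Theta(d_0))$ directly. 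Conditioning on $\ereg\cap\estr$ is harmless, since these events depend only on $G$, which is independent of the proposals.
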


\begin{proof}

If $\tau \geq T$, then by definition of stopping time the
$\beta$-stability invariant holds for all rounds $t \le T$.
Therefore we only need to bound the probability that the stopping time occurs within the first $T$ rounds. In particular, we show that for $T = \alpha d_0$, we have
\[
\Pr(\tau < T) = o(1).
\]

Let $R = C_1 \log n$ for a sufficiently large constant $C_1$.
We condition on the high-probability events
$\mathcal E_{\mathrm{reg}}, \mathcal E_{\mathrm{struct}}$, and
$\mathcal E_{\mathrm{incr}}(T,R)$.
\smallskip

If \bstab~fails before round $T$, vertex $\crtcl$ witnesses the earliest such failure. For the critical vertex $\crtcl$ and any $t-1\leq \tau$ with $\crtcl\in U_t$, we have
\begin{align}
\deg_t(\crtcl)
=
\deg_0(\crtcl)
-
\sum_{k=1}^{t}\E[X_{k}(\crtcl)\mid\calH_{k-1}]
-
\Mart{t}{}{\crtcl}.
\end{align}
Additionally, for any round $t \le \tau$, we have 
$
\Mart{t}{}{\crtcl}=\tildeMart{t}{\crtcl}
$.
\smallskip

Further, for any $k-1 \le \tau$, the $\beta$-stability invariant holds, and this gives
$\delta_{k-1} \ge \beta d_0$.  On the other hand, under $\ereg$, we have  $\Delta_{k-1}  \le 2cd_0$.

Hence by \Cref{lem:exp-matched-neighbours-t} we have,
\begin{align}
\mathbb E[X_k(\crtcl)\mid\calH_{k-1}]
\le
\frac{\Delta_{k-1}}{\delta_{k-1}}
\le
\frac{2c}{\beta}.\label{eq:expec-ht1}
\end{align}

Summing up the bound in \cref{eq:expec-ht1} till round $t$ we get
\begin{align}
\sum_{k=1}^{t}\E[X_k(\crtcl)\mid\calH_{k-1}]
&\le \frac{2c}{\beta} t \label{eq:c3sum1},\\
&\le \frac{2c\alpha}{\beta}d_0 \label{eq:c3sum2}.
\end{align}
Here \cref{eq:c3sum2} follows from \cref{eq:c3sum1} as $t \leq T=\alpha d_0$.

Under $\mathcal E_{\mathrm{reg}}$ we also have that
\[
\deg_0(\crtcl) \ge \frac{c}{2} d_0.
\]
Hence for all $t-1 \le \tau$, we have
\begin{align}
\deg_t(\crtcl)
\ge
\frac{c}{2}d_0
-
\frac{2c\alpha }{\beta}d_0
-
\tildeMart{t}{\crtcl}.\label{eq:deg1comb}
\end{align}
Now suppose $\tau < T$.
Then $\deg_{\tau+1}(\crtcl) < \beta d_0$, and therefore \cref{eq:deg1comb} gives
\begin{align}
\tildeMart{\tau + 1}{\crtcl}
>
\left(\frac{c}{2}-\frac{2c\alpha}{\beta}-\beta\right)d_0 .\label{eq:deg2-comb}
\end{align}

Choose $c$ sufficiently large and  $\alpha, \beta>0$ sufficiently small such that the following holds
\begin{align}
\frac{c}{2}-\frac{2c\alpha}{\beta} -\beta > \frac14 .
\end{align}

Thus \cref{eq:deg2-comb} simplifies and we have,
\begin{align}
\max_{k\le T}\left|\tildeMart{k}{\crtcl}\right| > \frac{d_0}{4}.
\end{align}
By \Cref{lem:freedman-deviation}, we have 

For the critical vertex $\crtcl$,
\begin{align}
\Pr\!\left(
\max_{k\le T}\left|\tildeMart{k}{\crtcl}\right|
> \frac{d_0}{4}
\right)
\le
\exp(-\Theta(\log^{\epsilon} n)).
\end{align}
Since $\eps >0$, we obtain
\begin{align}
\Pr(\tau < T\mid \ereg , \estr,\ejump(T,R))
\le
\exp(-\Theta(\log^{\epsilon} n)) = o(1).
\end{align}
Finally, we have that
\begin{align*}
\Pr(\tau < T)
&\le
\Pr(\mathcal E_{\mathrm{reg}}^c)
+
\Pr(\mathcal E_{\mathrm{struct}}^c)
+
\Pr(\mathcal E_{\mathrm{incr}}^c(T,R))
+
\Pr(\tau < T \mid \mathcal E_{\mathrm{reg}}, \mathcal E_{\mathrm{struct}}, \mathcal E_{\mathrm{incr}}(T,R)) \\
&=
o(1).
\end{align*}
Hence $\tau \geq T$ with probability $1-o(1)$, which implies that
\[
\deg_t(v)\ge \beta d_0
\quad
\text{for all } v\in U_t \text{ and all } t\le T .
\]
This completes the proof.

\end{proof}

This completes the local-to-global argument, showing that strong per-round control of fluctuations lifts to global control of the evolution of vertex degrees, yielding stability for $\Omega(d_0)$ rounds.

%% file: geometric.tex

\newpage

\section{Large Matchings from Invariants}\label{sec:deg-positivity}
In this section we deduce our bound on the size of matching. It uses the weaker degree-positivity invariant, which holds with high probability for the first $\alpha d_0$ rounds by \Cref{thm:persistence}.
Under the degree-positivity invariant, every unmatched vertex has at least
one available neighbour at the beginning of each round. We prove that this ensures that
each vertex has a large probability of becoming matched in a round,
leading to a geometric decay in the number of unmatched vertices. As \bstab~implies degree-positivity, this leads to a matching of linear size after $\alpha d_0$ rounds.

\medskip
\subsection{Geometric Decay Under Sustained Degree-Positivity}\label{sec:geometric-decay}

\begin{lemma}[One-step geometric decay under degree-positivity]
\label{lem:geometric-decay-iterative} Consider an execution of \ref{alg:main} on an input graph $G(V,E)$ with maximum degree at most $\Delta$.

Let $\mathcal{E}_t$ denote the event that the degree-positivity invariant
holds up to round $t$.

Then for every $t\ge1$,
\begin{align}
\E[\abs{U_t} \cdot \mathds{1}_{\mathcal{E}_t} \mid \mathcal{H}_{t-1}]
\le
\abs{U_{t-1}} \cdot \mathds{1}_{\mathcal{E}_{t-1}} \cdot \left(1-\frac{1}{\Delta}\right).\label{eq:geometric1}
\end{align}
Consequently,
\begin{align}
\E[\abs{U_t} \cdot \mathds{1}_{\mathcal{E}_t}]
\le
\E[\abs{U_{t-1}} \cdot \mathds{1}_{\mathcal{E}_{t-1}}]
\left(1-\frac{1}{\Delta}\right).\label{eq:geometric2}
\end{align}
\end{lemma}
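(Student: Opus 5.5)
The plan is to prove \cref{eq:geometric1} pointwise on the history and then obtain \cref{eq:geometric2} by the tower property. Fix a realization of $\calH_{t-1}$. On the complement of $\mathcal{E}_{t-1}$ we have $\mathcal{E}_t\subseteq\mathcal{E}_{t-1}$, because the invariant up to round $t$ contains the invariant up to round $t-1$. Hence $\mathds{1}_{\mathcal{E}_t}=0$ there, and both sides of \cref{eq:geometric1} vanish. So I will assume $\mathcal{E}_{t-1}$ holds and use $\mathds{1}_{\mathcal{E}_t}\le \mathds{1}_{\mathcal{E}_{t-1}}$.

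The point to watch is that $\mathcal{E}_t$ asks for $\deg_{t-1}(v)\ge1$ for all $v\in U_{t-1}$. This is a condition on the state at the start of round $t$, so it is determined by $\calH_{t-1}$. Two cases then arise. If $\mathcal{E}_t$ fails, the left side is $0$ and the inequality is trivial. Otherwise every $u\in U_{t-1}$ has $\deg_{t-1}(u)\ge1$, and the closed form from \Cref{sec:prelim} gives
\[
\Pr(I_t(u)=1\mid\calH_{t-1})=\sum_{w\in N_{t-1}(u)}\frac{1}{\deg_{t-1}(u)\deg_{t-1}(w)}\ge \frac{1}{\Delta}.
\]
This lower bound holds because each of the $\deg_{t-1}(u)$ terms is at least $1/(\deg_{t-1}(u)\Delta)$, using that residual degrees never exceed the maximum degree $\Delta$ of $G$. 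The degree-positivity of the neighbours $w$ is exactly what guarantees that each $w$ proposes and that the formula is valid.

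Next I will use linearity of expectation:
\[
\E\bigl[\abs{U_t}\mid\calH_{t-1}\bigr]=\sum_{u\in U_{t-1}}\bigl(1-\Pr(I_t(u)=1\mid\calH_{t-1})\bigr)\le \abs{U_{t-1}}\Bigl(1-\frac1\Delta\Bigr).
\]
Multiplying by the $\calH_{t-1}$-measurable indicator $\mathds{1}_{\mathcal{E}_t}\le\mathds{1}_{\mathcal{E}_{t-1}}$ gives \cref{eq:geometric1}. Finally, taking expectations of both sides and using $\E[\E[\cdot\mid\calH_{t-1}]]=\E[\cdot]$ gives \cref{eq:geometric2}.

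I expect the main obstacle to be measurability bookkeeping. The argument needs $\mathcal{E}_t$ to be $\calH_{t-1}$-measurable, since it concerns degrees at the start of round $t$. If instead $\mathcal{E}_t$ were read as involving the state after round $t$, I would bound $\mathds{1}_{\mathcal{E}_t}\le\mathds{1}_{\mathcal{E}_{t-1}}$ inside the expectation before computing. The per-vertex bound $1/\Delta$ is routine once positivity of all relevant degrees is in hand.
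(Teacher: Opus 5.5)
Your proposal is correct and follows the paper's proof essentially step for step: it uses that $\mathcal{E}_t$ is determined by $\calH_{t-1}$, so the left side vanishes off $\mathcal{E}_t$, then lower-bounds each $u\in U_{t-1}$'s matching probability by $1/\Delta$ via the closed form, sums over $U_{t-1}$ using $\mathds{1}_{\mathcal{E}_t}\le\mathds{1}_{\mathcal{E}_{t-1}}$, and finishes with the tower property. One small remark: the neighbours' degrees are positive without any invariant, since $w\in N_{t-1}(u)$ already forces $u\in N_{t-1}(w)$.
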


\begin{proof}
Recall that $\mathcal{H}_{t-1}$ denote the history of the process up to the
start of round $t$. Conditioning on the $\mathcal{H}_{t-1}$ fixes the residual graph $G_{t-1}$.
\smallskip

Note that $\mathcal{E}_t$ is determined by $\mathcal{H}_{t-1}$, since it only involves the degrees
$\deg_{s-1}$ for $s\le t$. If $\mathcal{E}_t$ does not hold, then the left-hand side of
\cref{eq:geometric1} is zero, and the claim follows trivially. We therefore assume that
$\mathcal{E}_t$ holds for the remainder of the proof; in particular, $\mathcal{E}_{t-1}$ holds as well.
\smallskip

Fix $v\in U_{t-1}$. Given $\mathcal{H}_{t-1}$, the proposal choice of $v$
in round $t$ is independent and uniform over $N_{t-1}(v)$. Thus,
\begin{align}
	\Pr(v \in M_t \mid \mathcal{H}_{t-1})
=
\sum_{u\in N_{t-1}(v)}
\frac{1}{\deg_{t-1}(v)\deg_{t-1}(u)}.
\end{align}

Since $\mathcal{E}_{t}$ holds, $\deg_{t-1}(v)\ge1$ and
$\deg_{t-1}(u)\le\Delta$ for every neighbour $u$. Hence,
\[
\Pr(v \in M_t \mid \mathcal{H}_{t-1})
\ge
\frac{1}{\Delta}.
\]
Therefore,
\[
\Pr(v \in U_t \mid \mathcal{H}_{t-1})
\le
1-\frac{1}{\Delta}.
\]
Summing over all $v\in U_{t-1}$ and noting that
$\mathds{1}_{\mathcal{E}_t}\le\mathds{1}_{\mathcal{E}_{t-1}}$ yields
\begin{align*}
\E[\abs{U_t} \cdot \mathds{1}_{\mathcal{E}_t} \mid \mathcal{H}_{t-1}] 
&= \sum\limits_{v \in U_{t-1}} \Pr(v \in U_t \mid \mathcal{H}_{t-1}) \cdot \mathds{1}_{\mathcal{E}_t}\\
&\le
\abs{U_{t-1}}\cdot\mathds{1}_{\mathcal{E}_{t-1}}
\left(1-\frac{1}{\Delta}\right).
\end{align*}
We have thus shown \cref{eq:geometric1} holds. Taking expectations on both sides yields \cref{eq:geometric2}, proving the lemma.
\end{proof}

We now iterate the one-step bound of \Cref{lem:geometric-decay-iterative}
to obtain a global bound on the matching size after $T$ rounds.
\begin{proposition}[Geometric decay under the degree-positivity invariant]
	\label{lem:geometric-decay} Consider an execution of \ref{alg:main} on a graph $G(V,E)$ with maximum degree at most $\Delta$.

	Let $\mathcal{E}_T$ be the event that the degree-positivity invariant holds up to round $T$. 
	For every $T \ge 1$ we have that, 
	\[
	\E[\abs{M_T}] 
	\ge
	\frac{n}{2}\left(1- \left(1-\frac{1}{\Delta}\right)^T - \Pr(\neg \mathcal{E}_{T})\right).
	\]
\end{proposition}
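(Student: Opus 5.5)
The plan is to iterate \Cref{lem:geometric-decay-iterative} and then convert the bound on unmatched vertices into a bound on $\abs{M_T}$. Applying \cref{eq:geometric2} for $t=1,\dots,T$ and using $\abs{U_0}=n$ gives
\[
\E\!\left[\abs{U_T}\cdot \mathds{1}_{\mathcal{E}_T}\right]\le n\left(1-\frac{1}{\Delta}\right)^T .
\]
This is a direct induction. The base case is $\E[\abs{U_0}\mathds{1}_{\mathcal E_0}]\le n$, where $\mathcal E_0$ is vacuous (the whole space).

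Next I split $\abs{U_T}$ according to whether $\mathcal{E}_T$ holds. Since $\abs{U_T}\le n$ deterministically,
\[
\E[\abs{U_T}] = \E[\abs{U_T}\mathds{1}_{\mathcal E_T}] + \E[\abs{U_T}\mathds{1}_{\neg\mathcal E_T}]
\le n\left(1-\frac1\Delta\right)^T + n\Pr(\neg\mathcal E_T).
\]
Each matched edge removes exactly two vertices from $U$, and vertices leave $U$ only through matching. Hence $\abs{M_T} = (n-\abs{U_T})/2$ deterministically. Taking expectations gives
\[
\E[\abs{M_T}] = \frac{n}{2}\left(1-\frac{\E[\abs{U_T}]}{n}\right)\ge \frac n2\left(1-\left(1-\frac1\Delta\right)^T-\Pr(\neg\mathcal E_T)\right).
\]

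The only step needing care is the bookkeeping in the iteration: the indicator must be monotone, $\mathds 1_{\mathcal E_t}\le \mathds 1_{\mathcal E_{t-1}}$, which \Cref{lem:geometric-decay-iterative} already uses. The identity $\abs{M_T}=(n-\abs{U_T})/2$ follows from the correctness lemma for \ref{alg:main}. I do not expect a genuine obstacle; the content lies in the one-step lemma, which is assumed.
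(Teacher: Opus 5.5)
Your proposal is correct and matches the paper's proof essentially step for step. You iterate \Cref{lem:geometric-decay-iterative} to get $\E[\abs{U_T}\mathds{1}_{\mathcal{E}_T}]\le n(1-1/\Delta)^T$, bound the complementary part by $n\Pr(\neg\mathcal{E}_T)$ using $\abs{U_T}\le n$, and convert via the deterministic identity $\abs{M_T}=(n-\abs{U_T})/2$.
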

\begin{proof}
	Let $T \geq 1$.
	We begin by decomposing $\E[\abs{U_T}]$ using conditional expectations,
		\begin{align}
		\E[\abs{U_T}] = \E[\abs{U_T}\cdot \mathds{1}_{\mathcal{E}_T}] + \E[\abs{U_T}\cdot \mathds{1}_{\neg \mathcal{E}_T}].\label{eqn:unmatched-expansion}
	\end{align} 
We will separately bound the two terms in \cref{eqn:unmatched-expansion}. 
\smallskip

Consider the term $ \E[\abs{U_T}\cdot \mathds{1}_{\mathcal{E}_T}]$. By \Cref{lem:geometric-decay-iterative} we have that,
	\begin{align}
		\E[\abs{U_T} \cdot \mathds{1}_{\mathcal{E}_T}]
		\le
		\E[\abs{U_{T-1}} \cdot \mathds{1}_{\mathcal{E}_{T-1}}]
		\left(1-\frac{1}{\Delta}\right). \notag
	\end{align}
	Applying \Cref{lem:geometric-decay-iterative} recursively gives,
	\begin{align}
		\E[\abs{U_T} \cdot \mathds{1}_{\mathcal{E}_T}]
		\le
		\E[\abs{U_{0}} \cdot \mathds{1}_{\mathcal{E}_{0}}]
		\left(1-\frac{1}{\Delta}\right)^{T}. \notag
	\end{align}
	Since $\abs{U_0} = n$, and as $\mathcal{E}_0$ holds trivially with probability $1$, we have,
	\begin{align}
		\E[\abs{U_T} \cdot \mathds{1}_{\mathcal{E}_T}]
	\le
	n\left(1-\frac{1}{\Delta}\right)^T.\label{eqn:unmatched-bound1}
	\end{align}
	\smallskip
Next, we consider the term $\E[\abs{U_T}\cdot \mathds{1}_{\neg \mathcal{E}_T}]$. As	 $\abs{U_T}\leq n$, we have 
\smallskip
\begin{align}
\E[\abs{U_T}\cdot \mathds{1}_{\neg \mathcal{E}_T}]\leq  n\cdot \Pr(\neg \mathcal{E}_T)\label{eqn:unmatched-not}
\end{align}
\smallskip
Plugging in the bounds obtained in \cref{eqn:unmatched-bound1,eqn:unmatched-not} into \cref{eqn:unmatched-expansion} we have,
	\begin{align}
		\E[\abs{U_T}] \leq n\left(1-\frac{1}{\Delta}\right)^T + n\cdot \Pr(\neg \mathcal{E}_T).\label{eqn:unmatched-unconditional}
	\end{align}
	\smallskip
	Since the number of matching edges is half of the number of matched vertices, we have
	\smallskip
	\begin{align}
		\E[\abs{M_T}] 
		= \frac{1}{2}\left(n - \E[\abs{U_T}]\right).\label{eqn:matched-unmatched}
	\end{align}
	\smallskip
	Finally, using \cref{eqn:matched-unmatched} and the bound obtained in \cref{eqn:unmatched-unconditional}, we have
	\[
		\E[\abs{M_T}] 
	\ge
	\frac{n}{2}\left(1- \left(1-\frac{1}{\Delta}\right)^T - \Pr(\neg \mathcal{E}_T)\right),
	\]
	thus proving the lemma.
\end{proof}

\begin{remark}
We note that \Cref{lem:geometric-decay} makes no assumptions about the
existence or structure of a maximum matching in the underlying graph $G$.
In particular, it does not assert that every vertex is matchable,
nor that the algorithm must eventually produce a large matching.
Rather, it states that \text{conditional} on the degree-positivity
invariant holding for $T$ rounds, the probability that any given vertex
remains unmatched decays geometrically with $T$.
\end{remark}
The lemma can therefore be viewed as a general algorithmic principle:
whenever the degree-positivity invariant can be maintained for sufficiently
many rounds, strong guarantees on the size of the resulting matching follow.

\smallskip

Finally, \Cref{lem:geometric-decay} bounds only the \emph{expected} size of the matching after $T$
rounds. To obtain a high-probability bound, we use that the number of edges added in a single round
is concentrated around its conditional expectation (\Cref{lem:app-second-moment} in
\Cref{sec:app-second-moment}), and apply this round by round.
Since $\beta$-stability implies degree positivity, \Cref{thm:persistence} shows that, with high
probability, the degree-positivity invariant holds for the first $T=\alpha d_0$ rounds. Together with
\Cref{lem:geometric-decay-iterative,lem:app-second-moment}, this yields the following theorem.

\begin{theorem}[Large matching after $\Theta(d_0)$ rounds]
\label{thm:linear-rounds-matching}
Fix any $\epsilon>0$. Let $G \sim \mathbb{G}(n,p)$ with $p = \frac{c \log^{1+\epsilon} n}{n}$ for a sufficiently large
constant $c>0$, and let $d_0 := \log^{1+\epsilon} n$. Let $\alpha>0$ be the
constant of \Cref{thm:persistence} and let $T=\alpha d_0$. Then, with probability $1-o(1)$,
\[
|M_T| \ge \frac{n}{2}\left(1-\exp\!\left(-\frac{\alpha}{2c}\right)-o(1)\right).
\]
\end{theorem}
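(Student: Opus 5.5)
We work on the intersection of $\ereg$ and the event $\mathcal G_T$ that the conclusion of \Cref{thm:persistence} holds. Each has probability $1-o(1)$, and on it $\Delta\le 2c\,d_0$ and degree positivity holds for all $t\le T$. The idea is to turn the one-step decay of \Cref{lem:geometric-decay-iterative} into a high-probability statement about $|U_t|$, round by round, and then iterate.

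Let $\mathcal E_t$ be as in \Cref{lem:geometric-decay-iterative}, and set $W_t:=|U_t|\mathds 1_{\mathcal E_t}$. For each $t\le T$, \Cref{lem:geometric-decay-iterative} gives
\[
\E[W_t\mid\calH_{t-1}]\le \Bigl(1-\tfrac{1}{\Delta}\Bigr)W_{t-1}.
\]
The number of vertices matched in round $t$ is $|U_{t-1}|-|U_t|$, i.e.\ twice the number of edges added in that round. I invoke \Cref{lem:app-second-moment} for the fact that this quantity is concentrated around its conditional mean with deviation $o(n)$, say $O(\sqrt{n\,\mathrm{polylog}\,n})$ or $n^{1-\gamma}$, with failure probability $o(1/T)$. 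The reason to expect this is that each vertex's matched status depends only on proposals within distance one, so the conditional variance is $O(n\Delta^2)$. A union bound over the $T=\alpha d_0=\mathrm{polylog}\,n$ rounds then shows that, with probability $1-o(1)$, for all $t\le T$,
\[
W_t\le \Bigl(1-\tfrac1\Delta\Bigr)W_{t-1}+\eta_n,\qquad \eta_n=o(n/T).
\]

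Unrolling this recursion gives
\[
W_T\le n\Bigl(1-\tfrac1\Delta\Bigr)^T+T\eta_n = n\Bigl(1-\tfrac1\Delta\Bigr)^T+o(n).
\]
On $\mathcal G_T$ we have $\mathds 1_{\mathcal E_T}=1$, so $|U_T|=W_T$. Since $\Delta\le 2c d_0$,
\[
\Bigl(1-\tfrac1\Delta\Bigr)^T\le \exp\!\Bigl(-\tfrac{\alpha d_0}{2cd_0}\Bigr)=\exp\!\Bigl(-\tfrac{\alpha}{2c}\Bigr).
\]
Then $|M_T|=(n-|U_T|)/2$ gives the stated bound. The failure probabilities of $\ereg$, $\mathcal G_T$, and the concentration events sum to $o(1)$.

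The main obstacle is the per-round concentration step. The additive error must be $o(n/T)$ rather than merely $o(n)$, and it has to hold conditionally on an arbitrary history $\calH_{t-1}$, where the residual graph is no longer a fresh $\gnp$. I would handle this with a bounded-differences or second-moment argument for the number of matched vertices, exposing proposals vertex by vertex. Changing one proposal alters at most $2$ matched vertices, so McDiarmid's inequality over the $\le n$ independent proposals gives deviation $O(\sqrt{n\log n})$ with probability $1-n^{-2}$. This is uniform over histories and comfortably $o(n/T)$, so I would fall back on this directly if \Cref{lem:app-second-moment} only gives a Chebyshev-type bound.
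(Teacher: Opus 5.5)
Your proposal is correct and follows essentially the same route as the paper. The paper applies the one-step decay of \Cref{lem:geometric-decay-iterative} with $\Delta = 2c\,d_0$ on $\ereg$, and combines it with a per-round McDiarmid bound on the number of newly matched vertices (exactly what \Cref{lem:app-second-moment} proves, with deviation $2\sqrt{n\log n}$ and failure probability $2n^{-2}$ per round), a union bound over the $T$ rounds, and unrolling to $|U_T|\le n e^{-\alpha/(2c)} + 2T\sqrt{n\log n}$. Your bookkeeping via $W_t=|U_t|\mathds{1}_{\mathcal E_t}$ is an equivalent way of restricting to $\ereg\cap\mathcal E_T$, which the paper does directly.
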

\begin{proof}
Let $\mathcal E_t$ be the event that the degree-positivity invariant holds up to round $t$; it is
determined by $\calH_{t-1}$. On $\ereg$ the maximum degree of $G$ is at most $2pn=2c\,d_0$, so
\Cref{lem:geometric-decay-iterative} with $\Delta=2c\,d_0$ gives, on $\ereg\cap\mathcal E_t$,
\[
\E\bigl[|U_t|\,\big|\,\calH_{t-1}\bigr]\le\Bigl(1-\frac{1}{2c\,d_0}\Bigr)|U_{t-1}|.
\]
Let $\lambda:=\sqrt{n\log n}$. Since $|U_t|=|U_{t-1}|-2(|M_t|-|M_{t-1}|)$, \Cref{lem:app-second-moment} gives, for
every $t$,
\[
\Pr\Bigl(|U_t|>\E\bigl[|U_t|\,\big|\,\calH_{t-1}\bigr]+2\lambda\Bigr)
\le 2\exp\!\left(-\frac{2\lambda^2}{n}\right)=2n^{-2},
\]
and a union bound shows that, with probability at least $1-2Tn^{-2}$, this happens in no round
$t\le T$. On this event intersected with $\ereg\cap\mathcal E_T$, every $t\le T$ satisfies
\[
|U_t|\le\Bigl(1-\frac{1}{2c\,d_0}\Bigr)|U_{t-1}|+2\lambda,
\]
and unrolling from $|U_0|=n$ gives, for $T=\alpha d_0$,
\[
|U_T|\le n\Bigl(1-\frac{1}{2c\,d_0}\Bigr)^{T}+2T\lambda
\le n\,e^{-\alpha/(2c)}+2\alpha d_0\sqrt{n\log n}
= n\bigl(e^{-\alpha/(2c)}+o(1)\bigr).
\]
Finally, $\Pr(\ereg^c)\le n^{-10}$ by \Cref{fact:deg-conc}, and $\Pr(\mathcal E_T^c)=o(1)$ by
\Cref{thm:persistence}, since \bstab~implies degree-positivity. As $|M_T|=(n-|U_T|)/2$, the claimed
bound on $|M_T|$ holds with probability $1-o(1)$.
\end{proof}

\medskip
\section*{Acknowledgements and Disclosure}
We are grateful to Yassine Hamoudi for carefully going through an earlier version of this article.
\medskip

The technical contents of this article, including the proof of the main theorem, were obtained by the authors before April 2026. AI tools were used to improve the writeup and to help locate related literature. The authors take full responsibility for the content of the article.

\newpage

%% file: appendix.tex

\part*{Appendix}

\section{Concentration Inequalities}\label{sec:app-conc-ineq}
In this section we collect the concentration inequalities used throughout the paper. Proofs can be found in any standard textbook on probability theory, e.g. \cite{GrimmettS01}.

\smallskip

\begin{fact}[Union bound]
\label{fact:union-bound}
Let $A_1, A_2, \dots, A_k$ be events. We have
\[
\Pr\!\left(\bigcup_{i=1}^{k} A_i\right)
\le
\sum_{i=1}^{k} \Pr(A_i).
\]
\end{fact}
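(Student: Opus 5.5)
The union bound is the most elementary statement in the appendix, and I will prove it by induction on $k$, starting from the two-event identity. For $k=1$ there is nothing to prove. For two events $A,B$, write $A\cup B$ as the disjoint union $A\cup(B\setminus A)$. Finite additivity of $\Pr$ then gives
\[
\Pr(A\cup B)=\Pr(A)+\Pr(B\setminus A)\le \Pr(A)+\Pr(B),
\]
where the inequality uses monotonicity: $B\setminus A\subseteq B$, so $\Pr(B)=\Pr(B\setminus A)+\Pr(B\cap A)\ge \Pr(B\setminus A)$ by nonnegativity.

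For the inductive step, assume the bound holds for $k-1$ events. Set $A=\bigcup_{i=1}^{k-1}A_i$ and $B=A_k$. The two-event case gives
\[
\Pr\Bigl(\bigcup_{i=1}^{k}A_i\Bigr)\le \Pr\Bigl(\bigcup_{i=1}^{k-1}A_i\Bigr)+\Pr(A_k),
\]
and the inductive hypothesis bounds the first term by $\sum_{i=1}^{k-1}\Pr(A_i)$.

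Alternatively, the argument can be made in one step. Define the disjoint events $B_i:=A_i\setminus\bigcup_{j<i}A_j$. These satisfy $\bigcup_i B_i=\bigcup_i A_i$ and $B_i\subseteq A_i$, so
\[
\Pr\Bigl(\bigcup_i A_i\Bigr)=\sum_i\Pr(B_i)\le\sum_i\Pr(A_i).
\]

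There is no real obstacle. The only point needing care is the disjointification, which must be set up so that finite additivity applies; every other step uses only the axioms of $\Pr$ (nonnegativity and additivity).
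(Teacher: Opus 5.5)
Your proof is correct: both the inductive two-event argument and the one-step disjointification $B_i = A_i\setminus\bigcup_{j<i}A_j$ rely only on nonnegativity and finite additivity of $\Pr$. The paper gives no proof of its own here and simply defers to a standard probability textbook, where this disjointification argument is the standard one.
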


\smallskip

\begin{fact}[Markov's inequality]
\label{fact:markov}
Let $X$ be a nonnegative random variable and let $a>0$.
Then,
\[
\Pr(X \ge a) \le \frac{\mathbb E[X]}{a}.
\]
\end{fact}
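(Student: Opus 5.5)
Markov's inequality is a direct consequence of comparing $X$ with a scaled indicator. The plan is to establish the pointwise inequality
\[
a\,\mathds{1}\{X \ge a\} \le X,
\]
then take expectations of both sides, and finally divide by $a>0$.

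The pointwise bound is checked by splitting on the value of the indicator at a fixed outcome $\omega$.
\begin{itemize}
\item If $X(\omega)\ge a$, the left-hand side equals $a$, and $a\le X(\omega)$ holds by assumption.
\item If $X(\omega)<a$, the left-hand side equals $0$, and $0\le X(\omega)$ because $X$ is nonnegative.
\end{itemize}
Nonnegativity is used only in the second case, and it is exactly what is needed there.

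Next I take expectations. By monotonicity and linearity of expectation, and since $\E[\mathds{1}\{X\ge a\}]=\Pr(X\ge a)$, this gives
\[
a\Pr(X\ge a)\le \E[X].
\]
This holds also when $\E[X]=\infty$, where the claim is trivial. Dividing by $a>0$ yields the statement.

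There is no substantive obstacle. The only point requiring care is that the expectation of a nonnegative random variable is always well defined in $[0,\infty]$, so monotonicity of expectation applies without any integrability assumption.
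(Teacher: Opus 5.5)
Your argument is correct and complete: the pointwise bound $a\,\mathds{1}\{X\ge a\}\le X$, followed by monotonicity of expectation and division by $a>0$, is the standard textbook proof. The paper gives no proof of its own and states this as a Fact citing standard probability texts, so your argument is the one it implicitly relies on.
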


\smallskip

\begin{fact}[Chernoff bound for sums of independent Bernoulli random variables]
\label{fact:chernoff}
Let $X_1,\dots,X_n$ be independent Bernoulli random
variables with $\Pr(X_i=1)=p_i$. Let
\[
X = \sum_{i=1}^n X_i
\qquad \text{and} \qquad
\mu = \mathbb E[X] = \sum_{i=1}^n p_i .
\] 
Then for any $\delta \in (0,1)$,
\[
\Pr(X \le (1-\delta)\mu)
\le
\exp\!\left(-\frac{\delta^2\mu }{2}\right).
\]
Additionally, for any $\delta > 0$,
\[
\Pr(X \ge (1+\delta)\mu)
\le
\exp\!\left(-\frac{\delta^2\mu}{2+\delta}\right).
\]
\end{fact}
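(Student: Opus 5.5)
This is the standard multiplicative Chernoff bound, which I would prove by the exponential moment method. For the upper tail, fix $s>0$. Markov's inequality (\Cref{fact:markov}) applied to $e^{sX}$ gives
\[
\Pr(X\ge(1+\delta)\mu)\le e^{-s(1+\delta)\mu}\,\E[e^{sX}].
\]
Independence factorizes the moment generating function:
\[
\E[e^{sX}]=\prod_i\bigl(1+p_i(e^s-1)\bigr)\le \exp\bigl(\mu(e^s-1)\bigr),
\]
using $1+x\le e^x$. Choosing $s=\ln(1+\delta)$ yields the classical bound
\[
\Bigl(\frac{e^\delta}{(1+\delta)^{1+\delta}}\Bigr)^{\mu}.
\]
For the lower tail, apply the same argument to $e^{-sX}$ with $s=-\ln(1-\delta)$, which gives
\[
\Bigl(\frac{e^{-\delta}}{(1-\delta)^{1-\delta}}\Bigr)^{\mu}.
\]

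What remains is two elementary calculus inequalities converting these into the stated forms.

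For the lower tail, I need $(1-\delta)\ln(1-\delta)\ge -\delta+\delta^2/2$ on $(0,1)$. Set $f(\delta)=(1-\delta)\ln(1-\delta)+\delta-\delta^2/2$. Then $f(0)=0$ and
\[
f'(\delta)=-\ln(1-\delta)-\delta\ge 0,
\]
which gives the bound $\exp(-\delta^2\mu/2)$.

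For the upper tail, I need $(1+\delta)\ln(1+\delta)-\delta\ge \delta^2/(2+\delta)$ for all $\delta>0$. Using $\ln(1+\delta)\ge 2\delta/(2+\delta)$, I get
\[
(1+\delta)\ln(1+\delta)-\delta\ge \frac{2\delta(1+\delta)}{2+\delta}-\delta=\frac{\delta^2}{2+\delta}.
\]
The inequality $\ln(1+\delta)\ge 2\delta/(2+\delta)$ itself holds because both sides vanish at $0$ and the derivatives compare as
\[
\frac{1}{1+\delta}\ge \frac{4}{(2+\delta)^2}\iff(2+\delta)^2\ge 4(1+\delta)\iff \delta^2\ge 0.
\]

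I expect the main obstacle to be this second analytic inequality, specifically identifying the Padé-type lower bound on $\ln(1+\delta)$. The moment generating function step is routine.
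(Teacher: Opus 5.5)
Your proof is correct: the exponential-moment bounds with $s=\ln(1+\delta)$ and $s=-\ln(1-\delta)$ check out, as do both calculus inequalities, including $\ln(1+\delta)\ge 2\delta/(2+\delta)$. The paper gives no proof of this fact and defers to standard textbooks, where it is proved by this same moment-generating-function argument, so your route is essentially the intended one.
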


\smallskip

\begin{fact}[Azuma--Hoeffding inequality (Lemma A.7 in \cite{CesaL06})]
\label{fact:azuma}
Let $\big(\Mart{t}{}{}\big)_{t = 0}^{n}$ be a martingale with respect to a filtration
$\big(\fil{t}{}{}\big)_{t=0}^{n}$.

Suppose there exist constants $c_1,\dots,c_n$ such that
\[
|\Mart{k}{}{} - \Mart{k-1}{}{}| \le c_k
\quad\text{almost surely for all } k\in [n].
\]
Then for every $\lambda>0$,
\[
\Pr\!\left(|\Mart{n}{}{} - \Mart{0}{}{}| \ge \lambda\right)
\le
2\exp\!\left(-\frac{\lambda^2}{2\sum_{k=1}^n c_k^2}\right).
\]
\end{fact}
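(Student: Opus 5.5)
This is the Azuma--Hoeffding inequality, quoted from the literature, so it suffices to recall the standard exponential-moment argument. By symmetry (apply the upper-tail bound to both $\Mart{k}{}{}$ and $-\Mart{k}{}{}$, which are both martingales with the same increment bounds), it suffices to prove
\[
\Pr\!\left(\Mart{n}{}{}-\Mart{0}{}{}\ge\lambda\right)\le\exp\!\left(-\frac{\lambda^2}{2\sum_{k}c_k^2}\right),
\]
and the factor $2$ comes from the union of the two tails.

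Write $D_k:=\Mart{k}{}{}-\Mart{k-1}{}{}$. The key step is Hoeffding's lemma in conditional form: since $\E[D_k\mid\fil{k-1}{}{}]=0$ and $D_k\in[-c_k,c_k]$ almost surely, for every $s>0$ we have $\E[e^{sD_k}\mid\fil{k-1}{}{}]\le e^{s^2c_k^2/2}$. I would prove this via convexity of $x\mapsto e^{sx}$ on $[-c_k,c_k]$. It gives
\[
e^{sx}\le\frac{c_k-x}{2c_k}e^{-sc_k}+\frac{c_k+x}{2c_k}e^{sc_k}.
\]
Taking conditional expectation and using the zero mean yields the bound $\cosh(sc_k)$. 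Finally, $\cosh y\le e^{y^2/2}$ follows by comparing Taylor series termwise, using $(2j)!\ge 2^j j!$. This is the only step with any content and is the expected obstacle, though it is classical.

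Then iterate by the tower property:
\[
\E\!\left[e^{s(\Mart{n}{}{}-\Mart{0}{}{})}\right]=\E\!\left[e^{s\sum_{k<n}D_k}\,\E[e^{sD_n}\mid\fil{n-1}{}{}]\right]\le e^{s^2c_n^2/2}\,\E\!\left[e^{s\sum_{k<n}D_k}\right].
\]
Induction then gives the bound $\exp(s^2\sum_k c_k^2/2)$. Markov's inequality (\Cref{fact:markov}) applied to $e^{s(\Mart{n}{}{}-\Mart{0}{}{})}$ gives
\[
\Pr\!\left(\Mart{n}{}{}-\Mart{0}{}{}\ge\lambda\right)\le\exp\!\left(-s\lambda+\tfrac{s^2}{2}\textstyle\sum_k c_k^2\right).
\]
Optimizing with $s=\lambda/\sum_k c_k^2$ yields the one-sided bound. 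Combining the two tails completes the argument.
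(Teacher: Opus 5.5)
Your proposal is correct and is the standard exponential-moment (Chernoff) argument that underlies the cited Lemma A.7 of Cesa-Bianchi--Lugosi; the paper gives no proof of its own and only quotes the result. Your convexity/$\cosh$ step is the symmetric case of the paper's \Cref{fact:eps-zero-hoeffding} applied conditionally with $b-a=2c_k$, and the degenerate case $c_k=0$ (where your interpolation divides by zero) is trivial because then $D_k=0$ almost surely.
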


\smallskip

\newpage
\begin{fact}[Freedman's Inequality (Lemma A.8 in \cite{CesaL06})]
\label{fact:freedman}
Let $\big(\Mart{t}{}{}\big)_{t \in \N}$ be a real-valued martingale with respect to a
filtration $\big(\mathcal{F}_t\big)_{t \in \N}$, with $\Mart{0}{}{} = 0$. Let
$X_t = \Mart{t}{}{} - \Mart{t-1}{}{}$ denote the martingale increments. Suppose that the increments
are uniformly bounded, i.e.,
\[
\abs{X_t} \le R
\quad \text{almost surely for all } t \in [T].
\]
Define the predictable quadratic variation
\[
V_T = \sum_{t=1}^T \E[X_t^2 \mid \mathcal{F}_{t-1}].
\]
Then for any $\lambda > 0$,
\[
\Pr\!\left(
\max_{t \leq T} \Mart{t}{}{} \ge \lambda \text{ and } V_T \le \sigma^2
\right)
\le
\exp\!\left(
-\frac{\lambda^2}{2\sigma^2 + \frac{2}{3}R\lambda}
\right).
\]
\end{fact}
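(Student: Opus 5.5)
The plan is the classical exponential-supermartingale argument, followed by optional stopping and an optimization over the exponent. Set $\phi(\theta):=\frac{e^{\theta R}-1-\theta R}{R^2}$ for $\theta>0$. The first step is a pointwise estimate: for every real $x\le R$,
\[
e^{\theta x}\le 1+\theta x+\phi(\theta)\,x^2 .
\]
This follows because $g(y):=(e^y-1-y)/y^2$ (with $g(0)=1/2$) is nondecreasing on $\mathbb{R}$. Hence $g(\theta x)\le g(\theta R)$, and multiplying by $\theta^2x^2$ gives the claim. Only the one-sided bound $X_t\le R$ is used here, which is weaker than the two-sided hypothesis. Taking conditional expectations, using $\E[X_t\mid\mathcal F_{t-1}]=0$ and $1+y\le e^y$, gives
\[
\E\bigl[e^{\theta X_t}\mid\mathcal F_{t-1}\bigr]\le \exp\bigl(\phi(\theta)\,\E[X_t^2\mid\mathcal F_{t-1}]\bigr).
\]

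Next, write $V_t:=\sum_{s\le t}\E[X_s^2\mid\mathcal F_{s-1}]$; this process is predictable and nondecreasing. Define
\[
W_t:=\exp\bigl(\theta\Mart{t}{}{}-\phi(\theta)V_t\bigr).
\]
By the previous estimate, and because $V_t$ is $\mathcal F_{t-1}$-measurable, $(W_t)$ is a nonnegative supermartingale with $W_0=1$.

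Now let $\sigma':=\min\{t\le T:\Mart{t}{}{}\ge\lambda\}$, with $\sigma':=T$ if no such $t$ exists. This is a bounded stopping time, so optional stopping gives $\E[W_{\sigma'}]\le 1$. On the event $A:=\{\max_{t\le T}\Mart{t}{}{}\ge\lambda \text{ and } V_T\le\sigma^2\}$ we have $\Mart{\sigma'}{}{}\ge\lambda$ and $V_{\sigma'}\le V_T\le\sigma^2$, by monotonicity of $V$. Therefore $W_{\sigma'}\ge \exp(\theta\lambda-\phi(\theta)\sigma^2)$ on $A$, and Markov's inequality (\Cref{fact:markov}) yields
\[
\Pr(A)\le \exp\bigl(-\theta\lambda+\phi(\theta)\sigma^2\bigr)\qquad\text{for every }\theta>0 .
\]

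Finally, optimize by choosing $\theta=\frac1R\log\bigl(1+\frac{R\lambda}{\sigma^2}\bigr)$. A direct computation turns the exponent into $-\frac{\sigma^2}{R^2}h\!\left(\frac{R\lambda}{\sigma^2}\right)$, where $h(u)=(1+u)\log(1+u)-u$. The elementary inequality
\[
h(u)\ge \frac{u^2}{2(1+u/3)}\qquad (u\ge0)
\]
then gives exactly $\exp\!\bigl(-\lambda^2/(2\sigma^2+\tfrac23R\lambda)\bigr)$. To prove that inequality, let $D(u):=h(u)-\frac{u^2}{2(1+u/3)}$. Then $D(0)=D'(0)=0$, and one checks $D''(u)=\frac{1}{1+u}-\frac{1}{(1+u/3)^3}\ge0$ using $(1+u/3)^3\ge 1+u$.

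The step I expect to need the most care is the bookkeeping in the stopping argument. One must use the monotonicity of $V$ so that the event $\{V_T\le\sigma^2\}$ controls $V_{\sigma'}$, and use boundedness of $\sigma'$ so that optional stopping applies without integrability issues. The monotonicity of $g$ and the inequality $h(u)\ge u^2/(2(1+u/3))$ are routine calculus checks.
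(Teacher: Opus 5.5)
Your proposal is correct and is the standard proof of Freedman's inequality. It runs through the pointwise bound $e^{\theta x}\le 1+\theta x+\phi(\theta)x^2$ for $x\le R$, the nonnegative supermartingale $\exp(\theta\mathbf{M}_t-\phi(\theta)V_t)$, optional stopping at the first time $\mathbf{M}_t\ge\lambda$, and the Bennett-to-Bernstein step $h(u)\ge u^2/(2(1+u/3))$. The paper does not prove this fact; it only cites Lemma A.8 of~\cite{CesaL06}, which rests on this same classical argument.

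The only loose end is the degenerate case $\sigma^2=0$, where your choice of $\theta$ is undefined. There, your bound $\Pr(A)\le e^{-\theta\lambda}$ for every $\theta>0$ already gives $\Pr(A)=0$. The case $R=0$ is trivial, since then $\mathbf{M}_t\equiv 0$.
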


\smallskip

\begin{fact}[McDiarmid's Inequality (Theorem 20.10 in \cite{Roch24})]
\label{fact:mcdiarmid}
Let $X_1,\dots,X_n$ be independent random variables with 
$X_i \in \mathcal{X}_i$, and let
\[
f : \mathcal{X}_1 \times \cdots \times \mathcal{X}_n \to \mathbb{R}.
\]
Assume that there exist constants $c_1,\dots,c_n$ such that for every 
$i \in [n]$ and all $x_1,\dots,x_n$,
\[
\sup_{x_i' \in \mathcal{X}_i}
\abs{f(x_1,\dots,x_i,\dots,x_n)
-
f(x_1,\dots,x_i',\dots,x_n)}
\le c_i.
\]
Then for every $\varepsilon > 0$,
\[
\Pr\big(
\abs{f(X_1,\dots,X_n) - \E[f(X_1,\dots,X_n)]}
\ge \epsilon
\big)
\le
2\exp\!\left(
-\frac{2\epsilon^2}{\sum_{i=1}^n c_i^2}
\right).
\]
\end{fact}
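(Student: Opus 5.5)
We would prove \Cref{fact:mcdiarmid} by the martingale method, but \Cref{fact:azuma} cannot be quoted as stated. With increments bounded by $c_i$ in absolute value, Azuma--Hoeffding only gives the exponent $\epsilon^2/(2\sum_i c_i^2)$. The sharper $2\epsilon^2/\sum_i c_i^2$ needs a stronger fact: conditionally on the past, each increment lies in an interval of \emph{length} $c_i$, not merely in $[-c_i,c_i]$. We then combine this with Hoeffding's lemma for bounded mean-zero variables.

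\textbf{Step 1: Doob martingale and increments.} Write $f:=f(X_1,\dots,X_n)$. Let $\mathcal F_k:=\sigma(X_1,\dots,X_k)$, $Z_k:=\E[f\mid\mathcal F_k]$ and $D_k:=Z_k-Z_{k-1}$, so that $Z_0=\E[f]$, $Z_n=f$ and $f-\E[f]=\sum_{k=1}^n D_k$. By independence (product measure and Fubini) we have
\[
Z_k=g_k(X_1,\dots,X_k),\qquad g_k(x_1,\dots,x_k):=\E\bigl[f(x_1,\dots,x_k,X_{k+1},\dots,X_n)\bigr].
\]
Also $Z_{k-1}=\E_{X_k}\bigl[g_k(X_1,\dots,X_{k-1},X_k)\bigr]$.

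\textbf{Step 2: the interval bound, which is the main obstacle.} Fix $x_1,\dots,x_{k-1}$ and set
\[
A_k:=\inf_{x}g_k(x_1,\dots,x_{k-1},x)-Z_{k-1},\qquad B_k:=\sup_{x}g_k(x_1,\dots,x_{k-1},x)-Z_{k-1}.
\]
Then $A_k\le D_k\le B_k$, and both endpoints are $\mathcal F_{k-1}$-measurable. For any $x,x'$,
\[
g_k(\dots,x)-g_k(\dots,x')=\E\bigl[f(\dots,x,X_{k+1},\dots)-f(\dots,x',X_{k+1},\dots)\bigr]\le c_k,
\]
by the bounded-differences hypothesis applied pointwise inside the expectation. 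Hence $B_k-A_k\le c_k$. This is exactly where independence is essential: the future coordinates $X_{k+1},\dots,X_n$ must have the same law whichever value $x$ or $x'$ is substituted for $X_k$, so that the two expectations are against a common measure.

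\textbf{Step 3: exponential moments and the tail.} Since $\E[D_k\mid\mathcal F_{k-1}]=0$ and $D_k$ lies in an $\mathcal F_{k-1}$-measurable interval of length at most $c_k$, Hoeffding's lemma gives
\[
\E\bigl[e^{sD_k}\mid\mathcal F_{k-1}\bigr]\le e^{s^2c_k^2/8}\qquad\text{for every } s>0.
\]
Peeling off one increment at a time with the tower property yields
\[
\E\bigl[e^{s(f-\E f)}\bigr]\le \exp\Bigl(\tfrac{s^2}{8}\textstyle\sum_k c_k^2\Bigr).
\]
Markov's inequality (\Cref{fact:markov}) applied to $e^{s(f-\E f)}$, with the choice $s=4\epsilon/\sum_k c_k^2$, gives $\Pr(f-\E f\ge\epsilon)\le\exp\bigl(-2\epsilon^2/\sum_k c_k^2\bigr)$. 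Applying the same argument to $-f$, which satisfies the same bounded-differences hypothesis, controls the lower tail, and a union bound (\Cref{fact:union-bound}) gives the factor $2$.

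Hoeffding's lemma itself is standard; if it must be included, one bounds $e^{sx}$ by the chord through the endpoints of the interval and checks that the resulting log-moment-generating function has second derivative at most $c_k^2/4$.
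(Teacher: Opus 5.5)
Your proof is correct. The paper gives no proof of this fact and defers to the cited textbook, and your argument is the standard proof found there: the Doob martingale $Z_k=\E[f\mid\mathcal F_k]$ has increments that lie, conditionally on $\mathcal F_{k-1}$, in an interval of length at most $c_k$; then Hoeffding's lemma, the tower property, and the exponential Markov bound with $s=4\epsilon/\sum_k c_k^2$ give the result.

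You are also right that the paper's Azuma--Hoeffding fact, given only $|D_k|\le c_k$, would lose a factor of $4$ in the exponent. The interval-length bound in your Step~2 is therefore the essential ingredient, and it relies on independence exactly as you describe.
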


\smallskip

\begin{fact}[Hoeffding's lemma (Lemma 2.2 in \cite{BoucheronLM13})]
\label{fact:eps-zero-hoeffding}
Let $Y$ be a random variable with $\E[Y]=0$ and $a\le Y\le b$ almost surely. Then for every
$\theta\in\R$,
\[
\E\big[e^{\theta Y}\big]\le\exp\!\left(\frac{\theta^2(b-a)^2}{8}\right).
\]
\end{fact}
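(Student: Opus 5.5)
The plan is to prove Hoeffding's lemma by the classical convexity-plus-Taylor argument. The cases $\theta=0$ or $a=b$ are trivial, since then $Y=0$ almost surely and both sides equal $1$. So assume $a<b$. Because $\E[Y]=0$ we have $a\le 0\le b$.

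\emph{Step 1 (convexity).} For $y\in[a,b]$, write $y$ as the convex combination
\[
y=\frac{b-y}{b-a}\,a+\frac{y-a}{b-a}\,b .
\]
Convexity of $x\mapsto e^{\theta x}$ then gives
\[
e^{\theta y}\le \frac{b-y}{b-a}e^{\theta a}+\frac{y-a}{b-a}e^{\theta b}.
\]
Take expectations and use $\E[Y]=0$. This yields
\[
\E[e^{\theta Y}]\le \frac{b}{b-a}e^{\theta a}-\frac{a}{b-a}e^{\theta b}.
\]

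\emph{Step 2 (reparametrize).} Set $q:=-a/(b-a)\in[0,1]$ and $u:=\theta(b-a)$. Then $\theta a=-qu$, and the right-hand side equals $e^{\varphi(u)}$, where
\[
\varphi(u):=-qu+\log\bigl(1-q+qe^{u}\bigr).
\]
The argument of the logarithm is positive because $q\in[0,1]$. It therefore suffices to show $\varphi(u)\le u^2/8$ for all real $u$, since $u^2/8=\theta^2(b-a)^2/8$.

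\emph{Step 3 (Taylor).} We have $\varphi(0)=0$. Differentiating,
\[
\varphi'(u)=-q+\frac{qe^u}{1-q+qe^u},
\]
so $\varphi'(0)=0$. Writing $s(u):=\frac{qe^u}{1-q+qe^u}\in[0,1]$, a direct computation gives
\[
\varphi''(u)=s(u)\bigl(1-s(u)\bigr)\le \tfrac14 .
\]
Taylor's theorem with Lagrange remainder then gives
\[
\varphi(u)=\varphi(0)+\varphi'(0)u+\tfrac12\varphi''(\xi)u^2\le \tfrac{u^2}{8}
\]
for some $\xi$ between $0$ and $u$, which proves the claim.

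The only step needing care is the second-derivative computation and recognizing it as $s(1-s)$. This bound is what produces the sharp constant $1/8$; everything else is routine.
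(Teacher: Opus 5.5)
Your proof is correct: it is Hoeffding's original convexity argument, and the computations check out, in particular $\theta a=-qu$, the identity $\frac{b}{b-a}e^{\theta a}-\frac{a}{b-a}e^{\theta b}=e^{\varphi(u)}$, and $\varphi''=s(1-s)$.

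The paper gives no proof of its own. It only cites Lemma~2.2 of Boucheron--Lugosi--Massart, whose argument is organized differently. There one sets $\psi(\theta):=\log\E[e^{\theta Y}]$ and notes $\psi(0)=\psi'(0)=0$. One then identifies $\psi''(\theta)$ as the variance of $Y$ under the exponentially tilted law, whose density is proportional to $e^{\theta y}$ with respect to the law of $Y$. The tilted law is still supported on $[a,b]$, so this variance is at most $(b-a)^2/4$, and Taylor's theorem finishes.

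You instead first use convexity to dominate $\E[e^{\theta Y}]$ by the moment generating function of the extremal mean-zero two-point law on $\{a,b\}$. You then run the same Taylor step on that explicit function. Your $s(u)\bigl(1-s(u)\bigr)$ is exactly the variance of the tilted two-point law divided by $(b-a)^2$, so the two proofs meet at the same variance bound.

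Each route has its advantage. Yours is fully elementary: a calculus check on one explicit function, with no change of measure and no differentiation under the expectation. The tilting route gives the stronger pointwise statement $\psi''(\theta)\le (b-a)^2/4$ for $Y$ itself, not just for its two-point majorant, and that is the form reused in entropy-method arguments.

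One small wording slip: for $\theta=0$ it is not true that $Y=0$ almost surely. Both sides simply equal $1$, and your argument only needs $a<b$, not $\theta\neq 0$.
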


\smallskip

\begin{definition}[Submartingale]
    \label{def:submartingale}
    Let $\big(\mathcal{F}_t\big)_{t\in \N}$ be a filtration.
    A sequence of random variables $\big(\Zmart{t}{}{}\big)_{t\in \N}$ is called a
    \emph{submartingale with respect to the filtration $\big(\mathcal{F}_t\big)_{t\in \N}$} if $\E\left[\abs{\Zmart{t}{}{}}\right] < \infty$ for all $t \in \N$, and for every $t \ge 1$, $\E\left[\Zmart{t}{}{} \mid \fil{t-1}{}{}\right] = \Zmart{t-1}{}{}$.
\end{definition}

\begin{fact}[Doob's maximal inequality \cite{GrimmettS01}]
\label{fact:eps-zero-doob}
Let $\big(\mathbf{Q}_t\big)_{t\in\N}$ be a nonnegative submartingale. Then for every $T\in\N$ and every
$a>0$,
\[
\Pr\!\left(\max_{t\le T}\mathbf{Q}_t\ge a\right)\le\frac{\E[\mathbf{Q}_T]}{a}.
\]
\end{fact}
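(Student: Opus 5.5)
We will prove the inequality by the standard first-passage decomposition: split the event $\{\max_{t\le T}\mathbf{Q}_t\ge a\}$ according to the \emph{first} time the threshold $a$ is reached, and on each piece compare $\mathbf{Q}_t$ with $\mathbf{Q}_T$ using the submartingale property. Throughout we use the submartingale condition in its usual form $\E[\mathbf{Q}_t\mid\mathcal F_{t-1}]\ge\mathbf{Q}_{t-1}$. The equality written in \Cref{def:submartingale} is the martingale special case, so the argument covers it as well.

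\textbf{Step 1 (disjoint decomposition).} Let $\sigma:=\min\{t\le T:\mathbf{Q}_t\ge a\}$, with $\sigma=\infty$ if no such $t$ exists. For $t\in\{0,\dots,T\}$ set $A_t:=\{\sigma=t\}=\{\mathbf{Q}_0<a,\dots,\mathbf{Q}_{t-1}<a,\ \mathbf{Q}_t\ge a\}$. Each $A_t$ is $\mathcal F_t$-measurable, the events $A_t$ are pairwise disjoint, and their union is $A:=\{\max_{t\le T}\mathbf{Q}_t\ge a\}$.

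\textbf{Step 2 (comparison with the endpoint).} On $A_t$ we have $\mathbf{Q}_t\ge a$, so
\[
a\Pr(A_t)\le\E[\mathbf{Q}_t\mathds 1_{A_t}].
\]
Iterating the submartingale inequality with the tower property gives $\E[\mathbf{Q}_T\mid\mathcal F_t]\ge\mathbf{Q}_t$ for every $t\le T$. This is a short induction on $T-t$: apply $\E[\cdot\mid\mathcal F_{s}]$ to $\E[\mathbf{Q}_{s+1}\mid\mathcal F_{s}]\ge\mathbf{Q}_{s}$ and use monotonicity of conditional expectation. Since $A_t\in\mathcal F_t$, multiplying by $\mathds 1_{A_t}$ and taking expectations yields
\[
\E[\mathbf{Q}_t\mathds 1_{A_t}]\le\E\bigl[\E[\mathbf{Q}_T\mid\mathcal F_t]\mathds 1_{A_t}\bigr]=\E[\mathbf{Q}_T\mathds 1_{A_t}].
\]
Integrability of each $\mathbf{Q}_t$ is part of the definition, so every expectation here is finite.

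\textbf{Step 3 (summation).} Summing over the disjoint $A_t$ gives
\[
a\Pr(A)\le\E[\mathbf{Q}_T\mathds 1_A].
\]
Nonnegativity of $\mathbf{Q}_T$ then gives $\E[\mathbf{Q}_T\mathds 1_A]\le\E[\mathbf{Q}_T]$. Dividing by $a>0$ yields the claim.

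\textbf{Main obstacle.} The only non-routine point is Step 2: passing from the one-step submartingale property to $\E[\mathbf{Q}_T\mid\mathcal F_t]\ge\mathbf{Q}_t$, and using the $\mathcal F_t$-measurability of $A_t$ to move the indicator inside the conditional expectation. Nonnegativity is used only in the final step, to discard the contribution of $A^c$.
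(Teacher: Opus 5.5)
Your proof is correct. It is the standard first-passage decomposition (equivalently, optional stopping at $\sigma\wedge T$), and the paper gives no proof of its own, only a textbook citation, so there is no competing argument to compare against. Two small remarks: you are right that \Cref{def:submartingale} mistakenly states equality instead of $\E[\mathbf{Q}_t\mid\mathcal F_{t-1}]\ge\mathbf{Q}_{t-1}$, and the induction in Step~2 is mis-described, since applying $\E[\cdot\mid\mathcal F_s]$ to the $\mathcal F_s$-measurable inequality $\E[\mathbf{Q}_{s+1}\mid\mathcal F_s]\ge\mathbf{Q}_s$ does nothing; the step you want is $\E[\mathbf{Q}_T\mid\mathcal F_t]=\E\bigl[\E[\mathbf{Q}_T\mid\mathcal F_{t+1}]\mid\mathcal F_t\bigr]\ge\E[\mathbf{Q}_{t+1}\mid\mathcal F_t]\ge\mathbf{Q}_t$.
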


\newpage
\section{Degree Concentration in $\gnp$}
\label{sec:app-degree-conc}
In this section, we prove that the degree-regularity event $\ereg$ from \Cref{def:ereg}
holds with high probability in $\gnp$ via standard Chernoff bounds.
\begin{fact}[Degree concentration in $\mathds G(n,p)$]
\label{fact:deg-conc}
Let $G \sim \mathds G(n,p)$ with
$
p \ge \frac{c\log n}{n}
$
for a sufficiently large constant $c>0$.  
Then
\[
\Pr(\ereg) \ge 1-n^{-10}.
\]
\end{fact}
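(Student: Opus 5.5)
The plan is to prove \Cref{fact:deg-conc} by a direct Chernoff argument for each vertex, followed by a union bound over all $n$ vertices. Fix $v\in V(G)$. Its degree $\deg_G(v)=\sum_{u\neq v}\mathds{1}\{uv\in E(G)\}$ is a sum of $n-1$ independent Bernoulli$(p)$ indicators, so $\deg_G(v)\sim\mathrm{Bin}(n-1,p)$ with mean $\mu=(n-1)p$. The event $\ereg$ concerns the scale $pn$ rather than $\mu$. Since $\mu=pn-p$ and $p\le 1$, the two differ by at most $1$, which is negligible compared with $pn\ge c\log n$. I would therefore translate the thresholds: $pn/2\le \mu(1-\delta_1)$ with $\delta_1$ slightly below $1/2$, say $\delta_1=2/5$ for large $n$, and $2pn\ge(1+\delta_2)\mu$ with $\delta_2=1$.

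For the lower tail, \Cref{fact:chernoff} gives
\[
\Pr\bigl(\deg_G(v)<pn/2\bigr)\le\Pr\bigl(\deg_G(v)\le(1-\delta_1)\mu\bigr)\le\exp\!\left(-\frac{\delta_1^2\mu}{2}\right)=\exp\!\left(-\frac{2\mu}{25}\right).
\]
For the upper tail, with $\delta_2=1$,
\[
\Pr\bigl(\deg_G(v)>2pn\bigr)\le\Pr\bigl(\deg_G(v)\ge 2\mu\bigr)\le\exp\!\left(-\frac{\mu}{3}\right).
\]
Using $\mu\ge pn-1\ge c\log n-1$, both bounds are at most $e^{2/25}\,n^{-2c/25}$. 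Choosing $c$ large enough that $2c/25\ge 12$ makes each tail $O(n^{-12})$.

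Finally, a union bound (\Cref{fact:union-bound}) over the $n$ vertices and the two tails gives $\Pr(\ereg^c)\le 2n\cdot O(n^{-12})\le n^{-10}$ for large $n$. This is exactly the claim.

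There is no real obstacle here. The only care needed is the bookkeeping between $(n-1)p$ and $pn$, and making sure the constant $c$ is taken large enough that the exponent beats $11$ after the union bound. This is why the statement requires a sufficiently large $c$, which matches the paper's standing assumption on $c$. If $p$ is allowed to be large, the upper-tail threshold $2pn$ may exceed $n-1$, but then that tail is trivially zero, so no separate case analysis is needed.
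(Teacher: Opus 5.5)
Your proposal is correct and follows the same route as the paper: a per-vertex Chernoff bound on both tails of $\deg_G(v)\sim\mathrm{Bin}(n-1,p)$, then a union bound over the $n$ vertices, with $c$ taken large enough. Your bookkeeping is slightly more careful than the paper's, which takes $\delta=\tfrac12$ and asserts $\tfrac{pn}{2}\le\tfrac{\mu}{2}$ even though $\mu=(n-1)p<pn$; your choice $\delta_1=2/5$ (valid once $n\ge 6$) removes this small slip.
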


\begin{proof}
Fix a vertex $x \in V(G)$. Then
\[
\deg_G(x) \sim \mathrm{Bin}(n-1,p),
\qquad
\mu := \E[\deg_G(x)] = (n-1)p .
\]
Since $p \ge \frac{c\log n}{n}$, we have
\[
\mu = (n-1)p = \Theta(\log n).
\]
By the Chernoff bounds (\Cref{fact:chernoff}), for $0<\delta<1$,
\[
\Pr(\deg_G(x) \le (1-\delta)\mu)
\le
\exp\!\left(-\frac{\delta^2\mu}{2}\right),
\]
and for $\delta>0$,
\[
\Pr(\deg_G(x) \ge (1+\delta)\mu)
\le
\exp\!\left(-\frac{\delta^2\mu}{3}\right).
\]
For sufficiently large $n$ we have
\[
\frac{pn}{2} \le (1-\tfrac12)\mu,
\qquad
2pn \ge (1+1)\mu .
\]
Hence
\[
\Pr\!\left(
\deg_G(x) \notin [\tfrac{pn}{2},\,2pn]
\right)
\le
2\exp(-\Theta(\mu)).
\]
Since $\mu=\Theta(\log n)$, this gives
\[
\Pr\!\left(
\deg_G(x) \notin [\tfrac{pn}{2},\,2pn]
\right)
\le
n^{-\Omega(1)} .
\]

Applying a union bound over all $n$ vertices we obtain,
\[
\Pr(\ereg^c)
\le
n \cdot n^{-\Omega(1)}
=
n^{1-\Omega(1)} .
\]

Choosing $c$ sufficiently large ensures
\[
\Pr(\ereg^c) \le n^{-10},
\]
which proves the claim.
\end{proof}

\newpage
\section{Concentration of the Matching Size in One Round}\label{sec:app-second-moment}

\Cref{lem:geometric-decay} bounds the expected number of unmatched vertices. To turn this into a
high-probability bound, we show that, given the history, the number of edges added in a single round
is concentrated around its conditional expectation; the proof of \Cref{thm:linear-rounds-matching}
applies this round by round. A bounded-differences argument over all $T$ rounds at once is not
available: changing one proposal in round $t$ changes which vertices are present in round $t+1$, and
the effect propagates, so already for $T=2$ the matching size can change by $\deg(v)-2$.

\begin{lemma}[Matching size concentration in one round]
\label{lem:app-second-moment}
Let $G$ be any graph and let $t\ge1$. Let $Y_t:=|M_t|-|M_{t-1}|$ be the number of edges added to the
matching in round $t$. Then for every $\lambda>0$,
\[
\Pr\!\left(\bigl|Y_t-\E[Y_t\mid\calH_{t-1}]\bigr|\ge\lambda \;\middle|\; \calH_{t-1}\right)
\le 2\exp\!\left(-\frac{2\lambda^2}{|U_{t-1}|}\right).
\]
\end{lemma}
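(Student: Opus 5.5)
The plan is to apply McDiarmid's inequality (\Cref{fact:mcdiarmid}) to the proposals of round $t$, with the history $\calH_{t-1}$ fixed throughout.

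\textbf{Setting up the product space.} Condition on $\calH_{t-1}$, which fixes $U_{t-1}$ and the residual graph $G_{t-1}$. Let $U'\subseteq U_{t-1}$ be the set of vertices with $\deg_{t-1}\ge 1$. These are the only vertices that propose in round $t$. By the description of \Cref{alg:main}, given $\calH_{t-1}$ the proposals $\{\pi_t(u)\}_{u\in U'}$ are independent, and each $\pi_t(u)$ is uniform on $N_{t-1}(u)$. The number of new matched edges $Y_t$ is a deterministic function
\[
Y_t=f\bigl((\pi_t(u))_{u\in U'}\bigr),
\]
namely the number of unordered pairs $\{u,w\}$ with $\pi_t(u)=w$ and $\pi_t(w)=u$. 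So we are in exactly the setting of \Cref{fact:mcdiarmid}, with $|U'|\le|U_{t-1}|$ independent coordinates.

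\textbf{Bounded differences with $c_u=1$.} This is the one step needing care, since a naive count gives $c_u=2$. Fix all proposals except $\pi_t(u)$, and change it from $w$ to $w'$. Only pairs containing $u$ can change their mutual status. In any configuration, at most one mutual pair contains $u$, namely $\{u,\pi_t(u)\}$ if it is reciprocated. So before the change at most one matched edge is incident to $u$ (possibly $\{u,w\}$, which is destroyed), and after the change at most one is (possibly $\{u,w'\}$, which is created). All other matched edges are unaffected. Hence the change in $f$ lies in $\{-1,0,1\}$, so $c_u=1$ for every $u\in U'$.

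\textbf{Conclusion.} \Cref{fact:mcdiarmid} now gives
\[
\Pr\!\left(\bigl|Y_t-\E[Y_t\mid\calH_{t-1}]\bigr|\ge\lambda \;\middle|\; \calH_{t-1}\right)
\le 2\exp\!\left(-\frac{2\lambda^2}{|U'|}\right)\le 2\exp\!\left(-\frac{2\lambda^2}{|U_{t-1}|}\right),
\]
using $|U'|\le|U_{t-1}|$. If $U'=\emptyset$, then $Y_t=0$ deterministically and the bound holds trivially.

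The only potential obstacle is getting the constant $c_u=1$ rather than $2$. It is resolved by the observation that, in any configuration, $u$ lies in at most one mutual pair, so a single coordinate change can destroy at most one matched edge and create at most one.
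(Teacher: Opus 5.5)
Your proposal is correct and follows the paper's proof essentially step for step. Both condition on $\calH_{t-1}$ and apply McDiarmid's inequality to the independent proposals of the vertices in $U'$ with $c_u=1$, using that $u$ lies in at most one mutual pair before and after the change. Both then pass from $|U'|$ to $|U_{t-1}|$ and treat $U'=\emptyset$ as trivial.
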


\begin{proof}
Condition on $\calH_{t-1}$; this fixes the residual graph $G_{t-1}$. Let $U'\subseteq U_{t-1}$ be the
set of unmatched vertices with $\deg_{t-1}(u)\ge1$. In round $t$, each $u\in U'$ proposes to a vertex
$\pi_t(u)$ chosen uniformly from $N_{t-1}(u)$, independently of the other proposals, and
\[
Y_t=\bigl|\bigl\{\{u,w\} : \pi_t(u)=w \text{ and } \pi_t(w)=u\bigr\}\bigr|
\]
is a function of the independent random variables $(\pi_t(u))_{u\in U'}$.

Suppose we change a single proposal $\pi_t(u)$ from $w$ to $w'$ and keep all other proposals fixed.
Only pairs containing $u$ can change: before the change, $u$ is matched if and only if
$\pi_t(w)=u$, and after it, if and only if $\pi_t(w')=u$. Hence $Y_t$ changes by at most $1$.

Applying McDiarmid's inequality (\Cref{fact:mcdiarmid}) to the conditional distribution given
$\calH_{t-1}$, with $c_u=1$ for every $u\in U'$, gives the bound with $|U'|$ in place of $|U_{t-1}|$.
Since $|U'|\le|U_{t-1}|$, the lemma follows. If $U'=\emptyset$, then $Y_t=0$ and there is nothing to
prove.
\end{proof}

\newpage
\section{Proofs for the Warmup: \bstab~via Iterated Weak Concentration}\label{app:iterating-deg-pos}
In this section we provide the deferred proofs of \Cref{lem:additive-loss-t-round} and  \Cref{cor:weak-iterative-bound} from \Cref{sec:warmup}.

\begin{lemma}[Restatement of \Cref{lem:additive-loss-t-round}] 
Fix $\eps \geq 0$. Let $G \sim \mathbb{G}(n,p)$ with $p = \frac{c \log^{1+\eps} n}{n}$ and let $R := 8 + \sqrt{160 c \log^{2+\epsilon} n}$. Suppose that $TR < \frac{c}{4}\log ^{1+\epsilon} n$.

Then for every round $t \le T$ and every vertex $v \in U_t$, we have 
\[
\deg_t(v) \ge \deg_0(v) - tR
\]
with probability at least $1 - (T+1)\cdot n^{-9}$.
\end{lemma}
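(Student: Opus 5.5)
We argue by induction on $t$, intersecting with $\ereg$ and using the round-$t$ Azuma bound of \Cref{lem:roundt-degree-stability} in each round. Let $\mathcal A_t$ be the event that $\deg_s(v)\ge \deg_0(v)-sR$ for all $s\le t$ and all $v\in U_s$. The goal is $\Pr(\mathcal A_T^c)\le (T+1)n^{-9}$. We take $\Pr(\ereg^c)\le n^{-10}\le n^{-9}$ from \Cref{fact:deg-conc} as the ``$+1$'' term. We then show $\Pr(\mathcal A_t^c\cap \mathcal A_{t-1}\cap\ereg)\le n^{-9}$ for each $1\le t\le T$, and sum these bounds.

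\textbf{Checking the hypotheses of \Cref{lem:roundt-degree-stability} on $\mathcal A_{t-1}\cap\ereg$.} On $\ereg$ we have $\deg_0(v)\ge \frac{c}{2}d_0$ and $\Delta_{t-1}\le\Delta_0\le 2c\,d_0$. On $\mathcal A_{t-1}$ we get
\[
\deg_{t-1}(v)\ge \frac c2 d_0-(t-1)R> \frac c2 d_0-\frac c4 d_0=\frac c4 d_0,
\]
using $TR<\frac c4 d_0$. Hence $\delta_{t-1}\ge \frac c4 d_0$, so \bstab\ holds up to round $t$ with $\beta=c/4$. It follows that $\Delta_{t-1}/\delta_{t-1}\le 8=:\kappa$, and \Cref{lem:exp-matched-neighbours-t} gives $\E[X_t(v)\mid\calH_{t-1}]\le 8$. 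The events $\mathcal A_{t-1}$ and $\ereg$ are $\calH_{t-1}$-measurable (the graph is part of the conditioning), so we may condition on any history in this event.

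\textbf{The per-round tail bound.} Apply \Cref{lem:roundt-degree-stability} with $\lambda=\sqrt{160c\log^{2+\epsilon}n}$, so that $\kappa+\lambda=R$. Since $\deg_{t-1}(v)\le\deg_0(v)\le 2c\,d_0=2c\log^{1+\epsilon}n$, we get
\[
\frac{\lambda^2}{8\deg_{t-1}(v)}\ge\frac{160c\log^{2+\epsilon}n}{16c\log^{1+\epsilon}n}=10\log n .
\]
Therefore $\Pr(X_t(v)\ge R\mid\calH_{t-1})\le n^{-10}$, and a union bound over the $n$ vertices gives $n^{-9}$. Off this bad event, every $v\in U_t\subseteq U_{t-1}$ satisfies
\[
\deg_t(v)=\deg_{t-1}(v)-X_t(v)\ge\deg_0(v)-(t-1)R-R=\deg_0(v)-tR .
\]
This is $\mathcal A_t$. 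Averaging over histories in $\mathcal A_{t-1}\cap\ereg$ and summing over $t$ finishes the proof. (A $\ge$ versus $>$ discrepancy at $X_t=R$ is harmless.)

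\textbf{Main obstacle.} The delicate point is that \Cref{lem:roundt-degree-stability} requires \bstab\ and the ratio bound at round $t$. These are not given a priori; they follow only from the induction hypothesis $\mathcal A_{t-1}$ together with the budget condition $TR<\frac c4 d_0$. So the argument must be phrased with the nested events $\mathcal A_{t-1}$, which are determined by the history $\calH_{t-1}$, rather than with an unconditional statement. Once that is arranged, the constant $160$ is exactly what makes the exponent reach $10\log n$.
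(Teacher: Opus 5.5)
Your proof is correct and follows essentially the same route as the paper: induction on the nested events $\mathcal A_t$ on $\ereg$, using $TR<\frac c4 d_0$ to get $\delta_{t-1}\ge \frac c4 d_0$, then applying \Cref{lem:roundt-degree-stability} with $\lambda=\sqrt{160c\log^{2+\epsilon}n}$ to get an $n^{-10}$ bound per vertex, and union bounding over vertices and rounds. The only cosmetic differences are that you bound the joint probabilities $\Pr(\mathcal A_t^c\cap\mathcal A_{t-1}\cap\ereg)$ rather than conditioning on $\ereg$ throughout, and you use $\kappa=8$ uniformly, whereas the paper has a separate base case with $\kappa=4$.
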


\begin{proof}

We condition on $\ereg$ for the remainder of this proof. By \Cref{fact:deg-conc}, $\Pr(\ereg) \ge 1 - n^{-9}$.
    
\smallskip

For each round $s \in [T]$ we define the event $E_s$:
\[
E_s :=
\left\{
\deg_s(v) \ge \deg_0(v) - sR
\text{ for all } v \in U_s
\right\}.
\]
We will also need the event $A_t$,
\[
A_t := \bigcap_{s=0}^{t} E_s.
\]
We intend to prove by induction that for every $t \le T$,
\[
\Pr\!\left(
A_t \mid \ereg
\right)
\ge 1 - t n^{-9}.
\]
\textbf{Base case ($t=1$).}
Condition on the regularity event $\ereg$.
Under $\ereg$ we have
\[
\frac{\Delta_0}{\delta_0} \le 4.
\]
Applying \Cref{lem:roundt-degree-stability} with $t=1$ and
$\lambda = \sqrt{160c\log^{2+\epsilon}n}$ gives
\[
\Pr\!\left(
X_1(v) \ge R
\right) \le n^{-10}
\]
for every vertex $v \in V(G)$.

\smallskip

Equivalently,
\[
\Pr\!\left(
\deg_1(v) \le \deg_0(v) - R
\right) \le n^{-10}.
\]

Taking a union bound over all $v \in V(G)$ yields
\[
\Pr(E_1^c \mid \ereg) \le n^{-9}.
\]
Hence
\[
\Pr(A_1 \mid \ereg) \ge 1-n^{-9},
\]
which establishes the base case.

\smallskip

\textbf{Inductive step.}
Assume that for some $1 \le t < T$, the inductive hypothesis holds, i.e.,
\[
\Pr(A_{t-1}^c \mid \ereg) \le (t-1)n^{-9}.
\]
Since $A_t = A_{t-1} \cap E_t$, we have
\[
A_t^c = A_{t-1}^c \cup E_t^c,
\]
and therefore
\begin{align}
\Pr(A_t^c \mid \ereg)
&\le
\Pr(A_{t-1}^c \mid \ereg)
+
\Pr(E_t^c \cap A_{t-1} \mid \ereg). 
\label{eq:induction1}
\end{align}
Fix an arbitrary history $\calH_{t-1}$ for which $A_{t-1}$ holds.
Then for every $v \in U_{t-1}$,
\[
\deg_{t-1}(v) \ge \deg_0(v) - (t-1)R .
\]
Under the regularity event $\ereg$ we have
\[
\frac{c}{2}\log^{1+\epsilon}n
\le
\deg_0(v)
\le
2c\log^{1+\epsilon}n.
\]
Since $TR < \frac{c}{4}\log^{1+\epsilon}n$, this implies
\[
\delta_{t-1}
\ge
\frac{c}{2}\log^{1+\epsilon}n - (t-1)R
\ge
\frac{c}{4}\log^{1+\epsilon}n,
\qquad
\Delta_{t-1} \le 2c\log^{1+\epsilon}n .
\]
By \Cref{lem:exp-matched-neighbours-t} we have
\[
\E\left[ X_t(v) \mid \calH_{t-1} \right]
\le
\frac{\Delta_{t-1}}{\delta_{t-1}}
\le 8.
\]
Applying \Cref{lem:roundt-degree-stability} gives
\[
\Pr\!\left(
X_t(v) \ge 8 + \lambda \mid \calH_{t-1}
\right)
\le
\exp\!\left(
-\frac{\lambda^2}{8\deg_{t-1}(v)}
\right).
\]
Setting
\[
\lambda = \sqrt{160c\log^{2+\epsilon}n}
\]
and using $\deg_{t-1}(v) \le \Delta_{t-1}\le 2c\log^{1+\epsilon}n$ yields
\[
\Pr\!\left(
X_t(v) \ge R \mid \calH_{t-1}
\right) \le n^{-10}.
\]
Taking a union bound over all $v \in U_{t-1}$ gives
\begin{align}
\Pr(E_t^c \mid A_{t-1},\ereg) \le n^{-9}.
\label{eq:induction2}
\end{align}
Substituting \cref{eq:induction2} into \cref{eq:induction1} gives 
\[
\Pr(A_t^c \mid \ereg)
\le
(t-1)n^{-9} + n^{-9}
=
tn^{-9},
\]
thus completing the inductive step.

\smallskip

Finally, since $\Pr(\ereg) \ge 1 - n^{-9}$, we have
\[
\Pr(A_T)
\ge
\Pr(A_T \mid \ereg)\Pr(\ereg)
\ge
\left(1 - T n^{-9}\right)\left(1 - n^{-9}\right)
\ge
1 - (T+1)\cdot n^{-9},
\]
which proves the lemma.
\end{proof}

\newpage

\begin{corollary}
[Restatement of \Cref{cor:weak-iterative-bound}]
Fix $\epsilon\geq 0$. Let $G \sim \mathbb{G}(n,p)$ with $p = \frac{c\log^{1+\epsilon} n}{n}$.
Then the \bstab~invariant holds up to
\[
T = \Theta(\log^{\epsilon/2} n)
\]
rounds with probability at least $1-o(1)$.
\end{corollary}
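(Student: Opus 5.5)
The plan is to derive the corollary directly from \Cref{lem:additive-loss-t-round} by choosing $T$ as large as that lemma allows, and then checking that the resulting degree floor is a constant fraction of $d_0$.

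\textbf{Choice of $T$.} Recall $R = 8 + \sqrt{160c\log^{2+\epsilon}n}$, so $R = \Theta(\log^{1+\epsilon/2} n)$. Set
\[
T := \left\lfloor \frac{c\log^{1+\epsilon}n}{8R}\right\rfloor .
\]
Then $T = \Theta(\log^{1+\epsilon}n/\log^{1+\epsilon/2}n) = \Theta(\log^{\epsilon/2}n)$. This choice satisfies the hypothesis $TR < \frac{c}{4}\log^{1+\epsilon}n$; in fact it gives $TR \le \frac{c}{8}\log^{1+\epsilon}n$. When $\epsilon = 0$ the quotient is a constant, and I take $c$ large enough that $T \ge 1$.

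\textbf{Probability.} \Cref{lem:additive-loss-t-round} gives, with probability at least $1-(T+1)n^{-9} = 1-o(1)$, that for all $t \le T$ and all $v \in U_t$,
\[
\deg_t(v) \ge \deg_0(v) - tR .
\]

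\textbf{Degree floor.} I intersect this event with $\ereg$, which fails with probability at most $n^{-10}$ by \Cref{fact:deg-conc}. On the intersection, $\deg_0(v) \ge \frac{c}{2}d_0$ for every $v$. Hence for every $t \le T$ and every $v \in U_t$,
\[
\deg_t(v) \ge \tfrac{c}{2}d_0 - \tfrac{c}{8}d_0 = \tfrac{3c}{8}d_0 .
\]
Taking $\beta := 3c/8$ (or any smaller positive constant), the bound $\deg_{t-1}(v)\ge\beta d_0$ holds for all $t\le T$ and all $v\in U_{t-1}$. This is exactly the $\beta$-stability invariant of \Cref{def:stability-invariant} up to round $T$. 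The index shift is harmless: rounds $t-1 \le T-1$ are covered by the lemma, and round $0$ is covered by $\ereg$.

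\textbf{Main obstacle.} The only delicate point is one of bookkeeping. The proof of \Cref{lem:additive-loss-t-round} applies \Cref{lem:roundt-degree-stability}, whose hypotheses include $\beta$-stability and the ratio bound $\Delta_{t-1}/\delta_{t-1}\le\kappa$. Both must be supplied inductively from the event $A_{t-1}$ and not assumed outright.

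I would handle this by noting that, on $A_{t-1}\cap\ereg$, we have $\delta_{t-1}\ge \frac{c}{4}d_0$ and $\Delta_{t-1}\le 2cd_0$. These give the ratio bound with $\kappa=8$ and the $\beta$-stability hypothesis with $\beta = c/4$, which is exactly what the inductive proof uses. The corollary therefore follows with $\beta$ any constant at most $c/4$ (for instance $\beta=c/4$, which is below $3c/8$), holding with probability $1-o(1)$.
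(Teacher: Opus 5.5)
Your proposal is correct and follows the paper's proof essentially verbatim. Both choose $T=\Theta(\log^{\epsilon/2}n)$ so that $TR$ is a small constant multiple of $c\log^{1+\epsilon}n$, invoke \Cref{lem:additive-loss-t-round}, and combine it with the $\ereg$ lower bound $\deg_0(v)\ge\frac{c}{2}d_0$. The only difference is in the constants: the paper takes $T=\lfloor\frac{1}{16}\sqrt{c/160}\,\log^{\epsilon/2}n\rfloor$ and settles for the floor $\frac{c}{4}d_0$, whereas your $T=\lfloor c\log^{1+\epsilon}n/(8R)\rfloor$ gives $\frac{3c}{8}d_0$. Your closing restriction to $\beta\le c/4$ is unnecessary, since the lemma is proved independently and its conclusion already yields any $\beta\le 3c/8$.
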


\begin{proof}
Let $T = \left\lfloor \frac{1}{16}\sqrt{\frac{c}{160}}\,\log^{\epsilon/2} n \right\rfloor$. For sufficiently large $n$, this choice ensures that 
\[
    TR \le \tfrac c4 \log^{1+\epsilon} n.
\]
Hence by \Cref{lem:additive-loss-t-round}, with probability at least
$1-(T+1)n^{-9}=1-o(1)$ we have
\[
\deg_t(v) \ge \deg_0(v)-tR
\qquad
\text{for all } v \text{ and } t\le T .
\]
Under $\ereg$, $\deg_0(v)\ge \tfrac c2 \log^{1+\epsilon} n$, and therefore
\[
\deg_t(v)\ge \tfrac c4 \log^{1+\epsilon} n >0 .
\]
Thus the \bstab~invariant 
holds for all $t\le T$.
\end{proof}